\let\coloneqq\relax
\let\eqqcolon\relax
\newcolumntype{x}[1]{>{\centering\arraybackslash}p{#1}}
\newcommand{\ra}[1]{\renewcommand{\arraystretch}{#1}}
\newtheorem{thm}{Theorem}
\newtheorem*{thm*}{Theorem}
\newtheorem{prop}[thm]{Proposition}
\newtheorem*{prop*}{Proposition}
\newtheorem{lemma}[thm]{Lemma}
\newtheorem*{lemma*}{Lemma}
\newtheorem{cor}[thm]{Corollary}
\newtheorem*{cor*}{Corollary}
\newtheorem*{cj*}{Conjecture}
\newtheorem*{Def*}{Definition}
\newtheorem*{question*}{Question}
\newtheorem*{problem*}{Problem}
\def\thmhead@plain#1#2#3{%
  \thmname{#1}\thmnumber{\@ifnotempty{#1}{ }\@upn{#2}}%
  \thmnote{ {\the\thm@notefont#3}}}
\let\thmhead\thmhead@plain
\theoremstyle{definition}
\newtheorem{rem}[thm]{Remark}
\newtheorem*{note}{Note}
\newenvironment{manualthm}[1]{%
  \manualthminner \it
}{\endmanualthminner}
\newcommand{\bb}{\begin{equation}\begin{aligned}\hspace{0pt}}
\newcommand{\bbb}{\begin{equation*}\begin{aligned}}
\newcommand{\ee}{\end{aligned}\end{equation}}
\newcommand{\eee}{\end{aligned}\end{equation*}}
\newcommand\floor[1]{\left\lfloor#1\right\rfloor}
\newcommand\ceil[1]{\left\lceil#1\right\rceil}
\newcommand{\eqt}[1]{\stackrel{\mathclap{\scriptsize \mbox{#1}}}{=}}
\newcommand{\leqt}[1]{\stackrel{\mathclap{\scriptsize \mbox{#1}}}{\leq}}
\newcommand{\geqt}[1]{\stackrel{\mathclap{\scriptsize \mbox{#1}}}{\geq}}
\newcommand{\ketbra}[1]{\ket{#1}\!\!\bra{#1}}
\newcommand{\sumno}{\sum\nolimits}
\newcommand{\e}{\varepsilon}
\renewcommand{\epsilon}{\varepsilon}
\newcommand{\dd}{\mathrm{d}}
\newcommand{\id}{\mathds{1}}
\newcommand{\N}{\mathds{N}}
\newcommand{\C}{\mathds{C}}
\newcommand{\ve}{\varepsilon}
\newcommand{\cptp}{\mathrm{CPTP}}
\newcommand{\locc}{\mathrm{LOCC}}
\newcommand{\sep}{\mathrm{SEP}}
\newcommand{\ppt}{\mathrm{PPT}}
\newcommand{\SEP}{\pazocal{S}}
\newcommand{\PPT}{\pazocal{P\!P\!T}}
\newcommand{\sepp}{\mathrm{NE}}
\DeclareMathOperator{\Tr}{Tr}
\DeclareMathOperator{\co}{conv}
\DeclareMathOperator{\cone}{cone}
\DeclareMathAlphabet{\pazocal}{OMS}{zplm}{m}{n}
\DeclareMathOperator{\supp}{supp}
\newcommand{\HH}{\pazocal{H}}
\newcommand{\NN}{\pazocal{N}}
\newcommand{\MM}{\pazocal{M}}
\newcommand{\XX}{\pazocal{X}}
\newcommand{\TT}{\pazocal{T}}
\newcommand{\FF}{\pazocal{F}}
\newcommand{\lsmatrix}{\left(\begin{smallmatrix}}
\newcommand{\rsmatrix}{\end{smallmatrix}\right)}
\newcommand{\deff}[1]{\textbf{\emph{#1}}}
\newcommand\xxrightarrow[2][]{\mathrel{%
  \setbox2=\hbox{\stackon{\scriptstyle#1}{\scriptstyle#2}}%
  \stackunder[5pt]{%
    \xrightarrow{\makebox[\dimexpr\wd2\relax]{$\scriptstyle#2$}}%
  }{%
   \scriptstyle#1\,%
  }%
}}
\newcommand{\tends}[2]{\xxrightarrow[\! #2 \!]{\mathrm{#1}}}
\newcommand{\tendsn}[1]{\xxrightarrow[\! n\rightarrow \infty\!]{#1}}
\newcommand*\rel@kern[1]{\kern#1\dimexpr\macc@kerna}
\newcommand*\widebar[1]{%
  \begingroup
  \def\mathaccent##1##2{%
    \rel@kern{0.8}%
    \overline{\rel@kern{-0.8}\macc@nucleus\rel@kern{0.2}}%
    \rel@kern{-0.2}%
  }%
  \macc@depth\@ne
  \let\math@bgroup\@empty \let\math@egroup\macc@set@skewchar
  \mathsurround\z@ \frozen@everymath{\mathgroup\macc@group\relax}%
  \macc@set@skewchar\relax
  \let\mathaccentV\macc@nested@a
  \macc@nested@a\relax111{#1}%
  \endgroup
}
\newcommand{\fakepart}[1]{
 \par\refstepcounter{part}
  \sectionmark{#1}
}
\tikzset{meter/.append style={draw, inner sep=10, rectangle, font=\vphantom{A}, minimum width=30, line width=.8, path picture={\draw[black] ([shift={(.1,.3)}]path picture bounding box.south west) to[bend left=50] ([shift={(-.1,.3)}]path picture bounding box.south east);\draw[black,-latex] ([shift={(0,.1)}]path picture bounding box.south) -- ([shift={(.3,-.1)}]path picture bounding box.north);}}}
\tikzset{roundnode/.append style={circle, draw=black, fill=gray!20, thick, minimum size=10mm}}
\tikzset{squarenode/.style={rectangle, draw=black, fill=none, thick, minimum size=10mm}}
\definecolor{Blues5seq1}{RGB}{239,243,255}
\definecolor{Blues5seq2}{RGB}{189,215,231}
\definecolor{Blues5seq3}{RGB}{107,174,214}
\definecolor{Blues5seq4}{RGB}{49,130,189}
\definecolor{Blues5seq5}{RGB}{8,81,156}
\definecolor{Greens5seq1}{RGB}{237,248,233}
\definecolor{Greens5seq2}{RGB}{186,228,179}
\definecolor{Greens5seq3}{RGB}{116,196,118}
\definecolor{Greens5seq4}{RGB}{49,163,84}
\definecolor{Greens5seq5}{RGB}{0,109,44}
\definecolor{Reds5seq1}{RGB}{254,229,217}
\definecolor{Reds5seq2}{RGB}{252,174,145}
\definecolor{Reds5seq3}{RGB}{251,106,74}
\definecolor{Reds5seq4}{RGB}{222,45,38}
\definecolor{Reds5seq5}{RGB}{165,15,21}
\newcommand{\dne}{\mathrm{DNE}}
\newcommand{\dpptp}{\mathrm{DPPTP}}
\newcommand{\ane}{\mathrm{ANE}}
\newcommand{\adne}{\mathrm{ADNE}}
\newcommand{\ff}{\mathds{M}}
\newcommand{\mm}{\mathds{M}}
\renewcommand{\sep}{\mathds{SEP}}
\renewcommand{\ppt}{\mathds{PPT}}
\newcommand{\mlocc}{\mathds{LOCC}}
\newcommand{\all}{\mathds{ALL}}
\newcommand{\pptop}{\mathrm{PPT}}
\newcommand{\kk}{\widetilde{E}_\kappa}
\newcommand{\stein}{\mathrm{Stein}}
\begin{document}

\title{Distillable entanglement under dually non-entangling operations}

\author{Ludovico Lami}
\email{ludovico.lami@gmail.com}
\affiliation{QuSoft, Science Park 123, 1098 XG Amsterdam, the Netherlands}
\affiliation{Korteweg--de Vries Institute for Mathematics, University of Amsterdam, Science Park 105-107, 1098 XG Amsterdam, the Netherlands}
\affiliation{Institute for Theoretical Physics, University of Amsterdam, Science Park 904, 1098 XH Amsterdam, the Netherlands}

\author{Bartosz Regula}
\email{bartosz.regula@gmail.com}
\affiliation{Mathematical Quantum Information RIKEN Hakubi Research Team, RIKEN Cluster for Pioneering Research (CPR) and RIKEN Center for Quantum Computing (RQC), Wako, Saitama 351-0198, Japan}

\begin{abstract}
Computing the exact rate at which entanglement can be distilled from noisy quantum states is one of the longest-standing questions in quantum information. We give an exact solution for entanglement distillation under the set of dually non-entangling (DNE) operations --- a relaxation of the typically considered local operations and classical communication, comprising all channels which preserve the sets of separable states and measurements. We show that the DNE distillable entanglement coincides with a modified version of the regularised relative entropy of entanglement in which the arguments are measured with a separable measurement. Ours is only the second known regularised formula for the distillable entanglement under any class of free operations in entanglement theory, after that given by Devetak and Winter for (one-way) local operations and classical communication. An immediate consequence of our finding is that, under DNE, entanglement can be distilled from any entangled state. As our second main result, we construct a general upper bound on the DNE distillable entanglement, using which we prove that the separably measured relative entropy of entanglement can be strictly smaller than the regularisation of the standard relative entropy of entanglement, solving an open problem posed by Li and Winter. 
Finally, we study also the reverse task of entanglement dilution and show that the restriction to DNE operations does not change the entanglement cost when compared with the larger class of non-entangling operations. This implies a strong form of irreversiblility of entanglement theory under DNE operations: even when asymptotically vanishing amounts of entanglement may be generated, entangled states cannot be converted reversibly.
\end{abstract}

\maketitle

\let\oldaddcontentsline\addcontentsline
\renewcommand{\addcontentsline}[3]{}
\fakepart{Main text}

\section{Introduction}

Entanglement distillation is poised to be one of the fundamental primitives of the emerging field of quantum technologies~\cite{Bennett-distillation, Bennett-distillation-mixed, Bennett-error-correction}. Its goal is to transform many copies of a noisy entangled state $\rho_{AB}$ emitted by some source into (a smaller number of) \emph{ebits}, i.e.\ perfect units of entanglement, that can then be used as fuel in a variety of quantum protocols, e.g.\ teleportation~\cite{teleportation}, dense coding~\cite{dense-coding}, or quantum key distribution~\cite{Ekert91, RennerPhD}, as well as to demonstrate violations of Bell inequalities~\cite{Brunner-review}. To perform this transformation, one is allowed to use quantum operations taken from a class of \emph{free operations} $\FF$, assumed to comprise all quantum channels that are easy to implement on our quantum devices with the resources at our disposal. 
The fundamental figure of merit that characterises the ultimate efficiency of entanglement distillation on $\rho_{AB}$ is the \emph{distillable entanglement} under $\FF$, denoted by $E_{d,\FF}(\rho_{AB})$. This depends critically both on the input state $\rho_{AB}$ 
and on the set of free operations $\FF$ with which distillation is carried out.

Historically, the first class of free operations to be widely studied was that of \emph{local operations and classical communication} (LOCC). LOCCs are the most general operations that can be performed by spatially separated parties that are bound to the rules of quantum mechanics and can communicate only classically~\cite{LOCC}. In spite of their operational importance, LOCCs are exceedingly difficult to characterise mathematically, and as a consequence we currently do not possess any formula 
to compute the LOCC distillable entanglement on arbitrary states, or even decide when it is zero and when it is not~\cite{Horodecki-open-problems, list-open-problems}. This is to be contrasted with the fact that we \emph{do} possess a formula for the LOCC entanglement \emph{cost}, which is given by the regularised entanglement of formation~\cite{Hayden-EC}.

To bypass this problem, in recent years there has been an increasing interest in characterising entanglement manipulation under classes of free operations that 
approximate LOCCs. There are at least two valid reasons to pursue this goal, besides the obvious one that simpler classes of free operations offer an ideal test-bed to improve our understanding of entanglement manipulation. 
First, in this way one can obtain lower and upper bounds on the LOCC distillable entanglement, because if $\FF_1 \subseteq \locc \subseteq \FF_2$ then $E_{d,\,\FF_1}(\rho_{AB}) \leq E_{d,\,\locc}(\rho_{AB}) \leq E_{d,\,\FF_2}(\rho_{AB})$. Several of the best known bounds
on the distillable entanglement can be seen as stemming from this approach. For example, the restriction to one-way LOCCs yields the hashing lower bound~\cite[Theorem~13]{devetak2005}, while going to the larger set of PPT operations results in the upper bounds via the 
negativity~\cite{negativity, plenioprl} and 
the Rains bound~\cite{Rains2001}. Nevertheless, even under simpler classes of operations, the ultimate capabilities of entanglement manipulation --- and in particular the exact value of distillable entanglement --- are not known, with the 
exception of one-way LOCC~\cite{devetak2005}.

But there is also another, more fundamental reason to go beyond the LOCC paradigm. Since the early days of entanglement theory, a whole line of research~\cite{Popescu1997, vedral_2002, Horodecki2002} has focused on the conceptually striking parallels between the laws of entanglement manipulation, which govern the interconversion of pure and mixed entanglement, and those of thermodynamics, which govern the interconversion of work and heat. In building a `thermodynamic theory of entanglement', a top-down, axiomatic approach is somewhat preferable to the bottom-up approach that leads to the definition of LOCC, and arguably closer in spirit to the original founding principles of thermodynamics~\cite{CARNOT, Clausius, Kelvin}. This approach has been already quite fruitful, shedding light on the fundamental question of (ir)reversibility of entanglement manipulation~\cite{BrandaoPlenio1, BrandaoPlenio2,irreversibility,gap}, but many of 
the issues it raises remain unresolved.

In this work, we focus on a particular superset of LOCC, namely the \emph{dually non-entangling} (DNE) operations. DNE operations, originally introduced by Chitambar et al.~\cite{Chitambar2020}, can be easily understood from a rather natural axiomatic perspective as those that preserve separability (i.e.\ absence of entanglement) of states when seen in the Schr\"odinger picture \emph{and} of measurements when seen in the Heisenberg picture. An important point is that DNE protocols are more restricted than some other commonly employed operations such as non-entangling (NE) maps~\cite{BrandaoPlenio1,BrandaoPlenio2}, which means that they provide a closer approximation to LOCCs and can potentially yield improved bounds on their operational processing power.
Here we completely characterise asymptotic entanglement manipulation under DNE operations,
connecting it with the important operational task of entanglement testing, and revealing new features of phenomena such as entanglement irreversibility and bound entanglement.

\section{Results}

\textbf{\em Our contribution.}--- Our first main result is a clean formula for the DNE distillable entanglement (Theorem~\ref{distillable_dne_thm}), which can be expressed as a regularised separably measured divergence $D^{\sep,\infty}(\cdot\|\SEP)$ ---
a modified version of the relative entropy of entanglement $D^{\infty}(\cdot\|\SEP)$~\cite{Vedral1997, Vedral1998} that was originally introduced by Piani in a seminal work~\cite{Piani2009} and has recently found some applications~\cite{faithful, rel-ent-sq, Berta2023,gap}. We dub it the Piani relative entropy of entanglement.
To compute it, one calculates the distance of a given state from the set of separable states as measured by a 
form of measured relative entropy 
that involves an optimisation over separable measurements only. In spite of the seemingly more complicated definition, the resulting quantity enjoys a plethora of desirable properties, some of which, 
such as strong super-additivity~\cite[Theorem~2]{Piani2009}, are often very useful~\cite{Piani2009,brandao_adversarial,catboundent} but do not hold for the standard relative entropy of entanglement~\cite{Werner-symmetry}. As a by-product of our result, many of these properties are inherited by the DNE distillable entanglement.
Perhaps the most important one is the \emph{faithfulness} --- since the Piani relative entropy is non-zero for all entangled states, this immediately implies that the phenomenon of bound entanglement does not exist under DNE operations.

Our result 
represents one of the few instances where the distillable entanglement under some 
relevant class of free operations can be calculated, albeit in terms of some regularised expression. \emph{Regularisation} here means that one needs to understand the limiting behaviour of the given quantity when acting on many copies of quantum states. The necessity for such a procedure is a consequence of the fact that operational quantities encountered in quantum information are typically not additive~\cite{Werner-symmetry, Shor2004, Hayden-p>1, Cubitt-p-->0, Hastings2008, superactivation}, precluding the validity of non-regularised (`single-letter') expressions. 
We should remark here that while regularised formulas do not allow for a straightforward calculation because of the 
ubiquity of non-additivity phenomena 
in quantum information, they are nevertheless amenable to theoretical investigation, which is the reason why many cornerstone results in quantum information theory, such as the Lloyd--Shor--Devetak theorem~\cite{Lloyd-S-D, L-Shor-D, L-S-Devetak}, or the Holevo--Schumacher--Westmoreland theorem~\cite{Holevo-S-W, H-Schumacher-Westmoreland}, are proofs of regularised formulas. 
Besides our result, the only other known formula for distillable entanglement of \emph{any} kind is the result of Devetak and Winter~\cite{devetak2005}, where a regularised expression is found for the rate of distillation under local operations assisted by one- or two-way classical communication; however, the expressions there are not expressed in the form of simple entropic quantities and the complex optimisation problems involved mean that their computability and applicability is questionable.
A regularised expression has also been conjectured for the distillable entanglement under the class of non-entangling operations~\cite{BrandaoPlenio2,gap} in terms of $D^\infty(\cdot\|\SEP)$ --- this conjecture being equivalent to the notorious generalised quantum Stein's lemma~\cite{Brandao2010,gap-comment}.

To complement our study of entanglement distillation, we also provide a comprehensive characterisation of the reverse task of entanglement dilution under DNE operations. The relevant quantity here is the entanglement cost $E_{c,\FF}(\rho_{AB})$, that is, the rate at which maximally entangled states are needed in order to produce a given noisy state.
We show that the entanglement cost under DNE operations is the same as the one under the larger class of non-entangling operations. On the one hand, this means that it suffices to employ the smaller and more restrictive 
set of DNE operations to achieve optimal dilution rates. On the other hand, the result allows us to use a number of previously established findings in the study of NE operations~\cite{BrandaoPlenio2,irreversibility} to directly constrain or even exactly compute the DNE entanglement cost. We will see this to have crucial consequences.

One of the most important application of the axiomatic approaches to entanglement theory has been the study of entanglement reversibility~\cite{BrandaoPlenio1,BrandaoPlenio2,irreversibility,gap}, namely the question of whether the rate of entanglement distillation $E_{d,\FF}(\rho_{AB})$ equals the entanglement cost $E_{c,\FF}(\rho_{AB})$. The fundamental consequence of such reversibility would be the establishment of a `second law' of entanglement: asymptotic entanglement transformations would be completely governed by a single function, playing a role analogous to entropy in thermodynamics.
So far, reversibility has not been shown under any class of quantum channels~\cite{gap}, but there is a strong contender that has been conjectured to yield a reversible entanglement theory: namely, asymptotically non-entangling (ANE) maps~\cite{BrandaoPlenio1,BrandaoPlenio2}. Although entanglement is known to be irreversible under non-entangling channels~\cite{irreversibility}, the entanglement manipulation rates can be increased by allowing small amounts of entanglement (according to some appropriate entanglement measure) to be generated in the process, as long as they vanish asymptotically.
Using our characterisation of entanglement dilution under DNE maps, we can show that the entanglement cost under DNE maps enhanced by such asymptotic entanglement generation (which we may refer to as asymptotically dually non-entangling maps, ADNE) equals the corresponding entanglement cost under ANE maps (Theorem~\ref{cost_dne_thm}). But the latter is already known~\cite{BrandaoPlenio2}: it is given exactly by the regularised relative entropy of entanglement $D^{\infty}(\cdot\|\SEP)$.


We thus obtain a complete understanding of the 
theory of entanglement manipulation under asymptotically dually non-entangling channels: the distillable entanglement is given by the regularised Piani relative entropy $D^{\sep,\infty}(\cdot\|\SEP)$, while the entanglement cost by the standard relative entropy of entanglement $D^{\infty}(\cdot\|\SEP)$.
The question of reversibility of entanglement under ADNE operations is thus reduced to the question of whether there exists a gap between the quantities $D^{\infty}(\cdot\|\SEP)$ and $D^{\sep,\infty}(\cdot\|\SEP)$.
A very closely related question was previously asked by Li and Winter~\cite[Figure~1]{rel-ent-sq}. 
With our third main result (Theorem~\ref{antisymmetric_gap_thm}) we solve 
both of these problems, proving that the antisymmetric state $\alpha_d$ satisfies 
$D^{\sep,\infty}(\alpha_d\|\SEP) < D^\infty(\alpha_d\|\SEP)$ for large enough $d$, namely $d \geq 13$.
Our results thus directly prove that entanglement manipulation is irreversible under DNE operations, even if asymptotic entanglement generation is allowed.
Such a strong form of irreversibility contrasts with the conjectured reversible framework under ANE operations~\cite{BrandaoPlenio2,gap} and shows that the restriction to DNE operations carries strong physical consequences.
Our result also solves the open problem mentioned in~\cite{rel-ent-sq} and further reinforces the fame of the antisymmetric state as the `universal counterexample' in quantum entanglement theory~\cite{Aaronson2008}.

\begin{table*}[h!t] \centering
\ra{1.7} \setlength{\tabcolsep}{1pt}
\begin{tabular}{cccc} \toprule
Operations & $\begin{array}{c} \text{Distillable} \\[-8pt] \text{entanglement} \end{array}$ & $\begin{array}{c} \text{Entanglement} \\[-8pt] \text{cost} \end{array}$ & Reversibility? \\
\midrule
NE & $D^\infty(\cdot \| \SEP)\ \ \raisebox{-.4pt}{\text{\large \color{Reds5seq5} \textbf{?}}}$ & {\large \color{Purple} $\star$} & No \cite{irreversibility} \\
ANE & 
$D^\infty(\cdot \| \SEP)\ \ \raisebox{-.4pt}{\text{\large \color{Reds5seq5} \textbf{?}}}$ & $D^\infty(\cdot \| \SEP)$ & Yes $\raisebox{-.4pt}{\text{\large \color{Reds5seq5} \textbf{?}}}$\ \ \cite{BrandaoPlenio2, gap} \\
DNE & \cellcolor{Greens5seq2!45!white} \hspace{1ex} $D^{\sep,\infty}(\cdot\|\SEP)$ \hspace{1ex} & \cellcolor{Greens5seq2!45!white} {\large \color{Purple} $\star$} & No \cite{irreversibility}\\
ADNE & \cellcolor{Greens5seq2!45!white} $D^{\sep,\infty}(\cdot\|\SEP)$ & \cellcolor{Greens5seq2!45!white} $D^\infty(\cdot\|\SEP)$ & \cellcolor{Greens5seq2!45!white} No ({\small Thms.~\ref{distillable_dne_thm}--\ref{antisymmetric_gap_thm}}) \\
\bottomrule
\end{tabular}
\caption[LoF entry]{\justifying
The distillable entanglement $E_{d,\FF}$ and the entanglement cost $E_{c,\FF}$ under four different classes of free operations $\FF$, namely, non-entangling (NE), asymptotically non-entangling (ANE), dually non-entangling (DNE), and asymptotically dually non-entangling (ADNE) operations. The white cells summarise results in the prior literature, while the green ones contain our new findings. The distillable entanglement under NE and ANE is the same, and it is conjectured to coincide with the regularised relative entropy of entanglement $D^\infty(\cdot\|\SEP)$
~\cite{BrandaoPlenio2,Brandao2010,gap,gap-comment}. 

\vspace{3pt}
\hspace{9pt} In this work, we completely characterise entanglement manipulation under DNE and ADNE channels, which are less permissive and thus more practically relevant than their NE or ANE counterparts. For both classes, the distillable entanglement is given by the regularised Piani relative entropy $D^{\sep,\infty}(\cdot\|\SEP)$ (Theorem~\ref{distillable_dne_thm}). The ADNE entanglement cost is the same as that under ANE, and it thus equals $D^\infty(\cdot\|\SEP)$ (Theorem~\ref{cost_dne_thm}). The cost under DNE, instead, coincides with that under the NE class: while it is natural to conjecture that this is also given by \emph{some} regularised relative entropy expression, we do not yet have any explicit expression for it, and we thus marked it with the symbol $\color{Purple} \star$. 
Our irreversibility result in Theorem~\ref{antisymmetric_gap_thm} then implies that there is a gap between the distillable entanglement and entanglement cost under (A)DNE operations; furthermore, through a careful analysis of asymptotic hypothesis testing, we can show an operational gap in the power of NE and DNE operations in entanglement distillation.
}
\label{rates_table}
\end{table*}

\medskip
\textbf{\em Notation.}--- A quantum state $\rho_{AB}$ on a finite-dimensional bipartite quantum system $AB$ is called \deff{separable} if it can be written as $\rho_{AB} = \sum_x p_x\, \alpha_x^A\otimes \beta_x^B$, and it is called \deff{entangled} otherwise. We denote the set of separable states on $AB$ by $\SEP(A\!:\!B)$, or simply $\SEP$ if there is no ambiguity concerning the underlying system. We will also be interested in the set of \deff{separable measurements}, denoted by $\sep(A\!:\!B)$ or simply $\sep$. It comprises all positive operator-valued measures (POVM) $(E_x)_x$ such that $E_x\in \cone(\SEP)$ for all $x$, where $\cone(\SEP)$ denotes the cone of un-normalised separable operators. Notably, all LOCC measurements are separable, but the converse is not true~\cite{LOCC}. 

We can quantify the entanglement of a state $\rho_{AB}$ by calculating its distance from the set of separable states as measured by some quantum divergence. The two main choices of quantum divergence we will be concerned with here are the \deff{quantum (Umegaki) relative entropy}~\cite{Umegaki1962}, given by $D(\rho\|\sigma) \coloneqq \Tr\left[ \rho \left(\log_2\rho - \log_2\sigma\right) \right]$, and the \deff{Piani relative entropy}, which is defined for two arbitrary bipartite states $\rho=\rho_{AB}$ and $\sigma=\sigma_{AB}$ by~\cite{Piani2009}
\bb
D^{\sep} (\rho\|\sigma) \coloneqq \sup_{(E_x)_x\,\in\, \sep} \sum_x \Tr[\rho E_x] \log_2 \frac{\Tr [E_x \rho]}{\Tr [E_x \sigma]}\, .
\ee
Yet another possible choice is the \deff{max-relative entropy}, defined by $D_{\max}(\rho\|\sigma)\coloneqq \min\{ \lambda:\ \rho\leq 2^\lambda \sigma\}$. The \deff{relative entropy of entanglement}, the \deff{Piani relative entropy of entanglement}, and the \deff{max-relative entropy of entanglement} are then defined by~\cite{Vedral1997, Vedral1998, Piani2009}
\bb
\mathds{D}(\rho\|\SEP) \coloneqq \min_{\sigma\in \SEP} \mathds{D}(\rho\|\sigma)\, ,\qquad \mathds{D}= D,\, D^{\sep}\!,\, D_{\max}\, .
\ee
In quantum information one often looks at the asymptotic limit of many copies, which captures the ultimate limitations to which quantum phenomena are subjected, and is reminiscent of the thermodynamic limit in statistical physics. Doing so yields the \deff{regularised} entanglement measures
\bb
\mathds{D}^\infty(\rho\|\SEP) \coloneqq \lim_{n\to\infty} \frac1n\, \mathds{D}\big(\rho^{\otimes n} \,\big\|\, \SEP_n\big)\, ,\quad \mathds{D}= D,\, D^{\sep}\!,\, D_{\max}\, .
\ee
where $\SEP_n = \SEP(A^n\!:\!B^n)$. Because of the hierarchical relations existing between the corresponding divergences, it holds that
\bb
D^{\sep}(\rho\|\SEP) \leq D(\rho\|\SEP) \leq D_{\max}(\rho\|\SEP)\, ,
\ee
and analogously for the regularised quantities. These three entanglement measures have however remarkably different features. While $D(\rho\|\SEP)$ and  $D_{\max}(\rho\|\SEP)$ are sub-additive on several independent systems, $D^{\sep}(\rho\|\SEP)$ is not only super-additive but actually \deff{strongly super-additive}, meaning that~\cite[Theorem~2]{Piani2009}
\bb
&D^\sep\big(\rho_{AA'BB'}\, \big\|\, \SEP(AA'\!:\!BB')\big) \\
&\qquad \geq D^\sep\big(\rho_{AB}\, \big\|\, \SEP(A\!:\!B)\big) + D^\sep\big(\rho_{A'B'}\, \big\|\, \SEP(A'\!:\!B')\big)
\label{main_strong_superadditivity}
\ee
for all (possibly correlated) four-partite states $\rho_{AA':BB'}$.

For a given class of \deff{free operations} $\FF$ and some bipartite state $\rho_{AB}$, the corresponding \deff{distillable entanglement} $E_{d,\,\FF}(\rho_{AB})$ can be defined as the maximum number of ebits that can be extracted per copy of $\rho_{AB}$, in the asymptotic limit where many copies are available, using operations from $\FF$ while making a vanishingly small error. 
Although $\FF=\locc$ is the traditional choice~\cite{Bennett-distillation-mixed}, for the aforementioned reasons we are interested here in the larger class of dually non-entangling operations, previously introduced in~\cite{Chitambar2020}. This can be justified axiomatically as follows. Any free operation $\Lambda = \Lambda_{AB\to A'B'}$ that defines a valid entanglement manipulation protocol should transform separable states into separable states, i.e.\ it should not inject additional entanglement into the system. Operations that satisfy this constraint are called \deff{non-entangling} (NE)~\cite{BrandaoPlenio1, irreversibility}. While this appropriately captures what should happen in the Schr\"odinger picture where $\Lambda$ acts on states, we can also look at the Heisenberg picture, where $\Lambda^\dag$, the adjoint defined by $\Tr X \Lambda(Y) = \Tr \Lambda^\dag(X) Y$, acts on measurement operators. If we also assume that pre-processing by $\Lambda$ should not turn a separable measurement into a non-separable one, we obtain the set of \deff{dually non-entangling} (DNE) operations, defined by the conditions~\cite{Chitambar2020}
\bb
\Lambda(\sigma_{AB}) &\in \SEP(A'\!\!:\!B') &\forall\ \sigma_{AB} &\in \SEP(A\!:\!B)\, , \\
\Lambda^\dag (E_{A'B'}) &\in \cone\!\big(\SEP(A\!:\!B)\big)\ \ &\forall\ E_{A'B'} &\in \cone\!\big(\SEP(A'\!\!:\!B')\big)\, .
\label{DNE}
\ee
Generalising the concept of asymptotically non-entangling operations (ANE) found in the framework of Brand\~{a}o and Plenio~\cite{BrandaoPlenio1,BrandaoPlenio2}, 
we can also consider \deff{asymptotically DNE} (ADNE) operations, in which we allow 
the creation of 
some entanglement, as long as this amount is sufficiently small: it must vanish asymptotically
according to some fixed entanglement measure. Following~\cite{BrandaoPlenio1,BrandaoPlenio2}, 
we quantify the generated entanglement by the max-relative entropy of entanglement, replacing the first line of~\eqref{DNE} by the condition that $\max_{\sigma\in \SEP} D_{\max}\left(\Lambda_n(\sigma) \|\SEP\right) \leq \eta_n$ for the operation $\Lambda_n$ used for entanglement distillation at the $n$-copy level, and then requiring that 
$\eta_n\tendsn{} 0$. We will denote the corresponding distillable entanglement by $E_{d,\, \adne}(\rho_{AB})$.

\medskip
\textbf{\em Distillable entanglement.}--- We will now state our first main result, which establishes a regularised formula for the distillable entanglement under (A)DNE operations.

\begin{thm} \label{distillable_dne_thm}
For all bipartite states $\rho=\rho_{AB}$, the distillable entanglement under (asymptotically) dually non-entangling operations coincides with the regularised Piani relative entropy of entanglement, in formula
\bb
E_{d,\,\dne}(\rho) = E_{d,\,\adne}(\rho) = D^{\sep,\infty}(\rho\|\SEP)\, .
\label{distillable_dne_and_adne}
\ee
As a consequence, there is no bound entanglement under DNE: for any entangled state $\rho$, it holds that
\bb
E_{d,\,\dne}(\rho) \geq D^{\sep}(\rho\|\SEP) > 0\, .
\label{quantitative_faithfulness_distillable_dne}
\ee
\end{thm}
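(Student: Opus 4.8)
The plan is to prove the two inequalities $E_{d,\,\dne}(\rho) \leq D^{\sep,\infty}(\rho\|\SEP)$ (converse) and $E_{d,\,\adne}(\rho) \geq D^{\sep,\infty}(\rho\|\SEP)$ (achievability), and then close the chain $E_{d,\,\dne}\leq E_{d,\,\adne}\leq D^{\sep,\infty}\leq E_{d,\,\dne}$ using the trivial inclusion $\dne\subseteq\adne$ (every DNE map is trivially ADNE with $\eta_n=0$). The first displayed equation follows, and the faithfulness statement~\eqref{quantitative_faithfulness_distillable_dne} is then immediate: by strong super-additivity~\eqref{main_strong_superadditivity}, $D^\sep(\rho^{\otimes n}\|\SEP_n)\geq n\,D^\sep(\rho\|\SEP)$, so $D^{\sep,\infty}(\rho\|\SEP)\geq D^\sep(\rho\|\SEP)$; and Piani's measure is non-zero on every entangled state because a separable measurement can always statistically distinguish an entangled $\rho$ from any fixed separable $\sigma$ (indeed from the whole set $\SEP$, using a minimax/compactness argument on the finite-dimensional $n=1$ level), giving $D^\sep(\rho\|\SEP)>0$.

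For the converse bound, I would argue that $D^{\sep,\infty}(\cdot\|\SEP)$ is monotone under DNE operations and suitably continuous, so it is a valid entanglement monotone bounding any distillation rate. The key observation is that the \emph{single-copy} Piani relative entropy of entanglement is monotone under DNE maps: if $\Lambda$ satisfies~\eqref{DNE}, then for any separable POVM $(E_x)_x$ on the output, $(\Lambda^\dag(E_x))_x$ is a separable POVM on the input by the second line of~\eqref{DNE}, and $\sum_x\Tr[\Lambda(\rho)E_x]\log_2\frac{\Tr[\Lambda(\rho)E_x]}{\Tr[\Lambda(\sigma)E_x]}=\sum_x\Tr[\rho\,\Lambda^\dag(E_x)]\log_2\frac{\Tr[\rho\,\Lambda^\dag(E_x)]}{\Tr[\sigma\,\Lambda^\dag(E_x)]}$, so optimising and then minimising over $\sigma\in\SEP$ (using that $\Lambda(\SEP)\subseteq\SEP$ from the first line) yields $D^\sep(\Lambda(\rho)\|\SEP)\leq D^\sep(\rho\|\SEP)$. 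Regularising preserves this. Then a standard distillation-converse template applies: if $\Lambda_n(\rho^{\otimes n})\approx\Phi_{2^{k_n}}$ with error $\e_n\to 0$ and $k_n/n\to R$, monotonicity gives $D^{\sep,\infty}(\Phi_{2^{k_n}}\|\SEP)\leq D^\sep(\rho^{\otimes n}\|\SEP)$, and one needs a lower bound $D^{\sep}(\Phi_m\|\SEP)\gtrsim \log_2 m$ on the maximally entangled state together with an asymptotic-continuity (or a direct ``$\e$-smoothed'' telescoping) argument to absorb the error $\e_n$; for the ADNE case one must additionally track the $\eta_n$ slack created by near-free operations, which contributes only a vanishing correction because $D^{\sep}\leq D\leq D_{\max}$ and $\eta_n\to 0$.

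For achievability, the natural route is to show that $D^{\sep,\infty}(\rho\|\SEP)$ ebits can be distilled using ADNE operations. I expect this to mirror the Brand\~ao--Plenio construction for asymptotically non-entangling operations: the measured relative entropy $D^\sep(\rho^{\otimes n}\|\SEP_n)$ is, by Piani's definition, a bona fide classical relative entropy between the outcome distributions of an optimal separable measurement applied to $\rho^{\otimes n}$ and to the optimal separable $\sigma_n$; one then invokes the operational meaning of relative entropy (quantum Stein's lemma / Hayashi-type exponents) to convert this into a distillation protocol whose intermediate steps only disturb separability by a vanishing $D_{\max}$ amount, hence remain ADNE. Concretely: perform the optimal separable POVM, land in a classical register, use the classical-distribution structure to reversibly extract $\approx D^\sep(\rho^{\otimes n}\|\SEP_n)/n$ bits of ``distinguishing advantage'', and repackage these as ebits via a near-free recoherence step whose entangling power is controlled by $D_{\max}$ and tends to zero.

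The main obstacle I anticipate is the achievability direction, specifically controlling that the recoherence/repackaging step is genuinely (asymptotically) DNE rather than merely asymptotically non-entangling in the Schr\"odinger picture --- one must verify the Heisenberg-picture condition in the second line of~\eqref{DNE}, i.e.\ that the adjoint maps separable effects to (cones of) separable effects up to a controlled error, and this is exactly where the \emph{separable}-measurement structure of $D^\sep$ (as opposed to the ordinary relative entropy) is essential and must be exploited. A secondary technical point is making the converse's error-absorption rigorous without a clean asymptotic-continuity statement for $D^{\sep,\infty}$; I would sidestep this by working with $\e$-smoothed single-copy quantities and using super-additivity to pass to the regularisation, rather than invoking continuity of the regularised measure directly.
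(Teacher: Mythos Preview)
Your converse sketch is sound in spirit and close to the paper's argument: the paper also uses that DNE maps pull back separable measurements (your monotonicity observation) and then evaluates on an explicit separable binary POVM at the output, namely $\big(\Phi_{2^m}+\tfrac{1}{2^m+1}(\id-\Phi_{2^m}),\ \tfrac{2^m}{2^m+1}(\id-\Phi_{2^m})\big)$, to get $D_H^{\sep,\e}(\rho\|\SEP)\geq m-1$. By working at the one-shot level with the hypothesis-testing quantity $D_H^{\sep,\e}$ and then invoking the asymptotic Stein-type result of Brand\~ao--Harrow--Lee--Peres, the paper sidesteps the asymptotic-continuity issue you flag; your monotone-plus-continuity route would also work but is more laborious.

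The achievability is where your proposal has a genuine gap. Your plan---measure with a separable POVM, extract classical ``distinguishing advantage'', then \emph{recohere} into ebits---is vague precisely at the step you identify as the obstacle, and in fact that step is unnecessary. The key observation you are missing is that the distillation map can be taken to be the simple measure-and-prepare channel
\[
\Theta_{2^m;\,E}(X)\;=\;\Phi_{2^m}\,\Tr[EX]\;+\;\tau_{2^m}\,\Tr[(\id-E)X],
\]
where $(E,\id-E)$ is the optimal \emph{binary} separable test achieving $D_H^{\sep,\e}(\rho\|\SEP)$ (so $E,\id-E\in\cone(\SEP)$, $\Tr E\rho\geq 1-\e$, and $\sup_{\sigma\in\SEP}\Tr E\sigma\leq 2^{-m}$). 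This map is \emph{exactly} DNE, not merely ADNE: the Schr\"odinger condition holds because $\Theta_{2^m;\,E}(\sigma)$ is an isotropic state with overlap $\Tr E\sigma\leq 2^{-m}$, hence separable; and the Heisenberg condition is automatic because $\Theta_{2^m;\,E}^\dag(Y)=E\,\Tr[\Phi_{2^m}Y]+(\id-E)\,\Tr[\tau_{2^m}Y]$ is a non-negative combination of the separable operators $E$ and $\id-E$. So there is no ``recoherence'' to control and no need to pass through ADNE for the lower bound: one gets $E_{d,\dne}^{(1),\e}(\rho)\geq\lfloor D_H^{\sep,\e}(\rho\|\SEP)\rfloor$ directly, and the regularisation plus the BHLP result finishes. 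Your intuition that ``the separable-measurement structure of $D^\sep$ is exactly where the Heisenberg condition must be exploited'' is right, but the exploitation is this one-line structural fact, not an asymptotic control argument.
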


Before we sketch the main proof ideas, several remarks are in order.

(I)~Eq.~\eqref{quantitative_faithfulness_distillable_dne} provides a faithful single-letter lower bound on the DNE distillable entanglement of any state, and in particular, thanks 
to the faithfulness of the Piani relative entropy~\cite{Piani2009}, it reveals that there is no bound entanglement under DNE operations.
Distillability under these operations was previously studied in~\cite[Theorem~13]{Chitambar2020}, although a complete proof that all entangled states are distillable was not obtained there. Theorem~\ref{distillable_dne_thm} also provides a conceptually cleaner proof of our previous result~\cite[Proposition~7]{gap}, where we showed that there is no bound entanglement under ANE operations; it suffices to observe that since ADNE operations are in particular ANE, we have that $E_{d,\, \ane}(\rho) \geq E_{d,\, \adne}(\rho) = D^{\sep,\infty}(\rho\|\SEP)$.

(II)~The 
result of Eq.~\eqref{distillable_dne_and_adne} shows that the choice between DNE or ADNE operations makes no difference as far as distillation is concerned. This is a priori unexpected, as ADNE operations are strictly more powerful than DNE in general, but not entirely surprising, as the same happens for non-entangling vs.\ asymptotically non-entangling operations~\cite[Lemma~S17]{irreversibility}. Furthermore, in 
Supplementary Note~\ref{subsec_ADNE_distillable} we prove that even allowing ADNE operations to generate an arbitrary sub-linear amount of entanglement (i.e.\ $\eta_n = o(n)$) does not increase the distillation rate.

(III)~Explicitly characterising various features of distillable entanglement, such as its additivity properties, is often extremely difficult. However, equating the rate of distillation with an entropic quantity helps mitigate such problems. Thanks to Theorem~\ref{distillable_dne_thm}, the (A)DNE distillable entanglement is seen to inherit all the useful properties of the Piani relative entropy of entanglement. A notable example is the
strong super-additivity~\eqref{main_strong_superadditivity} --- this feature of the (A)DNE distillable entanglement, 
highly non-trivial to see from its definition, establishes a strong parallel with the LOCC distillable entanglement, which is also known to be strongly super-additive~\cite[Table~3.4]{MatthiasPhD}.

(IV) The presence of regularisation in the formula~\eqref{distillable_dne_and_adne} may make the quantity look dauntingly difficult to compute in practice. However, there are several ways in which it may be efficiently estimated. First, the super-additivity of the Piani relative contrasts with the \emph{sub}-additivity of the standard relative entropy of entanglement, which means that obtaining lower bounds for the former is as easy as upper bounds for the latter: it suffices to evaluate it on a single copy of a quantum state, yielding $D(\rho\|\SEP) \geq E_{d,\,\dne}(\rho) \geq D^{\sep}(\rho\|\SEP)$.
Furthermore, 
we will shortly introduce a general single-letter bound, which will often allow for an efficient evaluation of $D^{\sep,\infty}(\rho\|\SEP)$  in practice.

\medskip
\textbf{\em Proof sketch of Theorem~\ref{distillable_dne_thm} and connection with hypothesis testing}. --- Our approach will be to first establish a tight bound on the \emph{one-shot} DNE distillable entanglement, that is, on the amount of entanglement that can be extracted from a single copy of a state $\rho$. By 
studying the asymptotic behaviour of this quantity when many copies of $\rho$ are available, we will then obtain a formula for the rate of distillation.

To this end, we will relate DNE entanglement distillation with a seemingly different task, namely, 
hypothesis testing of entangled states --- or \deff{entanglement testing} for short~\cite{Brandao2010, gap-comment}. 
In this latter task,
given an unknown quantum state which could be either the entangled state $\rho$ or any separable state, the goal is to perform a measurement and guess which of the two hypotheses is true. Our setting is different from that of~\cite{Brandao2010, gap-comment} in that we assume that the only allowed measurements are separable, i.e.\ they do not carry any entanglement themselves. 
In the 
asymmetric setting, one assumes that the probability of mistaking $\rho$ for a separable state is at most $\e$ and asks about the opposite error probability. The least achievable error 
can then be quantified by a quantity known as the separably measured hypothesis testing relative entropy~\cite{brandao_adversarial}, which is defined as 
\bb
D_H^{\sep\!,\,\e}(\rho\|\sigma) \coloneqq - \log_2 \inf_{(E,\,\id-E)\in\sep}\left\{ \Tr E\sigma\!:\, \Tr E\rho\!\geq\! 1\!-\!\e \right\} \!.
\ee

Returning to entanglement distillation, let us denote by $E_{d,\,\dne_\eta}^{(1),\,\e}\!(\rho)$ the number of ebits that can be distilled from a single copy of a bipartite state $\rho=\rho_{AB}$, up to error $\e$ and up to $\eta$ of generated entanglement. Some thoughtful manipulation --- the details of which we defer to Supplementary Note~\ref{sec_distillable} --- then shows that
\bb
\label{eq:oneshot_ent}
\floor{D_H^{\sep\!,\,\e}(\rho\| \SEP)} 
&\leq E_{d,\,\dne_\eta}^{(1),\,\e} 
\!(\rho) \\
&\leq D_H^{\sep\!,\,\e}(\rho\| \SEP) + \eta + 1.
\ee
This tells us that the DNE distillable entanglement is, up to some small terms that will asymptotically vanish, completely determined by $D_H^{\sep\!,\,\e}$. This connection between DNE entanglement distillation and entanglement testing with separable measurements is not only one of our main conceptual contributions, but on the more practical side it
allows us to characterise the properties of entanglement distillation by using results from hypothesis testing. 
And indeed, the asymptotic behaviour of $D_H^{\sep\!,\,\e}$ was already studied in Ref.~\cite{brandao_adversarial}, where it was shown that
\bb\label{eq:measured_stein}
\lim_{\e\to 0}\liminf_{n\to\infty} \frac1n D_H^{\sep\!,\,\e}\big(\rho^{\otimes n}\,\big\|\, \SEP_n\big) = D^{\sep,\infty}(\rho\|\SEP).
\ee
By Eq.~\eqref{eq:oneshot_ent}, the expression on the left-hand side is precisely the asymptotic (A)DNE distillable entanglement, from which Eq.~\eqref{distillable_dne_and_adne} follows. Eq.~\eqref{quantitative_faithfulness_distillable_dne} is then a consequence of known properties of $D^{\sep}$~\cite{Piani2009}.
This concludes the sketch of the proof of Theorem 1. The complete technical details of all of our proofs can be found in the Supplementary Information.

The idea of using entanglement testing to provide bounds for distillable entanglement has a long history~\cite{Vedral1998,HAYASHI}, but it was not until the work of Brand\~ao and Plenio~\cite{BrandaoPlenio2} that a precise equivalence between the two concepts was established in a suitable axiomatic setting. 
Specifically, they showed that the rate of distillation under non-entangling operations can be expressed using a related hypothesis testing problem as
\begin{equation}\begin{aligned}\label{eq:ne_distillable_stein}
    E_{d,\rm NE}(\rho) = \lim_{\e\to 0}\liminf_{n\to\infty} \frac1n D_H^{\all,\,\e}\big(\rho^{\otimes n}\,\big\|\, \SEP_n\big),
\end{aligned}\end{equation}
where the notation $\all$ refers to all measurements being allowed. 
There is, however, a crucial difference between Brand\~{a}o and Plenio's work~\cite{BrandaoPlenio2} and ours: while the former constructed a one-to-one mapping between non-entangling protocols and \emph{global} entanglement tests, the proof of our Theorem~\ref{distillable_dne_thm}, and in particular that of Eq.~\eqref{eq:oneshot_ent}, shows that DNE distillation protocols --- including those that are not implementable with LOCC --- are in one-to-one correspondence with 
\emph{separable} entanglement tests. 
Our result is significant because, in practice, separable measurements are easier to implement experimentally in a distant local laboratory setting in which Alice and Bob do not have access to unbounded quantum communication.

\medskip
\textbf{\em Entanglement cost.}---  It is also of interest to understand the task opposite to distillation, 
the dilution of entanglement --- that is, the use of ebits to produce a 
target (mixed) quantum state. The rate at which this can be done is known as the \emph{entanglement cost} $E_{c,\,\FF}(\rho_{AB})$, and the question of reversibility of entanglement manipulation~\cite{Vidal-irreversibility,Horodecki2002,BrandaoPlenio1,BrandaoPlenio2,irreversibility,gap} asks whether $E_{d,\,\FF}(\rho_{AB})=E_{c,\,\FF}(\rho_{AB})$ holds for all states under the given class of free operations $\FF$. 

Our second main result establishes an equivalence between (A)DNE and the larger class of (A)NE operations in entanglement dilution, showing that any dilution task that could be achieved with Brand\~ao and Plenio's (A)NE operations~\cite{BrandaoPlenio2,irreversibility} can also be accomplished with the more axiomatically meaningful (A)DNE operations.

\begin{thm} \label{cost_dne_thm}
For all bipartite states $\rho=\rho_{AB}$, the entanglement cost under (asymptotically) dually non-entangling operations coincides with the corresponding entanglement cost under (asymptotically) non-entangling operations:
\begin{align}
E_{c,\,\dne}(\rho) &= E_{c,\,\sepp}(\rho), \label{cost_dne} \\
E_{c,\,\adne}(\rho) &= E_{c,\,\ane}(\rho) = D^{\infty}(\rho\|\SEP)\, . \label{cost_adne}
\end{align}
\end{thm}
The last equality 
in~\eqref{cost_adne} comes from the known result~\cite{BrandaoPlenio2} that the ANE entanglement cost of a state $\rho$ is given by the regularised relative entropy of entanglement $D^\infty(
\rho \| \SEP)$. It was also recently shown that $E_{c,\,\sepp}(\rho) > D^\infty(\rho \| \SEP)$ in general~\cite{irreversibility}, so a similar equivalence cannot hold for DNE operations without asymptotic entanglement generation.

Theorems~\ref{distillable_dne_thm} and~\ref{cost_dne_thm} together show that the asymptotic properties of entanglement manipulation under ADNE operations are governed by two entropic quantities: Piani relative entropy $D^{\sep,\infty}(\cdot\|\SEP)$ in distillation, and the relative entropy $D^\infty(\cdot\|\SEP)$ in dilution.
This provides a direct motivation for the question: is there actually any difference between the two entropic quantities, or are they simply equal?

\medskip
\textbf{\em Gap with the relative entropy.}--- 
To demonstrate a gap between 
$D^{\sep,\infty}(\rho\|\SEP)$ and $D^\infty(\rho \| \SEP)$, we introduce a simple but general single-letter upper bound for the regularised Piani relative entropy: namely,
\bb
E_{d,\,\dne}(\rho) =  D^{\sep,\infty}(\rho \| \SEP)  \leq \kk^\infty(\rho) \leq \kk(\rho)\, ,
\label{upper_bound_DNE_distillable}
\ee
where
\bb
\label{main_kappa_def}
\kk(\rho) \coloneqq \log_2 \min\left\{ \Tr S:\ -S \leq \rho^\Gamma \leq S ,\ S\!\in\! \cone(\SEP) \right\}
\ee
is a modification of an entanglement monotone  $E_\kappa$ studied in a different context by Wang and Wilde~\cite{Wang2023}. As before, $\cone(\SEP)$ denotes here the cone of separable operators. Any ansatz for Eq.~\eqref{main_kappa_def} then yields an upper bound on $E_{d,\, \dne}(\rho)$, allowing one to estimate its value.
We now apply the bound of Eq.~\eqref{upper_bound_DNE_distillable} to the so-called \deff{antisymmetric state} --- a special entangled state that has been 
the subject of many investigations~\cite{Werner, Christandl2012, lancien2016} in light of its peculiar properties, earning 
the name of `universal counterexample' in entanglement theory~\cite{Aaronson2008}. It is proportional to the projector onto the antisymmetric subspace within the bipartite Hilbert space $\C^d\otimes \C^d$, thus being given by $\alpha_d\coloneqq \frac{\id-F}{d(d-1)}$, where $F\ket{\psi}\!\ket{\phi} \coloneqq \ket{\phi}\!\ket{\psi}$ is the swap operator. It is 
$(d-1)$-extendible 
on either sub-system, thus it has small distillable entanglement, distillable key, and squashed entanglement --- all $O(1/d)$~\cite{Christandl2012}. At the same time, its entanglement cost and regularised relative entropy of entanglement $D^\infty(\alpha_d\|\SEP)$ are larger than a fixed constant independent of $d$~\cite{Christandl2012}. Our third main result states that the regularised Piani relative entropy of entanglement $D^{\sep,\infty}(\alpha_d\|\SEP)$ behaves unlike $D^\infty(\alpha_d\|\SEP)$ and instead goes to $0$ as $d\to\infty$, again as $O(1/d)$. The  existence of a state with this property was left as an open problem by Li and Winter~\cite[Figure~1]{rel-ent-sq}.

\begin{thm} \label{antisymmetric_gap_thm}
The antisymmetric state $\alpha_d$ satisfies that
\bb
D^{\sep,\infty}(\alpha_d\|\SEP) \leq \log_2 \left(1 + \frac2d\right) \tends{}{d\to\infty} 0\, ,
\ee
while $D^\infty(\alpha_d\|\SEP) \geq \frac12 \log_2 \frac43 \approx 0.2075$ for all $d$. Consequently, $D^{\sep,\infty}(\alpha_d\|\SEP) < D^{\infty}(\alpha_d\|\SEP)$ for all $d \geq 13$.
\end{thm}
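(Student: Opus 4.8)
The lower bound $D^{\infty}(\alpha_d\|\SEP)\ge\frac12\log_2\frac43$ is not something I would reprove --- it is exactly \cite[Corollary~3]{Christandl2012} --- so the real work is the upper bound $D^{\sep,\infty}(\alpha_d\|\SEP)\le\log_2\!\left(1+\frac2d\right)$, after which the strict separation for large $d$ is just a comparison of two explicit numbers. My plan is to reduce the upper bound to the production of a single feasible point for the general estimate $D^{\sep,\infty}(\rho\|\SEP)\le\kk(\rho)$ recorded in Eq.~\eqref{upper_bound_DNE_distillable}: it suffices to exhibit one operator $S\in\cone(\SEP)$ with $-S\le\alpha_d^\Gamma\le S$ and $\Tr S\le 1+2/d$, and then read off $\kk(\alpha_d)\le\log_2(1+2/d)$ straight from the definition~\eqref{main_kappa_def}.

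The candidate essentially picks itself. Writing $\ket{\Phi_d}=\frac{1}{\sqrt d}\sum_{i=1}^{d}\ket{i}\ket{i}$ for the maximally entangled state and using $\id^\Gamma=\id$ together with $F^\Gamma=d\,\ketbra{\Phi_d}$, one computes $\alpha_d^\Gamma=\frac{\id-d\,\ketbra{\Phi_d}}{d(d-1)}$, whose spectrum is the single negative eigenvalue $-1/d$ on $\ket{\Phi_d}$ and $\frac{1}{d(d-1)}$ on its orthogonal complement. I would then take $S=|\alpha_d^\Gamma|$: it commutes with $\alpha_d^\Gamma$, so the sandwiching $-S\le\alpha_d^\Gamma\le S$ is automatic, and $\Tr S=\frac1d+\frac{d^2-1}{d(d-1)}=1+\frac2d$ is precisely the target value. (Incidentally, $1+\frac2d=\Tr|\alpha_d^\Gamma|$ is already the minimum of $\Tr S$ over \emph{all} positive $S$ sandwiching $\alpha_d^\Gamma$, so once this particular $S$ is shown separable one gets $\kk(\alpha_d)=\log_2(1+2/d)$ exactly, with no regularisation needed.) The only step that is not bookkeeping is verifying that $S\in\cone(\SEP)$.

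That verification is the crux, and the step I would expect to be the main obstacle. The subtlety is that $S=|\alpha_d^\Gamma|=\frac{\id+(d-2)\,\ketbra{\Phi_d}}{d(d-1)}$ carries a strictly positive weight on the maximally entangled projector, so its separability is not at all evident. The resolution is to recognise the normalised operator $S/(1+2/d)$ as an isotropic state --- a mixture of $\ketbra{\Phi_d}$ and the maximally mixed state on $\C^d\otimes\C^d$ --- whose singlet fraction is $\bra{\Phi_d}S\ket{\Phi_d}/(1+2/d)=\frac{1}{d+2}$; since $\frac{1}{d+2}<\frac1d$, it is separable by the standard fact that an isotropic state on $\C^d\otimes\C^d$ is separable precisely when its singlet fraction does not exceed $1/d$. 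Hence $S\in\cone(\SEP)$, which gives $\kk(\alpha_d)\le\log_2(1+2/d)$, and Eq.~\eqref{upper_bound_DNE_distillable} upgrades this to $D^{\sep,\infty}(\alpha_d\|\SEP)\le\log_2(1+2/d)$. Combining with the cited lower bound $D^\infty(\alpha_d\|\SEP)\ge\frac12\log_2\frac43\approx 0.2075$, and using that $\log_2(1+2/d)$ decreases to $0$ with $\log_2(1+2/d)<\frac12\log_2\frac43$ for all $d\ge 13$, I conclude $D^{\sep,\infty}(\alpha_d\|\SEP)<D^{\infty}(\alpha_d\|\SEP)$ for every such $d$.
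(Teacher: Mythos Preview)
Your proof is correct and follows essentially the same strategy as the paper: reduce to the bound $D^{\sep,\infty}(\alpha_d\|\SEP)\le\kk(\alpha_d)$ from Eq.~\eqref{upper_bound_DNE_distillable}, then exhibit a feasible $S$ with $\Tr S=1+2/d$. Your ansatz $S=|\alpha_d^\Gamma|$ in the formulation $-S\le\alpha_d^\Gamma\le S$ is exactly the partial transpose of the paper's ansatz $\frac1d\alpha_d+(1+\frac1d)\sigma_d$ in the formulation $-S^\Gamma\le\alpha_d^\Gamma\le S^\Gamma$; the only cosmetic difference is that you verify separability by recognising the normalised operator as an isotropic state with singlet fraction $\frac{1}{d+2}<\frac1d$, whereas the paper checks it on the Werner side via the decomposition $\frac1d(\alpha_d+\sigma_d)+\sigma_d$.
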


Although 
this might seem a purely mathematical result, we can give it a direct physical meaning in two contexts: one, in the manipulation of entanglement, and two, in quantum hypothesis testing. We will now explain both of these interpretations in detail.

\medskip
\textbf{\em Irreversibility of entanglement manipulation.}--- 
Theorems~\ref{distillable_dne_thm},~\ref{cost_dne_thm}, and~\ref{antisymmetric_gap_thm} together imply a fundamental irreversibility of entanglement under asymptotically dually non-entangling operations: it holds that
\bb
E_{c,\,\adne}(\alpha_d) > E_{d,\,\adne}(\alpha_d)
\ee
for the antisymmetric Werner state with $d \geq 13$. What this means is that there is no hope of establishing reversibility under DNE transformations, even assisted by asymptotic entanglement generation.

This further hints at the fact that ANE might truly be the smallest class of operations that enable reversibility of entanglement manipulation, as suggested by Plenio~\cite{OpenProblemArxiv}. The validity of this conjecture hinges on the generalised quantum Stein's lemma~\cite{Brandao2010, gap}, arguably one of the main open problems in the field. Our results shed further light on this problem, pinpointing which assumptions are key to reversibility.
In the recent work~\cite{irreversibility}, we showed that asymptotic entanglement generation is necessary: allowing only non-entangling operations leads to irreversibility.
Here we show that the entanglement generation in itself is not enough: one really needs the full operational power of non-entangling operations \emph{and} asymptotic entanglement generation in order to have any hope of establishing a reversible theory of entanglement.
This teaches us something not only about entanglement, but also about the mathematical structure of quantum resource theories: namely, that asymptotically dually resource non-generating operations~\cite{RT-review} are in general not enough to unlock reversibility.

Interestingly, the antisymmetric state $\alpha_d$ can actually be shown to be \emph{reversible} under some other classes of free operations that extend LOCC, such as so-called PPT operations~\cite{Martin-exact-PPT}. This shows that the more limited distillation power of (A)DNE operations has crucial operational consequences, and (A)DNE is a useful outer approximation to LOCC that is rather independent from other commonly employed ones such as PPT. 

We note also that recently a framework was formulated that enables reversibility for all quantum states by allowing for the use probabilistic ANE operations in addition to quantum channels~\cite{regula_2023}. Although very closely related to the question studied here, this requires a suitable adjustment of the definition of asymptotic rates, and it is not directly comparable with the standard definitions of transformation rates used in this work.

\medskip
\textbf{\em Entanglement testing with separable measurements.}--- Recall from Eq.~\eqref{eq:measured_stein} that the Piani relative entropy $D^{\sep,\infty}(\rho\|\SEP)$ exactly quantifies the asymptotic performance in hypothesis testing of against all separable states using only separable measurements.

Already in the works~\cite{Brandao2010,BrandaoPlenio2} an important conjecture was made: that when all measurements are allowed --- not just separable ones, but also general global measurements --- then the asymptotic rate of entanglement testing equals the regularised relative entropy of entanglement $D^\infty(\rho\|\SEP)$. Because of the connection between entanglement testing and entanglement distillation under NE operations (Eq.~\eqref{eq:ne_distillable_stein}), this would also show that $E_{d,\rm NE}(\rho) =D^\infty(\rho\|\SEP)$ and hence that entanglement is reversible under NE. 
This conjecture is precisely the generalised quantum Stein's lemma. We have already shown a gap between $D^{\sep,\infty}(\rho\|\SEP)$ and $D^{\infty}(\rho\|\SEP)$, but without the generalised quantum Stein's lemma, it is impossible to conclude if this gap extends to the operational performance of NE and DNE or to entanglement testing.

Fortunately, we can show the conjectured generalised quantum Stein's lemma of~\cite{Brandao2010} holds true for the state $\alpha_d$, implying that
$E_{d,\rm NE}(\alpha_d) = D^\infty(\alpha_d\|\SEP)$ for this state. (See Supplementary Note~\ref{sec:antisym_note}.) 
Together with Theorems~\ref{distillable_dne_thm} and~\ref{antisymmetric_gap_thm}, this then shows that
\begin{equation}\begin{aligned}
    E_{d,\rm DNE}(\alpha_d) < E_{d,\rm NE}(\alpha_d) 
\end{aligned}\end{equation}
for sufficiently large $d$. On the one hand, this exhibits a gap in the operational performance of the two sets of operations in entanglement distillation, showing that DNE provide a strictly tighter approximation to LOCC.
On the other hand, this directly shows a gap in the performance of hypothesis testing in the two settings: in the entanglement testing of the antisymmetric state $\alpha_d$, global measurements can do \emph{strictly better} than separable (and hence also local) ones. This is the first known asymptotic gap of this kind.

\section{Discussion}

We have contributed to the asymptotic theory of entanglement manipulation by exactly computing the rate at which entanglement can be distilled from any quantum state using dually non-entangling operations. In addition to being one of the few examples where asymptotic rates for general quantum states can be evaluated in terms of a single regularised quantity, 
this result also gives a direct operational interpretation to the Piani relative entropy of entanglement, a known mathematical tool that had not been directly connected with entanglement manipulation before. 
We provided new insights into the asymptotic properties of the antisymmetric state --- 
a state that has received significant attention due to its often unusual entanglement properties, and 
whose characterisation is an important problem in 
entanglement theory --- by resolving an open 
question regarding its distinguishability. Finally, 
we have shown that dually non-entangling operations assisted by asymptotically vanishing amounts of entanglement can dilute entanglement at a rate given by the regularised relative entropy of entanglement, matching the performance of the larger class of asymptotically non-entangling operations and ruling out the reversibility of entanglement in this setting.

Despite the axiomatic character of DNE operations, our results lead to the establishment of new practically relevant connections --- in particular, that between DNE entanglement distillation and entanglement testing with separable measurements --- and our precise characterisation of the properties of DNE sheds light on several important physical phenomena in entanglement theory. 
For instance, through the hypothesis testing connections built and strengthened here, we showed that there is a strict gap in the asymptotic performance of separable vs.\ global measurements in distinguishing entangled states from unentangled ones. Further,
we now know that DNE operations exhibit no bound entanglement, and this immediately implies that a solution to the important open problem of the existence of NPT bound entanglement may only be obtained by looking at classes of operations strictly smaller than DNE. Similarly, our irreversibility result shows that any reversible framework must use operations larger than ADNE.
Such insights can be regarded as no-go results that illuminate the extremely complicated and still little-understood questions about the power of different operations in transforming entangled states.

We hope that our results can find use in the understanding of the often enigmatic landscape of asymptotic entanglement manipulation. Characterising the distillability and reversibility properties, as well as evaluating the distillable entanglement for other types of operations remain major open problems in the field.

\medskip
\textbf{\em Note added.}\,--- At the time of publication, two proofs of the generalised quantum Stein's lemma have been claimed~\cite{Hayashi-Stein, GQSL}. To recall, the validity of this result would establish the conjectured equality between the NE distillable entanglement and the regularised relative entropy of entanglement $D^\infty(\cdot\|\SEP)$ for all states, complementing our main results. Our investigation of the antisymmetric state in Supplementary Note~\ref{sec:antisym_note} provides an independent and simpler proof of a special case of this lemma.

We also note that the result of Theorem~\ref{antisymmetric_gap_thm} can be deduced from a claim found in the first pre-print version of~\cite{Cheng2020} (Table~1, $4^\text{th}$ row, $3^\text{rd}$ column), available at \href{https://arxiv.org/abs/2011.13063v1}{https://arxiv.org/abs/2011.13063v1}. No complete proof of this claim was available when the first pre-print of this work appeared, in July~2023. A proof using a different technique than ours can however be found in the published version of~\cite{Cheng2020} (Corollary~14).

\medskip
\textbf{\em Acknowledgements.}\,--- LL thanks the Freie Universit\"at Berlin for hospitality. LL and BR are grateful to Julio I.\ de~Vicente, 
Eric Chitambar, Marco Tomamichel, and Andreas Winter for useful comments on the manuscript. We also thank Mario Berta for insightful discussions.

\bibliographystyle{apsc}
\bibliography{biblio}

\begin{thebibliography}{96}%
\makeatletter
\providecommand \@ifxundefined [1]{%
 \@ifx{#1\undefined}
}%
\providecommand \@ifnum [1]{%
 \ifnum #1\expandafter \@firstoftwo
 \else \expandafter \@secondoftwo
 \fi
}%
\providecommand \@ifx [1]{%
 \ifx #1\expandafter \@firstoftwo
 \else \expandafter \@secondoftwo
 \fi
}%
\providecommand \natexlab [1]{#1}%
\providecommand \emph  [1]{``#1''}%
\providecommand \bibnamefont  [1]{#1}%
\providecommand \bibfnamefont [1]{#1}%
\providecommand \citenamefont [1]{#1}%
\providecommand \href@noop [0]{\@secondoftwo}%
\providecommand \href [0]{\begingroup \@sanitize@url \@href}%
\providecommand \@href[1]{\@@startlink{#1}\@@href}%
\providecommand \@@href[1]{\endgroup#1\@@endlink}%
\providecommand \@sanitize@url [0]{\catcode `\\12\catcode `\$12\catcode
  `\&12\catcode `\#12\catcode `\^12\catcode `\_12\catcode `\%12\relax}%
\providecommand \@@startlink[1]{}%
\providecommand \@@endlink[0]{}%
\providecommand \url  [0]{\begingroup\@sanitize@url \@url }%
\providecommand \@url [1]{\endgroup\@href {#1}{\urlprefix }}%
\providecommand \urlprefix  [0]{URL }%
\providecommand \Eprint [0]{\href }%
\providecommand \doibase [0]{http://dx.doi.org/}%
\providecommand \selectlanguage [0]{\@gobble}%
\providecommand \bibinfo  [0]{\@secondoftwo}%
\providecommand \bibfield  [0]{\@secondoftwo}%
\providecommand \translation [1]{[#1]}%
\providecommand \BibitemOpen [0]{}%
\providecommand \bibitemStop [0]{}%
\providecommand \bibitemNoStop [0]{.\EOS\space}%
\providecommand \EOS [0]{\spacefactor3000\relax}%
\providecommand \BibitemShut  [1]{\csname bibitem#1\endcsname}%
\let\auto@bib@innerbib\@empty
\bibitem [{\citenamefont {Bennett}\ \emph
  {et~al.}(1996{\natexlab{a}})\citenamefont {Bennett}, \citenamefont
  {Bernstein}, \citenamefont {Popescu},\ and\ \citenamefont
  {Schumacher}}]{Bennett-distillation}%
  \BibitemOpen
  \bibfield  {author} {\bibinfo {author} {\bibfnamefont {C.~H.}\ \bibnamefont
  {Bennett}}, \bibinfo {author} {\bibfnamefont {H.~J.}\ \bibnamefont
  {Bernstein}}, \bibinfo {author} {\bibfnamefont {S.}~\bibnamefont {Popescu}},
  \ and\ \bibinfo {author} {\bibfnamefont {B.}~\bibnamefont {Schumacher}},\
  }\bibfield  {title} {\emph {\bibinfo {title} {Concentrating partial
  entanglement by local operations},}\ }\href
  {http://dx.doi.org/10.1103/PhysRevA.53.2046} {\bibfield  {journal} {\bibinfo
  {journal} {Phys. Rev. A}\ }\textbf {\bibinfo {volume} {53}},\ \bibinfo
  {pages} {2046} (\bibinfo {year} {1996}{\natexlab{a}})}.\BibitemShut {Stop}%

\bibitem [{\citenamefont {Bennett}\ \emph
  {et~al.}(1996{\natexlab{b}})\citenamefont {Bennett}, \citenamefont
  {Brassard}, \citenamefont {Popescu}, \citenamefont {Schumacher},
  \citenamefont {Smolin},\ and\ \citenamefont
  {Wootters}}]{Bennett-distillation-mixed}%
  \BibitemOpen
  \bibfield  {author} {\bibinfo {author} {\bibfnamefont {C.~H.}\ \bibnamefont
  {Bennett}}, \bibinfo {author} {\bibfnamefont {G.}~\bibnamefont {Brassard}},
  \bibinfo {author} {\bibfnamefont {S.}~\bibnamefont {Popescu}}, \bibinfo
  {author} {\bibfnamefont {B.}~\bibnamefont {Schumacher}}, \bibinfo {author}
  {\bibfnamefont {J.~A.}\ \bibnamefont {Smolin}}, \ and\ \bibinfo {author}
  {\bibfnamefont {W.~K.}\ \bibnamefont {Wootters}},\ }\bibfield  {title} {\emph
  {\bibinfo {title} {Purification of noisy entanglement and faithful
  teleportation via noisy channels},}\ }\href
  {http://dx.doi.org/10.1103/PhysRevLett.76.722} {\bibfield  {journal}
  {\bibinfo  {journal} {Phys. Rev. Lett.}\ }\textbf {\bibinfo {volume} {76}},\
  \bibinfo {pages} {722} (\bibinfo {year} {1996}{\natexlab{b}})}.\BibitemShut
  {Stop}%

\bibitem [{\citenamefont {Bennett}\ \emph
  {et~al.}(1996{\natexlab{c}})\citenamefont {Bennett}, \citenamefont
  {DiVincenzo}, \citenamefont {Smolin},\ and\ \citenamefont
  {Wootters}}]{Bennett-error-correction}%
  \BibitemOpen
  \bibfield  {author} {\bibinfo {author} {\bibfnamefont {C.~H.}\ \bibnamefont
  {Bennett}}, \bibinfo {author} {\bibfnamefont {D.~P.}\ \bibnamefont
  {DiVincenzo}}, \bibinfo {author} {\bibfnamefont {J.~A.}\ \bibnamefont
  {Smolin}}, \ and\ \bibinfo {author} {\bibfnamefont {W.~K.}\ \bibnamefont
  {Wootters}},\ }\bibfield  {title} {\emph {\bibinfo {title} {Mixed-state
  entanglement and quantum error correction},}\ }\href
  {http://dx.doi.org/10.1103/PhysRevA.54.3824} {\bibfield  {journal} {\bibinfo
  {journal} {Phys. Rev. A}\ }\textbf {\bibinfo {volume} {54}},\ \bibinfo
  {pages} {3824} (\bibinfo {year} {1996}{\natexlab{c}})}.\BibitemShut {Stop}%

\bibitem [{\citenamefont {Bennett}\ \emph {et~al.}(1993)\citenamefont
  {Bennett}, \citenamefont {Brassard}, \citenamefont {Cr{\'e}peau},
  \citenamefont {Jozsa}, \citenamefont {Peres},\ and\ \citenamefont
  {Wootters}}]{teleportation}%
  \BibitemOpen
  \bibfield  {author} {\bibinfo {author} {\bibfnamefont {C.~H.}\ \bibnamefont
  {Bennett}}, \bibinfo {author} {\bibfnamefont {G.}~\bibnamefont {Brassard}},
  \bibinfo {author} {\bibfnamefont {C.}~\bibnamefont {Cr{\'e}peau}}, \bibinfo
  {author} {\bibfnamefont {R.}~\bibnamefont {Jozsa}}, \bibinfo {author}
  {\bibfnamefont {A.}~\bibnamefont {Peres}}, \ and\ \bibinfo {author}
  {\bibfnamefont {W.~K.}\ \bibnamefont {Wootters}},\ }\bibfield  {title} {\emph
  {\bibinfo {title} {Teleporting an unknown quantum state via dual classical
  and {E}instein-{P}odolsky-{R}osen channels},}\ }\href
  {http://dx.doi.org/10.1103/PhysRevLett.70.1895} {\bibfield  {journal}
  {\bibinfo  {journal} {Phys. Rev. Lett.}\ }\textbf {\bibinfo {volume} {70}},\
  \bibinfo {pages} {1895} (\bibinfo {year} {1993})}.\BibitemShut {Stop}%

\bibitem [{\citenamefont {Bennett}\ and\ \citenamefont
  {Wiesner}(1992)}]{dense-coding}%
  \BibitemOpen
  \bibfield  {author} {\bibinfo {author} {\bibfnamefont {C.~H.}\ \bibnamefont
  {Bennett}}\ and\ \bibinfo {author} {\bibfnamefont {S.~J.}\ \bibnamefont
  {Wiesner}},\ }\bibfield  {title} {\emph {\bibinfo {title} {Communication via
  one- and two-particle operators on {E}instein-{P}odolsky-{R}osen states},}\
  }\href@noop {} {\bibfield  {journal} {\bibinfo  {journal} {Phys. Rev. Lett.}\
  }\textbf {\bibinfo {volume} {69}},\ \bibinfo {pages} {2881} (\bibinfo {year}
  {1992})}.\BibitemShut {Stop}%

\bibitem [{\citenamefont {Ekert}(1991)}]{Ekert91}%
  \BibitemOpen
  \bibfield  {author} {\bibinfo {author} {\bibfnamefont {A.~K.}\ \bibnamefont
  {Ekert}},\ }\bibfield  {title} {\emph {\bibinfo {title} {Quantum cryptography
  based on {B}ell's theorem},}\ }\href
  {http://dx.doi.org/10.1103/PhysRevLett.67.661} {\bibfield  {journal}
  {\bibinfo  {journal} {Phys. Rev. Lett.}\ }\textbf {\bibinfo {volume} {67}},\
  \bibinfo {pages} {661} (\bibinfo {year} {1991})}.\BibitemShut {Stop}%

\bibitem [{\citenamefont {Renner}(2005)}]{RennerPhD}%
  \BibitemOpen
  \bibfield  {author} {\bibinfo {author} {\bibfnamefont {R.}~\bibnamefont
  {Renner}},\ }\emph {\bibinfo {title} {Security of quantum key
  distribution}},\ \href {http://dx.doi.org/10.48550/arXiv.quant-ph/0512258}
  {Ph.D. thesis},\ \bibinfo  {school} {ETH Zurich} (\bibinfo {year}
  {2005}).\BibitemShut {Stop}%

\bibitem [{\citenamefont {Brunner}\ \emph {et~al.}(2014)\citenamefont
  {Brunner}, \citenamefont {Cavalcanti}, \citenamefont {Pironio}, \citenamefont
  {Scarani},\ and\ \citenamefont {Wehner}}]{Brunner-review}%
  \BibitemOpen
  \bibfield  {author} {\bibinfo {author} {\bibfnamefont {N.}~\bibnamefont
  {Brunner}}, \bibinfo {author} {\bibfnamefont {D.}~\bibnamefont {Cavalcanti}},
  \bibinfo {author} {\bibfnamefont {S.}~\bibnamefont {Pironio}}, \bibinfo
  {author} {\bibfnamefont {V.}~\bibnamefont {Scarani}}, \ and\ \bibinfo
  {author} {\bibfnamefont {S.}~\bibnamefont {Wehner}},\ }\bibfield  {title}
  {\emph {\bibinfo {title} {Bell nonlocality},}\ }\href
  {http://dx.doi.org/10.1103/RevModPhys.86.419} {\bibfield  {journal} {\bibinfo
   {journal} {Rev. Mod. Phys.}\ }\textbf {\bibinfo {volume} {86}},\ \bibinfo
  {pages} {419} (\bibinfo {year} {2014})}.\BibitemShut {Stop}%

\bibitem [{\citenamefont {Chitambar}\ \emph {et~al.}(2014)\citenamefont
  {Chitambar}, \citenamefont {Leung}, \citenamefont {Man{\v{c}}inska},
  \citenamefont {Ozols},\ and\ \citenamefont {Winter}}]{LOCC}%
  \BibitemOpen
  \bibfield  {author} {\bibinfo {author} {\bibfnamefont {E.}~\bibnamefont
  {Chitambar}}, \bibinfo {author} {\bibfnamefont {D.}~\bibnamefont {Leung}},
  \bibinfo {author} {\bibfnamefont {L.}~\bibnamefont {Man{\v{c}}inska}},
  \bibinfo {author} {\bibfnamefont {M.}~\bibnamefont {Ozols}}, \ and\ \bibinfo
  {author} {\bibfnamefont {A.}~\bibnamefont {Winter}},\ }\bibfield  {title}
  {\emph {\bibinfo {title} {Everything you always wanted to know about {LOCC}
  (but were afraid to ask)},}\ }\href
  {http://dx.doi.org/10.1007/s00220-014-1953-9} {\bibfield  {journal} {\bibinfo
   {journal} {Commun. Math. Phys.}\ }\textbf {\bibinfo {volume} {328}},\
  \bibinfo {pages} {303} (\bibinfo {year} {2014})}.\BibitemShut {Stop}%

\bibitem [{\citenamefont {Horodecki}\ \emph {et~al.}(2022)\citenamefont
  {Horodecki}, \citenamefont {Rudnicki},\ and\ \citenamefont
  {{\.Z}yczkowski}}]{Horodecki-open-problems}%
  \BibitemOpen
  \bibfield  {author} {\bibinfo {author} {\bibfnamefont {P.}~\bibnamefont
  {Horodecki}}, \bibinfo {author} {\bibfnamefont {{\L}.}~\bibnamefont
  {Rudnicki}}, \ and\ \bibinfo {author} {\bibfnamefont {K.}~\bibnamefont
  {{\.Z}yczkowski}},\ }\bibfield  {title} {\emph {\bibinfo {title} {Five {{Open
  Problems}} in {{Quantum Information Theory}}},}\ }\href
  {http://dx.doi.org/10.1103/PRXQuantum.3.010101} {\bibfield  {journal}
  {\bibinfo  {journal} {PRX Quantum}\ }\textbf {\bibinfo {volume} {3}},\
  \bibinfo {pages} {010101} (\bibinfo {year} {2022})}.\BibitemShut {Stop}%

\bibitem [{lis()}]{list-open-problems}%
  \BibitemOpen
  \href@noop {} {\emph {\bibinfo {title} {{Open quantum problems}},}\ }\bibinfo
  {howpublished} {\url{https://oqp.iqoqi.oeaw.ac.at/open-quantum-problems}},\
  \bibinfo {note} {accessed: 2023-06-12}.\BibitemShut {Stop}%

\bibitem [{\citenamefont {Hayden}\ \emph {et~al.}(2001)\citenamefont {Hayden},
  \citenamefont {Horodecki},\ and\ \citenamefont {Terhal}}]{Hayden-EC}%
  \BibitemOpen
  \bibfield  {author} {\bibinfo {author} {\bibfnamefont {P.~M.}\ \bibnamefont
  {Hayden}}, \bibinfo {author} {\bibfnamefont {M.}~\bibnamefont {Horodecki}}, \
  and\ \bibinfo {author} {\bibfnamefont {B.~M.}\ \bibnamefont {Terhal}},\
  }\bibfield  {title} {\emph {\bibinfo {title} {The asymptotic entanglement
  cost of preparing a quantum state},}\ }\href
  {http://dx.doi.org/10.1088/0305-4470/34/35/314} {\bibfield  {journal}
  {\bibinfo  {journal} {J. Phys. A}\ }\textbf {\bibinfo {volume} {34}},\
  \bibinfo {pages} {6891} (\bibinfo {year} {2001})}.\BibitemShut {Stop}%

\bibitem [{\citenamefont {Devetak}\ and\ \citenamefont
  {Winter}(2005)}]{devetak2005}%
  \BibitemOpen
  \bibfield  {author} {\bibinfo {author} {\bibfnamefont {I.}~\bibnamefont
  {Devetak}}\ and\ \bibinfo {author} {\bibfnamefont {A.}~\bibnamefont
  {Winter}},\ }\bibfield  {title} {\emph {\bibinfo {title} {Distillation of
  secret key and entanglement from quantum states},}\ }\href
  {http://dx.doi.org/10.1098/rspa.2004.1372} {\bibfield  {journal} {\bibinfo
  {journal} {Proc. Royal Soc. A}\ }\textbf {\bibinfo {volume} {461}},\ \bibinfo
  {pages} {207} (\bibinfo {year} {2005})}.\BibitemShut {Stop}%

\bibitem [{\citenamefont {Vidal}\ and\ \citenamefont
  {Werner}(2002)}]{negativity}%
  \BibitemOpen
  \bibfield  {author} {\bibinfo {author} {\bibfnamefont {G.}~\bibnamefont
  {Vidal}}\ and\ \bibinfo {author} {\bibfnamefont {R.~F.}\ \bibnamefont
  {Werner}},\ }\bibfield  {title} {\emph {\bibinfo {title} {Computable measure
  of entanglement},}\ }\href {http://dx.doi.org/10.1103/PhysRevA.65.032314}
  {\bibfield  {journal} {\bibinfo  {journal} {Phys. Rev. A}\ }\textbf {\bibinfo
  {volume} {65}},\ \bibinfo {pages} {032314} (\bibinfo {year}
  {2002})}.\BibitemShut {Stop}%

\bibitem [{\citenamefont {Plenio}(2005{\natexlab{a}})}]{plenioprl}%
  \BibitemOpen
  \bibfield  {author} {\bibinfo {author} {\bibfnamefont {M.~B.}\ \bibnamefont
  {Plenio}},\ }\bibfield  {title} {\emph {\bibinfo {title} {Logarithmic
  negativity: A full entanglement monotone that is not convex},}\ }\href
  {http://dx.doi.org/10.1103/PhysRevLett.95.090503} {\bibfield  {journal}
  {\bibinfo  {journal} {Phys. Rev. Lett.}\ }\textbf {\bibinfo {volume} {95}},\
  \bibinfo {pages} {090503} (\bibinfo {year}
  {2005}{\natexlab{a}})}.\BibitemShut {Stop}%

\bibitem [{\citenamefont {Rains}(2001)}]{Rains2001}%
  \BibitemOpen
  \bibfield  {author} {\bibinfo {author} {\bibfnamefont {E.~M.}\ \bibnamefont
  {Rains}},\ }\bibfield  {title} {\emph {\bibinfo {title} {A semidefinite
  program for distillable entanglement},}\ }\href
  {http://dx.doi.org/10.1109/18.959270} {\bibfield  {journal} {\bibinfo
  {journal} {IEEE Trans. Inf. Theory}\ }\textbf {\bibinfo {volume} {47}},\
  \bibinfo {pages} {2921} (\bibinfo {year} {2001})}.\BibitemShut {Stop}%

\bibitem [{\citenamefont {Popescu}\ and\ \citenamefont
  {Rohrlich}(1997)}]{Popescu1997}%
  \BibitemOpen
  \bibfield  {author} {\bibinfo {author} {\bibfnamefont {S.}~\bibnamefont
  {Popescu}}\ and\ \bibinfo {author} {\bibfnamefont {D.}~\bibnamefont
  {Rohrlich}},\ }\bibfield  {title} {\emph {\bibinfo {title} {Thermodynamics
  and the measure of entanglement},}\ }\href
  {http://dx.doi.org/10.1103/PhysRevA.56.R3319} {\bibfield  {journal} {\bibinfo
   {journal} {Phys. Rev. A}\ }\textbf {\bibinfo {volume} {56}},\ \bibinfo
  {pages} {R3319} (\bibinfo {year} {1997})}.\BibitemShut {Stop}%

\bibitem [{\citenamefont {Vedral}\ and\ \citenamefont
  {Kashefi}(2002)}]{vedral_2002}%
  \BibitemOpen
  \bibfield  {author} {\bibinfo {author} {\bibfnamefont {V.}~\bibnamefont
  {Vedral}}\ and\ \bibinfo {author} {\bibfnamefont {E.}~\bibnamefont
  {Kashefi}},\ }\bibfield  {title} {\emph {\bibinfo {title} {Uniqueness of the
  entanglement measure for bipartite pure states and thermodynamics},}\ }\href
  {http://dx.doi.org/10.1103/PhysRevLett.89.037903} {\bibfield  {journal}
  {\bibinfo  {journal} {Phys. Rev. Lett.}\ }\textbf {\bibinfo {volume} {89}},\
  \bibinfo {pages} {037903} (\bibinfo {year} {2002})}.\BibitemShut {Stop}%

\bibitem [{\citenamefont {Horodecki}\ \emph {et~al.}(2002)\citenamefont
  {Horodecki}, \citenamefont {Oppenheim},\ and\ \citenamefont
  {Horodecki}}]{Horodecki2002}%
  \BibitemOpen
  \bibfield  {author} {\bibinfo {author} {\bibfnamefont {M.}~\bibnamefont
  {Horodecki}}, \bibinfo {author} {\bibfnamefont {J.}~\bibnamefont
  {Oppenheim}}, \ and\ \bibinfo {author} {\bibfnamefont {R.}~\bibnamefont
  {Horodecki}},\ }\bibfield  {title} {\emph {\bibinfo {title} {Are the laws of
  entanglement theory thermodynamical?}}\ }\href
  {http://dx.doi.org/10.1103/PhysRevLett.89.240403} {\bibfield  {journal}
  {\bibinfo  {journal} {Phys. Rev. Lett.}\ }\textbf {\bibinfo {volume} {89}},\
  \bibinfo {pages} {240403} (\bibinfo {year} {2002})}.\BibitemShut {Stop}%

\bibitem [{\citenamefont {Carnot}(1824)}]{CARNOT}%
  \BibitemOpen
  \bibfield  {author} {\bibinfo {author} {\bibfnamefont {S.}~\bibnamefont
  {Carnot}},\ }\href@noop {} {\emph {\bibinfo {title} {R\'{e}flexions sur la
  puissance motrice de feu et sur les machines propres \`{a} d\'{e}velopper
  cette puissance}}}\ (\bibinfo  {publisher} {Bachelier, Paris},\ \bibinfo
  {year} {1824}).\BibitemShut {Stop}%

\bibitem [{\citenamefont {Clausius}(1854)}]{Clausius}%
  \BibitemOpen
  \bibfield  {author} {\bibinfo {author} {\bibfnamefont {R.}~\bibnamefont
  {Clausius}},\ }\bibfield  {title} {\emph {\bibinfo {title} {{\"{U}ber eine
  ver\"{a}nderte Form des zweiten Hauptsatzes der mechanischen
  W\"{a}rmetheorie}},}\ }\href
  {http://dx.doi.org/https://doi.org/10.1002/andp.18541691202} {\bibfield
  {journal} {\bibinfo  {journal} {Ann. Phys.}\ }\textbf {\bibinfo {volume}
  {169}},\ \bibinfo {pages} {481} (\bibinfo {year} {1854})}.\BibitemShut
  {Stop}%

\bibitem [{\citenamefont {Thomson}(1852)}]{Kelvin}%
  \BibitemOpen
  \bibfield  {author} {\bibinfo {author} {\bibfnamefont {W.}~\bibnamefont
  {Thomson}},\ }\bibfield  {title} {\emph {\bibinfo {title} {{II. On the
  dynamical theory of heat, with numerical results deduced from Mr. Joule's
  equivalent of a thermal unit, and M. Regnault's observations on steam}},}\
  }\href@noop {} {\bibfield  {journal} {\bibinfo  {journal} {Trans. R. Soc.
  Edinb.}\ }\textbf {\bibinfo {volume} {XX}},\ \bibinfo {pages} {XV, 261}
  (\bibinfo {year} {1852})}.\BibitemShut {Stop}%

\bibitem [{\citenamefont {Brand{\~a}o}\ and\ \citenamefont
  {Plenio}(2008)}]{BrandaoPlenio1}%
  \BibitemOpen
  \bibfield  {author} {\bibinfo {author} {\bibfnamefont {F.~G. S.~L.}\
  \bibnamefont {Brand{\~a}o}}\ and\ \bibinfo {author} {\bibfnamefont {M.~B.}\
  \bibnamefont {Plenio}},\ }\bibfield  {title} {\emph {\bibinfo {title}
  {Entanglement theory and the second law of thermodynamics},}\ }\href
  {http://dx.doi.org/https://doi.org/10.1038/nphys1100} {\bibfield  {journal}
  {\bibinfo  {journal} {Nat. Phys.}\ }\textbf {\bibinfo {volume} {4}},\
  \bibinfo {pages} {873} (\bibinfo {year} {2008})}.\BibitemShut {Stop}%

\bibitem [{\citenamefont {Brand{\~a}o}\ and\ \citenamefont
  {Plenio}(2010{\natexlab{a}})}]{BrandaoPlenio2}%
  \BibitemOpen
  \bibfield  {author} {\bibinfo {author} {\bibfnamefont {F.~G. S.~L.}\
  \bibnamefont {Brand{\~a}o}}\ and\ \bibinfo {author} {\bibfnamefont {M.~B.}\
  \bibnamefont {Plenio}},\ }\bibfield  {title} {\emph {\bibinfo {title} {A
  reversible theory of entanglement and its relation to the second law},}\
  }\href {http://dx.doi.org/10.1007/s00220-010-1003-1} {\bibfield  {journal}
  {\bibinfo  {journal} {Commun. Math. Phys.}\ }\textbf {\bibinfo {volume}
  {295}},\ \bibinfo {pages} {829} (\bibinfo {year}
  {2010}{\natexlab{a}})}.\BibitemShut {Stop}%

\bibitem [{\citenamefont {Lami}\ and\ \citenamefont
  {Regula}(2023)}]{irreversibility}%
  \BibitemOpen
  \bibfield  {author} {\bibinfo {author} {\bibfnamefont {L.}~\bibnamefont
  {Lami}}\ and\ \bibinfo {author} {\bibfnamefont {B.}~\bibnamefont {Regula}},\
  }\bibfield  {title} {\emph {\bibinfo {title} {No second law of entanglement
  manipulation after all},}\ }\href
  {http://dx.doi.org/10.1038/s41567-022-01873-9} {\bibfield  {journal}
  {\bibinfo  {journal} {Nat. Phys.}\ }\textbf {\bibinfo {volume} {19}},\
  \bibinfo {pages} {184} (\bibinfo {year} {2023})}.\BibitemShut {Stop}%

\bibitem [{\citenamefont {Berta}\ \emph {et~al.}(2023)\citenamefont {Berta},
  \citenamefont {Brand{\~a}o}, \citenamefont {Gour}, \citenamefont {Lami},
  \citenamefont {Plenio}, \citenamefont {Regula},\ and\ \citenamefont
  {Tomamichel}}]{gap}%
  \BibitemOpen
  \bibfield  {author} {\bibinfo {author} {\bibfnamefont {M.}~\bibnamefont
  {Berta}}, \bibinfo {author} {\bibfnamefont {F.~G. S.~L.}\ \bibnamefont
  {Brand{\~a}o}}, \bibinfo {author} {\bibfnamefont {G.}~\bibnamefont {Gour}},
  \bibinfo {author} {\bibfnamefont {L.}~\bibnamefont {Lami}}, \bibinfo {author}
  {\bibfnamefont {M.~B.}\ \bibnamefont {Plenio}}, \bibinfo {author}
  {\bibfnamefont {B.}~\bibnamefont {Regula}}, \ and\ \bibinfo {author}
  {\bibfnamefont {M.}~\bibnamefont {Tomamichel}},\ }\bibfield  {title} {\emph
  {\bibinfo {title} {On a gap in the proof of the generalised quantum
  {{Stein}}'s lemma and its consequences for the reversibility of quantum
  resources},}\ }\href {http://dx.doi.org/10.22331/q-2023-09-07-1103}
  {\bibfield  {journal} {\bibinfo  {journal} {Quantum}\ }\textbf {\bibinfo
  {volume} {7}},\ \bibinfo {pages} {1103} (\bibinfo {year}
  {2023})}.\BibitemShut {Stop}%

\bibitem [{\citenamefont {Chitambar}\ \emph {et~al.}(2020)\citenamefont
  {Chitambar}, \citenamefont {de~Vicente}, \citenamefont {Girard},\ and\
  \citenamefont {Gour}}]{Chitambar2020}%
  \BibitemOpen
  \bibfield  {author} {\bibinfo {author} {\bibfnamefont {E.}~\bibnamefont
  {Chitambar}}, \bibinfo {author} {\bibfnamefont {J.~I.}\ \bibnamefont
  {de~Vicente}}, \bibinfo {author} {\bibfnamefont {M.~W.}\ \bibnamefont
  {Girard}}, \ and\ \bibinfo {author} {\bibfnamefont {G.}~\bibnamefont
  {Gour}},\ }\bibfield  {title} {\emph {\bibinfo {title} {Entanglement
  manipulation beyond local operations and classical communication},}\ }\href
  {https://doi.org/10.1063/1.5124109} {\bibfield  {journal} {\bibinfo
  {journal} {J. Math. Phys.}\ }\textbf {\bibinfo {volume} {61}},\ \bibinfo
  {pages} {042201} (\bibinfo {year} {2020})}.\BibitemShut {Stop}%

\bibitem [{\citenamefont {Vedral}\ \emph {et~al.}(1997)\citenamefont {Vedral},
  \citenamefont {Plenio}, \citenamefont {Rippin},\ and\ \citenamefont
  {Knight}}]{Vedral1997}%
  \BibitemOpen
  \bibfield  {author} {\bibinfo {author} {\bibfnamefont {V.}~\bibnamefont
  {Vedral}}, \bibinfo {author} {\bibfnamefont {M.~B.}\ \bibnamefont {Plenio}},
  \bibinfo {author} {\bibfnamefont {M.~A.}\ \bibnamefont {Rippin}}, \ and\
  \bibinfo {author} {\bibfnamefont {P.~L.}\ \bibnamefont {Knight}},\ }\bibfield
   {title} {\emph {\bibinfo {title} {Quantifying entanglement},}\ }\href
  {http://dx.doi.org/10.1103/PhysRevLett.78.2275} {\bibfield  {journal}
  {\bibinfo  {journal} {Phys. Rev. Lett.}\ }\textbf {\bibinfo {volume} {78}},\
  \bibinfo {pages} {2275} (\bibinfo {year} {1997})}.\BibitemShut {Stop}%

\bibitem [{\citenamefont {Vedral}\ and\ \citenamefont
  {Plenio}(1998)}]{Vedral1998}%
  \BibitemOpen
  \bibfield  {author} {\bibinfo {author} {\bibfnamefont {V.}~\bibnamefont
  {Vedral}}\ and\ \bibinfo {author} {\bibfnamefont {M.~B.}\ \bibnamefont
  {Plenio}},\ }\bibfield  {title} {\emph {\bibinfo {title} {Entanglement
  measures and purification procedures},}\ }\href
  {http://dx.doi.org/10.1103/PhysRevA.57.1619} {\bibfield  {journal} {\bibinfo
  {journal} {Phys. Rev. A}\ }\textbf {\bibinfo {volume} {57}},\ \bibinfo
  {pages} {1619} (\bibinfo {year} {1998})}.\BibitemShut {Stop}%

\bibitem [{\citenamefont {Piani}(2009)}]{Piani2009}%
  \BibitemOpen
  \bibfield  {author} {\bibinfo {author} {\bibfnamefont {M.}~\bibnamefont
  {Piani}},\ }\bibfield  {title} {\emph {\bibinfo {title} {Relative entropy of
  entanglement and restricted measurements},}\ }\href
  {http://dx.doi.org/10.1103/PhysRevLett.103.160504} {\bibfield  {journal}
  {\bibinfo  {journal} {Phys. Rev. Lett.}\ }\textbf {\bibinfo {volume} {103}},\
  \bibinfo {pages} {160504} (\bibinfo {year} {2009})}.\BibitemShut {Stop}%

\bibitem [{\citenamefont {Brand{\~a}o}\ \emph {et~al.}(2011)\citenamefont
  {Brand{\~a}o}, \citenamefont {Christandl},\ and\ \citenamefont
  {Yard}}]{faithful}%
  \BibitemOpen
  \bibfield  {author} {\bibinfo {author} {\bibfnamefont {F.~G. S.~L.}\
  \bibnamefont {Brand{\~a}o}}, \bibinfo {author} {\bibfnamefont
  {M.}~\bibnamefont {Christandl}}, \ and\ \bibinfo {author} {\bibfnamefont
  {J.}~\bibnamefont {Yard}},\ }\bibfield  {title} {\emph {\bibinfo {title}
  {Faithful squashed entanglement},}\ }\href
  {http://dx.doi.org/https://doi.org/10.1007/s00220-011-1302-1} {\bibfield
  {journal} {\bibinfo  {journal} {Commun. Math. Phys.}\ }\textbf {\bibinfo
  {volume} {306}},\ \bibinfo {pages} {805} (\bibinfo {year}
  {2011})}.\BibitemShut {Stop}%

\bibitem [{\citenamefont {Li}\ and\ \citenamefont {Winter}(2014)}]{rel-ent-sq}%
  \BibitemOpen
  \bibfield  {author} {\bibinfo {author} {\bibfnamefont {K.}~\bibnamefont
  {Li}}\ and\ \bibinfo {author} {\bibfnamefont {A.}~\bibnamefont {Winter}},\
  }\bibfield  {title} {\emph {\bibinfo {title} {Relative entropy and squashed
  entanglement},}\ }\href@noop {} {\bibfield  {journal} {\bibinfo  {journal}
  {Commun. Math. Phys.}\ }\textbf {\bibinfo {volume} {326}},\ \bibinfo {pages}
  {63} (\bibinfo {year} {2014})}.\BibitemShut {Stop}%

\bibitem [{\citenamefont {Berta}\ and\ \citenamefont
  {Tomamichel}(2024)}]{Berta2023}%
  \BibitemOpen
  \bibfield  {author} {\bibinfo {author} {\bibfnamefont {M.}~\bibnamefont
  {Berta}}\ and\ \bibinfo {author} {\bibfnamefont {M.}~\bibnamefont
  {Tomamichel}},\ }\bibfield  {title} {\emph {\bibinfo {title} {Entanglement
  monogamy via multivariate trace inequalities},}\ }\href
  {https://doi.org/10.1007/s00220-023-04920-5} {\bibfield  {journal} {\bibinfo
  {journal} {Commun. Math. Phys.}\ }\textbf {\bibinfo {volume} {405}},\
  \bibinfo {pages} {29} (\bibinfo {year} {2024})}.\BibitemShut {Stop}%

\bibitem [{\citenamefont {Brand{\~a}o}\ \emph {et~al.}(2020)\citenamefont
  {Brand{\~a}o}, \citenamefont {Harrow}, \citenamefont {Lee},\ and\
  \citenamefont {Peres}}]{brandao_adversarial}%
  \BibitemOpen
  \bibfield  {author} {\bibinfo {author} {\bibfnamefont {F.~G. S.~L.}\
  \bibnamefont {Brand{\~a}o}}, \bibinfo {author} {\bibfnamefont {A.~W.}\
  \bibnamefont {Harrow}}, \bibinfo {author} {\bibfnamefont {J.~R.}\
  \bibnamefont {Lee}}, \ and\ \bibinfo {author} {\bibfnamefont
  {Y.}~\bibnamefont {Peres}},\ }\bibfield  {title} {\emph {\bibinfo {title}
  {Adversarial hypothesis testing and a quantum {S}tein's lemma for restricted
  measurements},}\ }\href {http://dx.doi.org/10.1109/TIT.2020.2979704}
  {\bibfield  {journal} {\bibinfo  {journal} {IEEE Trans. Inf. Theory}\
  }\textbf {\bibinfo {volume} {66}},\ \bibinfo {pages} {5037} (\bibinfo {year}
  {2020})}.\BibitemShut {Stop}%

\bibitem [{\citenamefont {Lami}\ \emph {et~al.}(2024)\citenamefont {Lami},
  \citenamefont {Regula},\ and\ \citenamefont {Streltsov}}]{catboundent}%
  \BibitemOpen
  \bibfield  {author} {\bibinfo {author} {\bibfnamefont {L.}~\bibnamefont
  {Lami}}, \bibinfo {author} {\bibfnamefont {B.}~\bibnamefont {Regula}}, \ and\
  \bibinfo {author} {\bibfnamefont {A.}~\bibnamefont {Streltsov}},\ }\bibfield
  {title} {\emph {\bibinfo {title} {No-go theorem for entanglement distillation
  using catalysis},}\ }\href {http://dx.doi.org/10.1103/PhysRevA.109.L050401}
  {\bibfield  {journal} {\bibinfo  {journal} {Phys. Rev. A}\ }\textbf {\bibinfo
  {volume} {109}},\ \bibinfo {pages} {L050401} (\bibinfo {year}
  {2024})}.\BibitemShut {Stop}%

\bibitem [{\citenamefont {Vollbrecht}\ and\ \citenamefont
  {Werner}(2001)}]{Werner-symmetry}%
  \BibitemOpen
  \bibfield  {author} {\bibinfo {author} {\bibfnamefont {K.~G.~H.}\
  \bibnamefont {Vollbrecht}}\ and\ \bibinfo {author} {\bibfnamefont {R.~F.}\
  \bibnamefont {Werner}},\ }\bibfield  {title} {\emph {\bibinfo {title}
  {Entanglement measures under symmetry},}\ }\href@noop {} {\bibfield
  {journal} {\bibinfo  {journal} {Phys. Rev. A}\ }\textbf {\bibinfo {volume}
  {64}},\ \bibinfo {pages} {062307} (\bibinfo {year} {2001})}.\BibitemShut
  {Stop}%

\bibitem [{\citenamefont {Shor}(2004)}]{Shor2004}%
  \BibitemOpen
  \bibfield  {author} {\bibinfo {author} {\bibfnamefont {P.~W.}\ \bibnamefont
  {Shor}},\ }\bibfield  {title} {\emph {\bibinfo {title} {Equivalence of
  additivity questions in quantum information theory},}\ }\href
  {http://dx.doi.org/10.1007/s00220-004-1071-1} {\bibfield  {journal} {\bibinfo
   {journal} {Commun. Math. Phys.}\ }\textbf {\bibinfo {volume} {246}},\
  \bibinfo {pages} {473} (\bibinfo {year} {2004})}.\BibitemShut {Stop}%

\bibitem [{\citenamefont {Hayden}\ and\ \citenamefont
  {Winter}(2008)}]{Hayden-p>1}%
  \BibitemOpen
  \bibfield  {author} {\bibinfo {author} {\bibfnamefont {P.}~\bibnamefont
  {Hayden}}\ and\ \bibinfo {author} {\bibfnamefont {A.}~\bibnamefont
  {Winter}},\ }\bibfield  {title} {\emph {\bibinfo {title} {Counterexamples to
  the maximal p-norm multiplicativity conjecture for all p > 1},}\ }\href
  {http://dx.doi.org/10.1007/s00220-008-0624-0} {\bibfield  {journal} {\bibinfo
   {journal} {Commun. Math. Phys.}\ }\textbf {\bibinfo {volume} {284}},\
  \bibinfo {pages} {263} (\bibinfo {year} {2008})}.\BibitemShut {Stop}%

\bibitem [{\citenamefont {Cubitt}\ \emph {et~al.}(2008)\citenamefont {Cubitt},
  \citenamefont {Harrow}, \citenamefont {Leung}, \citenamefont {Montanaro},\
  and\ \citenamefont {Winter}}]{Cubitt-p-->0}%
  \BibitemOpen
  \bibfield  {author} {\bibinfo {author} {\bibfnamefont {T.}~\bibnamefont
  {Cubitt}}, \bibinfo {author} {\bibfnamefont {A.~W.}\ \bibnamefont {Harrow}},
  \bibinfo {author} {\bibfnamefont {D.}~\bibnamefont {Leung}}, \bibinfo
  {author} {\bibfnamefont {A.}~\bibnamefont {Montanaro}}, \ and\ \bibinfo
  {author} {\bibfnamefont {A.}~\bibnamefont {Winter}},\ }\bibfield  {title}
  {\emph {\bibinfo {title} {Counterexamples to additivity of minimum output
  p-{R}{\'e}nyi entropy for p close to 0},}\ }\href
  {http://dx.doi.org/10.1007/s00220-008-0625-z} {\bibfield  {journal} {\bibinfo
   {journal} {Commun. Math. Phys.}\ }\textbf {\bibinfo {volume} {284}},\
  \bibinfo {pages} {281} (\bibinfo {year} {2008})}.\BibitemShut {Stop}%

\bibitem [{\citenamefont {Hastings}(2009)}]{Hastings2008}%
  \BibitemOpen
  \bibfield  {author} {\bibinfo {author} {\bibfnamefont {M.~B.}\ \bibnamefont
  {Hastings}},\ }\bibfield  {title} {\emph {\bibinfo {title} {Superadditivity
  of communication capacity using entangled inputs},}\ }\href
  {http://dx.doi.org/10.1038/nphys1224} {\bibfield  {journal} {\bibinfo
  {journal} {Nat. Phys.}\ }\textbf {\bibinfo {volume} {5}},\ \bibinfo {pages}
  {255} (\bibinfo {year} {2009})}.\BibitemShut {Stop}%

\bibitem [{\citenamefont {Smith}\ and\ \citenamefont
  {Yard}(2008)}]{superactivation}%
  \BibitemOpen
  \bibfield  {author} {\bibinfo {author} {\bibfnamefont {G.}~\bibnamefont
  {Smith}}\ and\ \bibinfo {author} {\bibfnamefont {J.}~\bibnamefont {Yard}},\
  }\bibfield  {title} {\emph {\bibinfo {title} {Quantum communication with
  zero-capacity channels},}\ }\href {http://dx.doi.org/10.1126/science.1162242}
  {\bibfield  {journal} {\bibinfo  {journal} {Science}\ }\textbf {\bibinfo
  {volume} {321}},\ \bibinfo {pages} {1812} (\bibinfo {year}
  {2008})}.\BibitemShut {Stop}%

\bibitem [{\citenamefont {Lloyd}(1997)}]{Lloyd-S-D}%
  \BibitemOpen
  \bibfield  {author} {\bibinfo {author} {\bibfnamefont {S.}~\bibnamefont
  {Lloyd}},\ }\bibfield  {title} {\emph {\bibinfo {title} {Capacity of the
  noisy quantum channel},}\ }\href {http://dx.doi.org/10.1103/PhysRevA.55.1613}
  {\bibfield  {journal} {\bibinfo  {journal} {Phys. Rev. A}\ }\textbf {\bibinfo
  {volume} {55}},\ \bibinfo {pages} {1613} (\bibinfo {year}
  {1997})}.\BibitemShut {Stop}%

\bibitem [{\citenamefont {Shor}(2002)}]{L-Shor-D}%
  \BibitemOpen
  \bibfield  {author} {\bibinfo {author} {\bibfnamefont {P.}~\bibnamefont
  {Shor}},\ }\emph {\bibinfo {title} {The quantum channel capacity and coherent
  information},}\ \bibinfo {note} {Lecture notes, {MSRI Workshop on Quantum
  Computation}} (\bibinfo {year} {2002}).\BibitemShut {Stop}%

\bibitem [{\citenamefont {Devetak}(2005)}]{L-S-Devetak}%
  \BibitemOpen
  \bibfield  {author} {\bibinfo {author} {\bibfnamefont {I.}~\bibnamefont
  {Devetak}},\ }\bibfield  {title} {\emph {\bibinfo {title} {The private
  classical capacity and quantum capacity of a quantum channel},}\ }\href
  {http://dx.doi.org/10.1109/TIT.2004.839515} {\bibfield  {journal} {\bibinfo
  {journal} {IEEE Trans. Inf. Theory}\ }\textbf {\bibinfo {volume} {51}},\
  \bibinfo {pages} {44} (\bibinfo {year} {2005})}.\BibitemShut {Stop}%

\bibitem [{\citenamefont {Holevo}(1998)}]{Holevo-S-W}%
  \BibitemOpen
  \bibfield  {author} {\bibinfo {author} {\bibfnamefont {A.~S.}\ \bibnamefont
  {Holevo}},\ }\bibfield  {title} {\emph {\bibinfo {title} {The capacity of the
  quantum channel with general signal states},}\ }\href
  {http://dx.doi.org/10.1109/18.651037} {\bibfield  {journal} {\bibinfo
  {journal} {IEEE Trans. Inf. Theory}\ }\textbf {\bibinfo {volume} {44}},\
  \bibinfo {pages} {269} (\bibinfo {year} {1998})}.\BibitemShut {Stop}%

\bibitem [{\citenamefont {Schumacher}\ and\ \citenamefont
  {Westmoreland}(1997)}]{H-Schumacher-Westmoreland}%
  \BibitemOpen
  \bibfield  {author} {\bibinfo {author} {\bibfnamefont {B.}~\bibnamefont
  {Schumacher}}\ and\ \bibinfo {author} {\bibfnamefont {M.~D.}\ \bibnamefont
  {Westmoreland}},\ }\bibfield  {title} {\emph {\bibinfo {title} {Sending
  classical information via noisy quantum channels},}\ }\href
  {http://dx.doi.org/10.1103/PhysRevA.56.131} {\bibfield  {journal} {\bibinfo
  {journal} {Phys. Rev. A}\ }\textbf {\bibinfo {volume} {56}},\ \bibinfo
  {pages} {131} (\bibinfo {year} {1997})}.\BibitemShut {Stop}%

\bibitem [{\citenamefont {Brand{\~a}o}\ and\ \citenamefont
  {Plenio}(2010{\natexlab{b}})}]{Brandao2010}%
  \BibitemOpen
  \bibfield  {author} {\bibinfo {author} {\bibfnamefont {F.~G. S.~L.}\
  \bibnamefont {Brand{\~a}o}}\ and\ \bibinfo {author} {\bibfnamefont {M.~B.}\
  \bibnamefont {Plenio}},\ }\bibfield  {title} {\emph {\bibinfo {title} {A
  generalization of quantum {S}tein's lemma},}\ }\href
  {http://dx.doi.org/10.1007/s00220-010-1005-z} {\bibfield  {journal} {\bibinfo
   {journal} {Commun. Math. Phys.}\ }\textbf {\bibinfo {volume} {295}},\
  \bibinfo {pages} {791} (\bibinfo {year} {2010}{\natexlab{b}})}.\BibitemShut
  {Stop}%

\bibitem [{\citenamefont {Berta}\ \emph {et~al.}(2024)\citenamefont {Berta},
  \citenamefont {Brand\~{a}o}, \citenamefont {Gour}, \citenamefont {Lami},
  \citenamefont {Plenio}, \citenamefont {Regula},\ and\ \citenamefont
  {Tomamichel}}]{gap-comment}%
  \BibitemOpen
  \bibfield  {author} {\bibinfo {author} {\bibfnamefont {M.}~\bibnamefont
  {Berta}}, \bibinfo {author} {\bibfnamefont {F.~G. S.~L.}\ \bibnamefont
  {Brand\~{a}o}}, \bibinfo {author} {\bibfnamefont {G.}~\bibnamefont {Gour}},
  \bibinfo {author} {\bibfnamefont {L.}~\bibnamefont {Lami}}, \bibinfo {author}
  {\bibfnamefont {M.~B.}\ \bibnamefont {Plenio}}, \bibinfo {author}
  {\bibfnamefont {B.}~\bibnamefont {Regula}}, \ and\ \bibinfo {author}
  {\bibfnamefont {M.}~\bibnamefont {Tomamichel}},\ }\bibfield  {title} {\emph
  {\bibinfo {title} {The tangled state of quantum hypothesis testing},}\ }\href
  {http://dx.doi.org/10.1038/s41567-023-02289-9} {\bibfield  {journal}
  {\bibinfo  {journal} {Nat. Phys.}\ }\textbf {\bibinfo {volume} {20}},\
  \bibinfo {pages} {172} (\bibinfo {year} {2024})}.\BibitemShut {Stop}%

\bibitem [{\citenamefont {Aaronson}\ \emph {et~al.}(2008)\citenamefont
  {Aaronson}, \citenamefont {Beigi}, \citenamefont {Drucker}, \citenamefont
  {Fefferman},\ and\ \citenamefont {Shor}}]{Aaronson2008}%
  \BibitemOpen
  \bibfield  {author} {\bibinfo {author} {\bibfnamefont {S.}~\bibnamefont
  {Aaronson}}, \bibinfo {author} {\bibfnamefont {S.}~\bibnamefont {Beigi}},
  \bibinfo {author} {\bibfnamefont {A.}~\bibnamefont {Drucker}}, \bibinfo
  {author} {\bibfnamefont {B.}~\bibnamefont {Fefferman}}, \ and\ \bibinfo
  {author} {\bibfnamefont {P.}~\bibnamefont {Shor}},\ }\bibfield  {title}
  {\emph {\bibinfo {title} {The power of unentanglement},}\ }in\ \href
  {http://dx.doi.org/10.1109/CCC.2008.5} {\emph {\bibinfo {booktitle} {Proc.
  23rd IEEE Annual Conf. Comput. Complex.}}},\ \bibinfo {series and number}
  {CCC '08}\ (\bibinfo  {publisher} {IEEE Computer Society},\ \bibinfo
  {address} {USA},\ \bibinfo {year} {2008})\ pp.\ \bibinfo {pages}
  {223--236}.\BibitemShut {Stop}%

\bibitem [{\citenamefont {Umegaki}(1962)}]{Umegaki1962}%
  \BibitemOpen
  \bibfield  {author} {\bibinfo {author} {\bibfnamefont {H.}~\bibnamefont
  {Umegaki}},\ }\bibfield  {title} {\emph {\bibinfo {title} {{Conditional
  expectation in an operator algebra. IV. Entropy and information}},}\ }\href
  {http://dx.doi.org/10.2996/kmj/1138844604} {\bibfield  {journal} {\bibinfo
  {journal} {Kodai Math. Sem. Rep.}\ }\textbf {\bibinfo {volume} {14}},\
  \bibinfo {pages} {59} (\bibinfo {year} {1962})}.\BibitemShut {Stop}%

\bibitem [{\citenamefont {Christandl}(2006)}]{MatthiasPhD}%
  \BibitemOpen
  \bibfield  {author} {\bibinfo {author} {\bibfnamefont {M.}~\bibnamefont
  {Christandl}},\ }\emph {\bibinfo {title} {The Structure of Bipartite Quantum
  States - Insights from Group Theory and Cryptography}},\ \href
  {https://arxiv.org/abs/quant-ph/0604183} {Ph.D. thesis},\ \bibinfo  {school}
  {University of Cambridge} (\bibinfo {year} {2006}).\BibitemShut {Stop}%

\bibitem [{\citenamefont {Hayashi}(2017)}]{HAYASHI}%
  \BibitemOpen
  \bibfield  {author} {\bibinfo {author} {\bibfnamefont {M.}~\bibnamefont
  {Hayashi}},\ }\href@noop {} {\emph {\bibinfo {title} {Quantum Information
  Theory: Mathematical Foundation}}},\ \bibinfo {edition} {2nd}\ ed.,\ Graduate
  Texts in Physics\ (\bibinfo  {publisher} {Springer Berlin Heidelberg},\
  \bibinfo {year} {2017}).\BibitemShut {Stop}%

\bibitem [{\citenamefont {Vidal}\ and\ \citenamefont
  {Cirac}(2001)}]{Vidal-irreversibility}%
  \BibitemOpen
  \bibfield  {author} {\bibinfo {author} {\bibfnamefont {G.}~\bibnamefont
  {Vidal}}\ and\ \bibinfo {author} {\bibfnamefont {J.~I.}\ \bibnamefont
  {Cirac}},\ }\bibfield  {title} {\emph {\bibinfo {title} {Irreversibility in
  asymptotic manipulations of entanglement},}\ }\href
  {http://dx.doi.org/10.1103/PhysRevLett.86.5803} {\bibfield  {journal}
  {\bibinfo  {journal} {Phys. Rev. Lett.}\ }\textbf {\bibinfo {volume} {86}},\
  \bibinfo {pages} {5803} (\bibinfo {year} {2001})}.\BibitemShut {Stop}%

\bibitem [{\citenamefont {Wang}\ and\ \citenamefont {Wilde}(2023)}]{Wang2023}%
  \BibitemOpen
  \bibfield  {author} {\bibinfo {author} {\bibfnamefont {X.}~\bibnamefont
  {Wang}}\ and\ \bibinfo {author} {\bibfnamefont {M.~M.}\ \bibnamefont
  {Wilde}},\ }\bibfield  {title} {\emph {\bibinfo {title} {Exact entanglement
  cost of quantum states and channels under
  positive-partial-transpose-preserving operations},}\ }\href
  {http://dx.doi.org/10.1103/PhysRevA.107.012429} {\bibfield  {journal}
  {\bibinfo  {journal} {Phys. Rev. A}\ }\textbf {\bibinfo {volume} {107}},\
  \bibinfo {pages} {012429} (\bibinfo {year} {2023})}.\BibitemShut {Stop}%

\bibitem [{\citenamefont {Werner}(1989)}]{Werner}%
  \BibitemOpen
  \bibfield  {author} {\bibinfo {author} {\bibfnamefont {R.~F.}\ \bibnamefont
  {Werner}},\ }\bibfield  {title} {\emph {\bibinfo {title} {Quantum states with
  {E}instein-{P}odolsky-{R}osen correlations admitting a hidden-variable
  model},}\ }\href@noop {} {\bibfield  {journal} {\bibinfo  {journal} {Phys.
  Rev. A}\ }\textbf {\bibinfo {volume} {40}},\ \bibinfo {pages} {4277}
  (\bibinfo {year} {1989})}.\BibitemShut {Stop}%

\bibitem [{\citenamefont {Christandl}\ \emph {et~al.}(2012)\citenamefont
  {Christandl}, \citenamefont {Schuch},\ and\ \citenamefont
  {Winter}}]{Christandl2012}%
  \BibitemOpen
  \bibfield  {author} {\bibinfo {author} {\bibfnamefont {M.}~\bibnamefont
  {Christandl}}, \bibinfo {author} {\bibfnamefont {N.}~\bibnamefont {Schuch}},
  \ and\ \bibinfo {author} {\bibfnamefont {A.}~\bibnamefont {Winter}},\
  }\bibfield  {title} {\emph {\bibinfo {title} {Entanglement of the
  antisymmetric state},}\ }\href {http://dx.doi.org/10.1007/s00220-012-1446-7}
  {\bibfield  {journal} {\bibinfo  {journal} {Commun. Math. Phys.}\ }\textbf
  {\bibinfo {volume} {311}},\ \bibinfo {pages} {397} (\bibinfo {year}
  {2012})}.\BibitemShut {Stop}%

\bibitem [{\citenamefont {Lancien}\ \emph {et~al.}(2016)\citenamefont
  {Lancien}, \citenamefont {Di~Martino}, \citenamefont {Huber}, \citenamefont
  {Piani}, \citenamefont {Adesso},\ and\ \citenamefont {Winter}}]{lancien2016}%
  \BibitemOpen
  \bibfield  {author} {\bibinfo {author} {\bibfnamefont {C.}~\bibnamefont
  {Lancien}}, \bibinfo {author} {\bibfnamefont {S.}~\bibnamefont {Di~Martino}},
  \bibinfo {author} {\bibfnamefont {M.}~\bibnamefont {Huber}}, \bibinfo
  {author} {\bibfnamefont {M.}~\bibnamefont {Piani}}, \bibinfo {author}
  {\bibfnamefont {G.}~\bibnamefont {Adesso}}, \ and\ \bibinfo {author}
  {\bibfnamefont {A.}~\bibnamefont {Winter}},\ }\bibfield  {title} {\emph
  {\bibinfo {title} {Should entanglement measures be monogamous or faithful?}}\
  }\href@noop {} {\bibfield  {journal} {\bibinfo  {journal} {Phys. Rev. Lett.}\
  }\textbf {\bibinfo {volume} {117}},\ \bibinfo {pages} {060501} (\bibinfo
  {year} {2016})}.\BibitemShut {Stop}%

\bibitem [{\citenamefont {Plenio}(2005{\natexlab{b}})}]{OpenProblemArxiv}%
  \BibitemOpen
  \bibfield  {author} {\bibinfo {author} {\bibfnamefont {M.~B.}\ \bibnamefont
  {Plenio}},\ }\bibfield  {title} {\emph {\bibinfo {title} {Problem~20 in
  `{S}ome {Open Problems} in {Quantum Information Theory}', {O.\ Krueger and
  R.\ F.\ Werner}},}\ }\href
  {http://dx.doi.org/10.48550/arXiv.quant-ph/0504166} {\bibfield  {journal}
  {\bibinfo  {journal} {{p}reprint arXiv:quant-ph/0504166}} (\bibinfo {year}
  {2005}{\natexlab{b}})}.\BibitemShut {Stop}%

\bibitem [{\citenamefont {Chitambar}\ and\ \citenamefont
  {Gour}(2019)}]{RT-review}%
  \BibitemOpen
  \bibfield  {author} {\bibinfo {author} {\bibfnamefont {E.}~\bibnamefont
  {Chitambar}}\ and\ \bibinfo {author} {\bibfnamefont {G.}~\bibnamefont
  {Gour}},\ }\bibfield  {title} {\emph {\bibinfo {title} {Quantum resource
  theories},}\ }\href {http://dx.doi.org/10.1103/RevModPhys.91.025001}
  {\bibfield  {journal} {\bibinfo  {journal} {Rev. Mod. Phys.}\ }\textbf
  {\bibinfo {volume} {91}},\ \bibinfo {pages} {025001} (\bibinfo {year}
  {2019})}.\BibitemShut {Stop}%

\bibitem [{\citenamefont {Audenaert}\ \emph {et~al.}(2003)\citenamefont
  {Audenaert}, \citenamefont {Plenio},\ and\ \citenamefont
  {Eisert}}]{Martin-exact-PPT}%
  \BibitemOpen
  \bibfield  {author} {\bibinfo {author} {\bibfnamefont {K.}~\bibnamefont
  {Audenaert}}, \bibinfo {author} {\bibfnamefont {M.~B.}\ \bibnamefont
  {Plenio}}, \ and\ \bibinfo {author} {\bibfnamefont {J.}~\bibnamefont
  {Eisert}},\ }\bibfield  {title} {\emph {\bibinfo {title} {Entanglement cost
  under positive-partial-transpose-preserving operations},}\ }\href
  {http://dx.doi.org/10.1103/PhysRevLett.90.027901} {\bibfield  {journal}
  {\bibinfo  {journal} {Phys. Rev. Lett.}\ }\textbf {\bibinfo {volume} {90}},\
  \bibinfo {pages} {027901} (\bibinfo {year} {2003})}.\BibitemShut {Stop}%

\bibitem [{\citenamefont {Regula}\ and\ \citenamefont
  {Lami}(2024)}]{regula_2023}%
  \BibitemOpen
  \bibfield  {author} {\bibinfo {author} {\bibfnamefont {B.}~\bibnamefont
  {Regula}}\ and\ \bibinfo {author} {\bibfnamefont {L.}~\bibnamefont {Lami}},\
  }\bibfield  {title} {\emph {\bibinfo {title} {Reversibility of quantum
  resources through probabilistic protocols},}\ }\href
  {http://dx.doi.org/10.1038/s41467-024-47243-2} {\bibfield  {journal}
  {\bibinfo  {journal} {Nat. Commun.}\ }\textbf {\bibinfo {volume} {15}},\
  \bibinfo {pages} {3096} (\bibinfo {year} {2024})}.\BibitemShut {Stop}%

\bibitem [{\citenamefont {Hayashi}\ and\ \citenamefont
  {Yamasaki}(2024)}]{Hayashi-Stein}%
  \BibitemOpen
  \bibfield  {author} {\bibinfo {author} {\bibfnamefont {M.}~\bibnamefont
  {Hayashi}}\ and\ \bibinfo {author} {\bibfnamefont {H.}~\bibnamefont
  {Yamasaki}},\ }\bibfield  {title} {\emph {\bibinfo {title} {Generalized
  quantum {Stein}'s lemma and second law of quantum resource theories},}\
  }\href {http://dx.doi.org/10.48550/arXiv.2408.02722} {\bibfield  {journal}
  {\bibinfo  {journal} {{p}reprint arXiv:2408.02722}} (\bibinfo {year}
  {2024})}.\BibitemShut {Stop}%

\bibitem [{\citenamefont {Lami}(2024)}]{GQSL}%
  \BibitemOpen
  \bibfield  {author} {\bibinfo {author} {\bibfnamefont {L.}~\bibnamefont
  {Lami}},\ }\bibfield  {title} {\emph {\bibinfo {title} {A solution of the
  generalised quantum stein's lemma},}\ }\href
  {http://dx.doi.org/10.48550/arXiv.2408.06410} {\bibfield  {journal} {\bibinfo
   {journal} {{p}reprint arXiv:2408.06410}} (\bibinfo {year}
  {2024})}.\BibitemShut {Stop}%

\bibitem [{\citenamefont {Cheng}\ \emph {et~al.}(2023)\citenamefont {Cheng},
  \citenamefont {Winter},\ and\ \citenamefont {Yu}}]{Cheng2020}%
  \BibitemOpen
  \bibfield  {author} {\bibinfo {author} {\bibfnamefont {H.-C.}\ \bibnamefont
  {Cheng}}, \bibinfo {author} {\bibfnamefont {A.}~\bibnamefont {Winter}}, \
  and\ \bibinfo {author} {\bibfnamefont {N.}~\bibnamefont {Yu}},\ }\bibfield
  {title} {\emph {\bibinfo {title} {Discrimination of {{Quantum States Under
  Locality Constraints}} in the {{Many-Copy Setting}}},}\ }\href
  {http://dx.doi.org/10.1007/s00220-023-04836-0} {\bibfield  {journal}
  {\bibinfo  {journal} {Commun. Math. Phys.}\ }\textbf {\bibinfo {volume}
  {404}},\ \bibinfo {pages} {151} (\bibinfo {year} {2023})}.\BibitemShut
  {Stop}%

\bibitem [{\citenamefont {Kullback}\ and\ \citenamefont
  {Leibler}(1951)}]{Kullback-Leibler}%
  \BibitemOpen
  \bibfield  {author} {\bibinfo {author} {\bibfnamefont {S.}~\bibnamefont
  {Kullback}}\ and\ \bibinfo {author} {\bibfnamefont {R.~A.}\ \bibnamefont
  {Leibler}},\ }\bibfield  {title} {\emph {\bibinfo {title} {On information and
  sufficiency},}\ }\href {http://dx.doi.org/10.1214/aoms/1177729694} {\bibfield
   {journal} {\bibinfo  {journal} {Ann. Math. Statist.}\ }\textbf {\bibinfo
  {volume} {22}},\ \bibinfo {pages} {79} (\bibinfo {year} {1951})}.\BibitemShut
  {Stop}%

\bibitem [{\citenamefont {Hiai}\ and\ \citenamefont {Petz}(1991)}]{Hiai1991}%
  \BibitemOpen
  \bibfield  {author} {\bibinfo {author} {\bibfnamefont {F.}~\bibnamefont
  {Hiai}}\ and\ \bibinfo {author} {\bibfnamefont {D.}~\bibnamefont {Petz}},\
  }\bibfield  {title} {\emph {\bibinfo {title} {The proper formula for relative
  entropy and its asymptotics in quantum probability},}\ }\href
  {http://dx.doi.org/10.1007/bf02100287} {\bibfield  {journal} {\bibinfo
  {journal} {Comm. Math. Phys.}\ }\textbf {\bibinfo {volume} {143}},\ \bibinfo
  {pages} {99} (\bibinfo {year} {1991})}.\BibitemShut {Stop}%

\bibitem [{\citenamefont {Donald}(1986)}]{Donald1986}%
  \BibitemOpen
  \bibfield  {author} {\bibinfo {author} {\bibfnamefont {M.~J.}\ \bibnamefont
  {Donald}},\ }\bibfield  {title} {\emph {\bibinfo {title} {On the relative
  entropy},}\ }\href {http://dx.doi.org/10.1007/BF01212339} {\bibfield
  {journal} {\bibinfo  {journal} {Commun. Math. Phys.}\ }\textbf {\bibinfo
  {volume} {105}},\ \bibinfo {pages} {13} (\bibinfo {year}
  {1986})}.\BibitemShut {Stop}%

\bibitem [{\citenamefont {Petz}(1986)}]{Petz-old}%
  \BibitemOpen
  \bibfield  {author} {\bibinfo {author} {\bibfnamefont {D.}~\bibnamefont
  {Petz}},\ }\bibfield  {title} {\emph {\bibinfo {title} {Sufficient
  subalgebras and the relative entropy of states of a von {N}eumann algebra},}\
  }\href {http://dx.doi.org/10.1007/BF01212345} {\bibfield  {journal} {\bibinfo
   {journal} {Commun. Math. Phys.}\ }\textbf {\bibinfo {volume} {105}},\
  \bibinfo {pages} {123} (\bibinfo {year} {1986})}.\BibitemShut {Stop}%

\bibitem [{\citenamefont {Berta}\ \emph {et~al.}(2017)\citenamefont {Berta},
  \citenamefont {Fawzi},\ and\ \citenamefont {Tomamichel}}]{Berta2017}%
  \BibitemOpen
  \bibfield  {author} {\bibinfo {author} {\bibfnamefont {M.}~\bibnamefont
  {Berta}}, \bibinfo {author} {\bibfnamefont {O.}~\bibnamefont {Fawzi}}, \ and\
  \bibinfo {author} {\bibfnamefont {M.}~\bibnamefont {Tomamichel}},\ }\bibfield
   {title} {\emph {\bibinfo {title} {On variational expressions for quantum
  relative entropies},}\ }\href {http://dx.doi.org/10.1007/s11005-017-0990-7}
  {\bibfield  {journal} {\bibinfo  {journal} {Lett. Math. Phys.}\ }\textbf
  {\bibinfo {volume} {107}},\ \bibinfo {pages} {2239} (\bibinfo {year}
  {2017})}.\BibitemShut {Stop}%

\bibitem [{\citenamefont {Ferrari}\ \emph {et~al.}(2023)\citenamefont
  {Ferrari}, \citenamefont {Lami}, \citenamefont {Theurer},\ and\ \citenamefont
  {Plenio}}]{nonclassicality}%
  \BibitemOpen
  \bibfield  {author} {\bibinfo {author} {\bibfnamefont {G.}~\bibnamefont
  {Ferrari}}, \bibinfo {author} {\bibfnamefont {L.}~\bibnamefont {Lami}},
  \bibinfo {author} {\bibfnamefont {T.}~\bibnamefont {Theurer}}, \ and\
  \bibinfo {author} {\bibfnamefont {M.~B.}\ \bibnamefont {Plenio}},\ }\bibfield
   {title} {\emph {\bibinfo {title} {Asymptotic state transformations of
  continuous variable resources},}\ }\href
  {http://dx.doi.org/10.1007/s00220-022-04523-6} {\bibfield  {journal}
  {\bibinfo  {journal} {Commun. Math. Phys.}\ }\textbf {\bibinfo {volume}
  {398}},\ \bibinfo {pages} {291} (\bibinfo {year} {2023})}.\BibitemShut
  {Stop}%

\bibitem [{\citenamefont {Datta}(2009{\natexlab{a}})}]{Datta08}%
  \BibitemOpen
  \bibfield  {author} {\bibinfo {author} {\bibfnamefont {N.}~\bibnamefont
  {Datta}},\ }\bibfield  {title} {\emph {\bibinfo {title} {Min- and
  max-relative entropies and a new entanglement monotone},}\ }\href
  {http://dx.doi.org/10.1109/TIT.2009.2018325} {\bibfield  {journal} {\bibinfo
  {journal} {IEEE Trans. Inf. Theory}\ }\textbf {\bibinfo {volume} {55}},\
  \bibinfo {pages} {2816} (\bibinfo {year} {2009}{\natexlab{a}})}.\BibitemShut
  {Stop}%

\bibitem [{\citenamefont {Lieb}\ and\ \citenamefont
  {Ruskai}(1973{\natexlab{a}})}]{lieb73a}%
  \BibitemOpen
  \bibfield  {author} {\bibinfo {author} {\bibfnamefont {E.~H.}\ \bibnamefont
  {Lieb}}\ and\ \bibinfo {author} {\bibfnamefont {M.~B.}\ \bibnamefont
  {Ruskai}},\ }\bibfield  {title} {\emph {\bibinfo {title} {A fundamental
  property of quantum-mechanical entropy},}\ }\href
  {http://dx.doi.org/10.1103/PhysRevLett.30.434} {\bibfield  {journal}
  {\bibinfo  {journal} {Phys. Rev. Lett.}\ }\textbf {\bibinfo {volume} {30}},\
  \bibinfo {pages} {434} (\bibinfo {year} {1973}{\natexlab{a}})}.\BibitemShut
  {Stop}%

\bibitem [{\citenamefont {Lieb}\ and\ \citenamefont
  {Ruskai}(1973{\natexlab{b}})}]{lieb73b}%
  \BibitemOpen
  \bibfield  {author} {\bibinfo {author} {\bibfnamefont {E.~H.}\ \bibnamefont
  {Lieb}}\ and\ \bibinfo {author} {\bibfnamefont {M.~B.}\ \bibnamefont
  {Ruskai}},\ }\bibfield  {title} {\emph {\bibinfo {title} {Proof of the strong
  subadditivity of quantum mechanical entropy},}\ }\href
  {http://dx.doi.org/10.1063/1.1666274} {\bibfield  {journal} {\bibinfo
  {journal} {J. Math. Phys.}\ }\textbf {\bibinfo {volume} {14}},\ \bibinfo
  {pages} {1938} (\bibinfo {year} {1973}{\natexlab{b}})}.\BibitemShut {Stop}%

\bibitem [{\citenamefont {Lindblad}(1975)}]{Lindblad-monotonicity}%
  \BibitemOpen
  \bibfield  {author} {\bibinfo {author} {\bibfnamefont {G.}~\bibnamefont
  {Lindblad}},\ }\bibfield  {title} {\emph {\bibinfo {title} {Completely
  positive maps and entropy inequalities},}\ }\href
  {http://dx.doi.org/10.1007/BF01609396} {\bibfield  {journal} {\bibinfo
  {journal} {Commun. Math. Phys.}\ }\textbf {\bibinfo {volume} {40}},\ \bibinfo
  {pages} {147} (\bibinfo {year} {1975})}.\BibitemShut {Stop}%

\bibitem [{\citenamefont {M\"{u}ller-Hermes}\ and\ \citenamefont
  {Reeb}(2017)}]{Alex2017}%
  \BibitemOpen
  \bibfield  {author} {\bibinfo {author} {\bibfnamefont {A.}~\bibnamefont
  {M\"{u}ller-Hermes}}\ and\ \bibinfo {author} {\bibfnamefont {D.}~\bibnamefont
  {Reeb}},\ }\bibfield  {title} {\emph {\bibinfo {title} {Monotonicity of the
  quantum relative entropy under positive maps},}\ }\href@noop {} {\bibfield
  {journal} {\bibinfo  {journal} {Ann. Henri Poincar\'{e}}\ }\textbf {\bibinfo
  {volume} {18}},\ \bibinfo {pages} {1777} (\bibinfo {year}
  {2017})}.\BibitemShut {Stop}%

\bibitem [{\citenamefont {Peres}(1996)}]{PeresPPT}%
  \BibitemOpen
  \bibfield  {author} {\bibinfo {author} {\bibfnamefont {A.}~\bibnamefont
  {Peres}},\ }\bibfield  {title} {\emph {\bibinfo {title} {{Separability
  criterion for density matrices}},}\ }\href@noop {} {\bibfield  {journal}
  {\bibinfo  {journal} {Phys. Rev. Lett.}\ }\textbf {\bibinfo {volume} {77}},\
  \bibinfo {pages} {1413} (\bibinfo {year} {1996})}.\BibitemShut {Stop}%

\bibitem [{\citenamefont {Horodecki}(1997)}]{Horodecki-PPT-entangled}%
  \BibitemOpen
  \bibfield  {author} {\bibinfo {author} {\bibfnamefont {P.}~\bibnamefont
  {Horodecki}},\ }\bibfield  {title} {\emph {\bibinfo {title} {Separability
  criterion and inseparable mixed states with positive partial
  transposition},}\ }\href@noop {} {\bibfield  {journal} {\bibinfo  {journal}
  {Phys. Lett. A}\ }\textbf {\bibinfo {volume} {232}},\ \bibinfo {pages} {333}
  (\bibinfo {year} {1997})}.\BibitemShut {Stop}%

\bibitem [{\citenamefont {Chitambar}\ and\ \citenamefont
  {Hsieh}(2014)}]{Chitambar2014}%
  \BibitemOpen
  \bibfield  {author} {\bibinfo {author} {\bibfnamefont {E.}~\bibnamefont
  {Chitambar}}\ and\ \bibinfo {author} {\bibfnamefont {M.-H.}\ \bibnamefont
  {Hsieh}},\ }\bibfield  {title} {\emph {\bibinfo {title} {Asymptotic state
  discrimination and a strict hierarchy in distinguishability norms},}\ }\href
  {http://dx.doi.org/10.1063/1.4902027} {\bibfield  {journal} {\bibinfo
  {journal} {J. Math. Phys.}\ }\textbf {\bibinfo {volume} {55}},\ \bibinfo
  {pages} {112204} (\bibinfo {year} {2014})}.\BibitemShut {Stop}%

\bibitem [{\citenamefont {Horodecki}\ \emph {et~al.}(2009)\citenamefont
  {Horodecki}, \citenamefont {Horodecki}, \citenamefont {Horodecki},\ and\
  \citenamefont {Horodecki}}]{Horodecki-review}%
  \BibitemOpen
  \bibfield  {author} {\bibinfo {author} {\bibfnamefont {R.}~\bibnamefont
  {Horodecki}}, \bibinfo {author} {\bibfnamefont {P.}~\bibnamefont
  {Horodecki}}, \bibinfo {author} {\bibfnamefont {M.}~\bibnamefont
  {Horodecki}}, \ and\ \bibinfo {author} {\bibfnamefont {K.}~\bibnamefont
  {Horodecki}},\ }\bibfield  {title} {\emph {\bibinfo {title} {Quantum
  entanglement},}\ }\href@noop {} {\bibfield  {journal} {\bibinfo  {journal}
  {Rev. Mod. Phys.}\ }\textbf {\bibinfo {volume} {81}},\ \bibinfo {pages} {865}
  (\bibinfo {year} {2009})}.\BibitemShut {Stop}%

\bibitem [{\citenamefont {Datta}(2009{\natexlab{b}})}]{Datta-alias}%
  \BibitemOpen
  \bibfield  {author} {\bibinfo {author} {\bibfnamefont {N.}~\bibnamefont
  {Datta}},\ }\bibfield  {title} {\emph {\bibinfo {title} {Max-relative entropy
  of entanglement, alias log robustness},}\ }\href
  {http://dx.doi.org/10.1142/S0219749909005298} {\bibfield  {journal} {\bibinfo
   {journal} {Int. J. Quantum Inf.}\ }\textbf {\bibinfo {volume} {07}},\
  \bibinfo {pages} {475} (\bibinfo {year} {2009}{\natexlab{b}})}.\BibitemShut
  {Stop}%

\bibitem [{\citenamefont {Vidal}\ and\ \citenamefont
  {Tarrach}(1999)}]{VidalTarrach}%
  \BibitemOpen
  \bibfield  {author} {\bibinfo {author} {\bibfnamefont {G.}~\bibnamefont
  {Vidal}}\ and\ \bibinfo {author} {\bibfnamefont {R.}~\bibnamefont
  {Tarrach}},\ }\bibfield  {title} {\emph {\bibinfo {title} {Robustness of
  entanglement},}\ }\href {http://dx.doi.org/10.1103/PhysRevA.59.141}
  {\bibfield  {journal} {\bibinfo  {journal} {Phys. Rev. A}\ }\textbf {\bibinfo
  {volume} {59}},\ \bibinfo {pages} {141} (\bibinfo {year}
  {1999})}.\BibitemShut {Stop}%

\bibitem [{\citenamefont {Steiner}(2003)}]{Steiner2003}%
  \BibitemOpen
  \bibfield  {author} {\bibinfo {author} {\bibfnamefont {M.}~\bibnamefont
  {Steiner}},\ }\bibfield  {title} {\emph {\bibinfo {title} {Generalized
  robustness of entanglement},}\ }\href
  {http://dx.doi.org/10.1103/PhysRevA.67.054305} {\bibfield  {journal}
  {\bibinfo  {journal} {Phys. Rev. A}\ }\textbf {\bibinfo {volume} {67}},\
  \bibinfo {pages} {054305} (\bibinfo {year} {2003})}.\BibitemShut {Stop}%

\bibitem [{\citenamefont {Lami}\ and\ \citenamefont
  {Shirokov}(2023)}]{achievability}%
  \BibitemOpen
  \bibfield  {author} {\bibinfo {author} {\bibfnamefont {L.}~\bibnamefont
  {Lami}}\ and\ \bibinfo {author} {\bibfnamefont {M.~E.}\ \bibnamefont
  {Shirokov}},\ }\bibfield  {title} {\emph {\bibinfo {title} {Attainability and
  lower semi-continuity of the relative entropy of entanglement and variations
  on the theme},}\ }\href {http://dx.doi.org/10.1007/s00023-023-01313-1}
  {\bibfield  {journal} {\bibinfo  {journal} {Ann. Henri Poincar{\'e}}}
  (\bibinfo {year} {2023})}.\BibitemShut {Stop}%

\bibitem [{\citenamefont {Lami}\ \emph {et~al.}(2021)\citenamefont {Lami},
  \citenamefont {Regula}, \citenamefont {Takagi},\ and\ \citenamefont
  {Ferrari}}]{taming-PRA}%
  \BibitemOpen
  \bibfield  {author} {\bibinfo {author} {\bibfnamefont {L.}~\bibnamefont
  {Lami}}, \bibinfo {author} {\bibfnamefont {B.}~\bibnamefont {Regula}},
  \bibinfo {author} {\bibfnamefont {R.}~\bibnamefont {Takagi}}, \ and\ \bibinfo
  {author} {\bibfnamefont {G.}~\bibnamefont {Ferrari}},\ }\bibfield  {title}
  {\emph {\bibinfo {title} {Framework for resource quantification in
  infinite-dimensional general probabilistic theories},}\ }\href
  {http://dx.doi.org/10.1103/PhysRevA.103.032424} {\bibfield  {journal}
  {\bibinfo  {journal} {Phys. Rev. A}\ }\textbf {\bibinfo {volume} {103}},\
  \bibinfo {pages} {032424} (\bibinfo {year} {2021})}.\BibitemShut {Stop}%

\bibitem [{\citenamefont {Regula}\ \emph {et~al.}(2021)\citenamefont {Regula},
  \citenamefont {Lami}, \citenamefont {Takagi},\ and\ \citenamefont
  {Ferrari}}]{taming-PRL}%
  \BibitemOpen
  \bibfield  {author} {\bibinfo {author} {\bibfnamefont {B.}~\bibnamefont
  {Regula}}, \bibinfo {author} {\bibfnamefont {L.}~\bibnamefont {Lami}},
  \bibinfo {author} {\bibfnamefont {R.}~\bibnamefont {Takagi}}, \ and\ \bibinfo
  {author} {\bibfnamefont {G.}~\bibnamefont {Ferrari}},\ }\bibfield  {title}
  {\emph {\bibinfo {title} {Operational quantification of continuous-variable
  quantum resources},}\ }\href
  {http://dx.doi.org/10.1103/PhysRevLett.126.110403} {\bibfield  {journal}
  {\bibinfo  {journal} {Phys. Rev. Lett.}\ }\textbf {\bibinfo {volume} {126}},\
  \bibinfo {pages} {110403} (\bibinfo {year} {2021})}.\BibitemShut {Stop}%

\bibitem [{\citenamefont {Fekete}(1923)}]{Fekete1923}%
  \BibitemOpen
  \bibfield  {author} {\bibinfo {author} {\bibfnamefont {M.}~\bibnamefont
  {Fekete}},\ }\bibfield  {title} {\emph {\bibinfo {title} {{{\"U}ber die
  Verteilung der Wurzeln bei gewissen algebraischen Gleichungen mit
  ganzzahligen Koeffizienten}},}\ }\href {http://dx.doi.org/10.1007/BF01504345}
  {\bibfield  {journal} {\bibinfo  {journal} {Math. Z.}\ }\textbf {\bibinfo
  {volume} {17}},\ \bibinfo {pages} {228} (\bibinfo {year}
  {1923})}.\BibitemShut {Stop}%

\bibitem [{\citenamefont {Tomamichel}(2015)}]{TOMAMICHEL}%
  \BibitemOpen
  \bibfield  {author} {\bibinfo {author} {\bibfnamefont {M.}~\bibnamefont
  {Tomamichel}},\ }\href@noop {} {\emph {\bibinfo {title} {Quantum Information
  Processing with Finite Resources: Mathematical Foundations}}},\ Vol.~\bibinfo
  {volume} {5}\ (\bibinfo  {publisher} {Springer},\ \bibinfo {year}
  {2015}).\BibitemShut {Stop}%

\bibitem [{\citenamefont {Sion}(1958)}]{Sion}%
  \BibitemOpen
  \bibfield  {author} {\bibinfo {author} {\bibfnamefont {M.}~\bibnamefont
  {Sion}},\ }\bibfield  {title} {\emph {\bibinfo {title} {On general minimax
  theorems.}}\ }\href@noop {} {\bibfield  {journal} {\bibinfo  {journal}
  {Pacific J. Math.}\ }\textbf {\bibinfo {volume} {8}},\ \bibinfo {pages} {171}
  (\bibinfo {year} {1958})}.\BibitemShut {Stop}%

\bibitem [{\citenamefont {Tomamichel}\ and\ \citenamefont
  {Hayashi}(2013)}]{Tomamichel2013}%
  \BibitemOpen
  \bibfield  {author} {\bibinfo {author} {\bibfnamefont {M.}~\bibnamefont
  {Tomamichel}}\ and\ \bibinfo {author} {\bibfnamefont {M.}~\bibnamefont
  {Hayashi}},\ }\bibfield  {title} {\emph {\bibinfo {title} {A hierarchy of
  information quantities for finite block length analysis of quantum tasks},}\
  }\href {http://dx.doi.org/10.1109/TIT.2013.2276628} {\bibfield  {journal}
  {\bibinfo  {journal} {IEEE Trans. Inf. Theory}\ }\textbf {\bibinfo {volume}
  {59}},\ \bibinfo {pages} {7693} (\bibinfo {year} {2013})}.\BibitemShut
  {Stop}%

\bibitem [{\citenamefont {Anshu}\ \emph {et~al.}(2019)\citenamefont {Anshu},
  \citenamefont {Berta}, \citenamefont {Jain},\ and\ \citenamefont
  {Tomamichel}}]{Anshu2019}%
  \BibitemOpen
  \bibfield  {author} {\bibinfo {author} {\bibfnamefont {A.}~\bibnamefont
  {Anshu}}, \bibinfo {author} {\bibfnamefont {M.}~\bibnamefont {Berta}},
  \bibinfo {author} {\bibfnamefont {R.}~\bibnamefont {Jain}}, \ and\ \bibinfo
  {author} {\bibfnamefont {M.}~\bibnamefont {Tomamichel}},\ }\bibfield  {title}
  {\emph {\bibinfo {title} {A minimax approach to one-shot entropy
  inequalities},}\ }\href {https://doi.org/10.1063/1.5126723} {\bibfield
  {journal} {\bibinfo  {journal} {J. Math. Phys.}\ }\textbf {\bibinfo {volume}
  {60}},\ \bibinfo {pages} {122201} (\bibinfo {year} {2019})}.\BibitemShut
  {Stop}%

\bibitem [{\citenamefont {Horodecki}\ and\ \citenamefont
  {Horodecki}(1999)}]{Horodecki1999}%
  \BibitemOpen
  \bibfield  {author} {\bibinfo {author} {\bibfnamefont {M.}~\bibnamefont
  {Horodecki}}\ and\ \bibinfo {author} {\bibfnamefont {P.}~\bibnamefont
  {Horodecki}},\ }\bibfield  {title} {\emph {\bibinfo {title} {Reduction
  criterion of separability and limits for a class of distillation
  protocols},}\ }\href@noop {} {\bibfield  {journal} {\bibinfo  {journal}
  {Phys. Rev. A}\ }\textbf {\bibinfo {volume} {59}},\ \bibinfo {pages} {4206}
  (\bibinfo {year} {1999})}.\BibitemShut {Stop}%

\bibitem [{\citenamefont {Brandao}\ and\ \citenamefont
  {Datta}(2011)}]{Brandao-Datta}%
  \BibitemOpen
  \bibfield  {author} {\bibinfo {author} {\bibfnamefont {F.~G. S.~L.}\
  \bibnamefont {Brandao}}\ and\ \bibinfo {author} {\bibfnamefont
  {N.}~\bibnamefont {Datta}},\ }\bibfield  {title} {\emph {\bibinfo {title}
  {One-shot rates for entanglement manipulation under non-entangling maps},}\
  }\href {http://dx.doi.org/10.1109/TIT.2011.2104531} {\bibfield  {journal}
  {\bibinfo  {journal} {IEEE Trans. Inf. Theory}\ }\textbf {\bibinfo {volume}
  {57}},\ \bibinfo {pages} {1754} (\bibinfo {year} {2011})}.\BibitemShut
  {Stop}%

\bibitem [{\citenamefont {Rains}(1997)}]{Rains1997}%
  \BibitemOpen
  \bibfield  {author} {\bibinfo {author} {\bibfnamefont {E.~M.}\ \bibnamefont
  {Rains}},\ }\bibfield  {title} {\emph {\bibinfo {title} {Entanglement
  purification via separable superoperators},}\ }\href
  {http://dx.doi.org/10.48550/arXiv.quant-ph/9707002} {\bibfield  {journal}
  {\bibinfo  {journal} {{p}reprint arXiv:quant-ph/9707002}} (\bibinfo {year}
  {1997})}.\BibitemShut {Stop}%

\bibitem [{\citenamefont {Wang}\ and\ \citenamefont
  {Wilde}(2020)}]{Xin-exact-PPT}%
  \BibitemOpen
  \bibfield  {author} {\bibinfo {author} {\bibfnamefont {X.}~\bibnamefont
  {Wang}}\ and\ \bibinfo {author} {\bibfnamefont {M.~M.}\ \bibnamefont
  {Wilde}},\ }\bibfield  {title} {\emph {\bibinfo {title} {Cost of quantum
  entanglement simplified},}\ }\href
  {http://dx.doi.org/10.1103/PhysRevLett.125.040502} {\bibfield  {journal}
  {\bibinfo  {journal} {Phys. Rev. Lett.}\ }\textbf {\bibinfo {volume} {125}},\
  \bibinfo {pages} {040502} (\bibinfo {year} {2020})}.\BibitemShut {Stop}%

\bibitem [{\citenamefont {Audenaert}\ \emph {et~al.}(2001)\citenamefont
  {Audenaert}, \citenamefont {Eisert}, \citenamefont {Jan\'e}, \citenamefont
  {Plenio}, \citenamefont {Virmani},\ and\ \citenamefont
  {De~Moor}}]{Audenaert2001}%
  \BibitemOpen
  \bibfield  {author} {\bibinfo {author} {\bibfnamefont {K.}~\bibnamefont
  {Audenaert}}, \bibinfo {author} {\bibfnamefont {J.}~\bibnamefont {Eisert}},
  \bibinfo {author} {\bibfnamefont {E.}~\bibnamefont {Jan\'e}}, \bibinfo
  {author} {\bibfnamefont {M.~B.}\ \bibnamefont {Plenio}}, \bibinfo {author}
  {\bibfnamefont {S.}~\bibnamefont {Virmani}}, \ and\ \bibinfo {author}
  {\bibfnamefont {B.}~\bibnamefont {De~Moor}},\ }\bibfield  {title} {\emph
  {\bibinfo {title} {Asymptotic relative entropy of entanglement},}\ }\href
  {http://dx.doi.org/10.1103/PhysRevLett.87.217902} {\bibfield  {journal}
  {\bibinfo  {journal} {Phys. Rev. Lett.}\ }\textbf {\bibinfo {volume} {87}},\
  \bibinfo {pages} {217902} (\bibinfo {year} {2001})}.\BibitemShut {Stop}%

\bibitem [{\citenamefont {Matthews}\ and\ \citenamefont
  {Winter}(2009)}]{VV-dh-Chernoff}%
  \BibitemOpen
  \bibfield  {author} {\bibinfo {author} {\bibfnamefont {W.}~\bibnamefont
  {Matthews}}\ and\ \bibinfo {author} {\bibfnamefont {A.}~\bibnamefont
  {Winter}},\ }\bibfield  {title} {\emph {\bibinfo {title} {On the {C}hernoff
  distance for asymptotic {LOCC} discrimination of bipartite quantum states},}\
  }\href {http://dx.doi.org/10.1007/s00220-008-0582-6} {\bibfield  {journal}
  {\bibinfo  {journal} {Commun. Math. Phys.}\ }\textbf {\bibinfo {volume}
  {285}},\ \bibinfo {pages} {161} (\bibinfo {year} {2009})}.\BibitemShut
  {Stop}%

\end{thebibliography}%
\let\addcontentsline\oldaddcontentsline

\clearpage
\newgeometry{left=1.2in,right=1.2in,top=.7in,bottom=1in}

\fakepart{Appendix}

\onecolumngrid

\begin{center}
\vspace*{\baselineskip}
{\textbf{\large Distillable entanglement under dually non-entangling operations \\[8pt] --- Supplementary information --- }}\\
\end{center}

\renewcommand{\theequation}{S\arabic{equation}}
\renewcommand{\thethm}{S\arabic{thm}}
\renewcommand{\thefigure}{S\arabic{figure}}
\setcounter{page}{1}
\makeatletter

\setcounter{secnumdepth}{2}

\tableofcontents

\section{Definitions}

\subsection{Generalities}

\subsubsection{States and operations}

Quantum states are represented by \deff{density operators}, i.e.\ positive semi-definite operators of unit trace acting on a separable Hilbert space. Unless otherwise specified, in this paper we will only consider finite-dimensional Hilbert spaces. A \deff{measurement operator} on a system with Hilbert space $\HH$ is an operator $E$ on $\HH$ satisfying $0\leq E\leq \id$, where $X\geq Y$ means that $X-Y$ is positive semi-definite. We will also write the above relation as $E\in [0,\id]$. 

A \deff{quantum channel} in dimension $d$ is a linear map $\NN$ acting on the space of $d\times d$ complex matrices that is completely positive and trace preserving (CPTP). Given a channel $\NN$, the dual map $\NN^\dag$, alias the \deff{adjoint}, is defined by the formula
\bb
\Tr \big[ X \NN(Y)\big] = \Tr \big[ \NN^\dag(X) Y\big]\, ,\qquad \forall\ X,\, Y\, ,
\label{dual_map}
\ee
where $X,Y$ are arbitrary $d\times d$ complex matrices. We will denote the set of quantum channels having system $A$ as input and system $B$ as output by $\cptp(A\to B)$. A special type of quantum channels are quantum \deff{measurements}, a.k.a.\ positive operator-valued measures (POVMs), denoted by $\MM$ and given by
\bb
\MM(\cdot) \coloneqq \sum_x \Tr[E_x (\cdot)] \ketbra{x}\, ,
\ee
where $x$ runs over a finite alphabet $\XX$, $\{\ket{x}\}_x$ is the canonical orthonormal basis of the Hilbert space $\C^{|\XX|}$, and $E_x\in [0,\id]$ is a measurement operator for every $x$. In order for the above map to be also trace preserving, and hence a quantum channel, the normalisation condition $\sum_x E_x =\id$ should be obeyed.

\subsubsection{Relative entropies} 

Umegaki's \deff{relative entropy} is a generalisation of the Kullback--Leibler divergence~\cite{Kullback-Leibler} to quantum systems. For a pair of finite-dimensional states $\rho,\sigma$ on the same system, it is defined by~\cite{Umegaki1962, Hiai1991}
\bb
D(\rho\|\sigma) \coloneqq \Tr[\rho (\log_2 \rho - \log_2\sigma)] ,
\label{Umegaki}
\ee
where it is understood that $D(\rho\|\sigma) = +\infty$ unless $\supp(\rho)\subseteq \supp(\sigma)$. This is by no means the only possible generalisation of the classical quantity to the quantum world. An alternative expression based on the idea of measuring the quantum states to obtain classical probability distributions, on which we can then evaluate the Kullback--Leibler divergence, has been proposed by Donald~\cite{Donald1986, Petz-old, Berta2017, nonclassicality}. More generally, one can consider a set of `allowed' measurements $\mm$ on a certain system, and define the \deff{$\mm$-measured relative entropy}~\cite{Piani2009} (see also~\cite{Vedral1998}) to be
\bb
D^{\mm} (\rho\|\sigma) \coloneqq \sup_{(E_x)_x\in \mm} \sum_x \Tr \rho E_x \log_2 \frac{\Tr \rho E_x}{\Tr \sigma E_x}\, .
\label{measured_relative_entropy}
\ee

Another related quantity is the \deff{max-relative entropy}, given by~\cite{Datta08}
\bb
D_{\max}(X\|\sigma) \coloneqq \log_2 \min\left\{\lambda\geq 0:\ X \leq \lambda \sigma\right\} ,
\label{max_relative_entropy}
\ee
where by convention $\log_2 0 = -\infty$. Although $D_{\max}$ is usually defined for the case where $X$ and $\sigma$ are both quantum states,~\eqref{max_relative_entropy} actually makes perfect sense when $X,\sigma$ are arbitrary Hermitian operators. We will find this slightly more general definition quite handy. Note that $D_{\max}(X\|\sigma)\geq 0$ provided that $\Tr X\geq 1$.

The Umegaki relative entropy and the max-relative entropy both obey the \deff{data processing inequality}. In the former case, this reads~\cite{lieb73a, lieb73b, Lindblad-monotonicity}
\bb
D\big(\NN(\rho) \,\big\|\, \NN(\sigma) \big) \leq D(\rho \| \sigma)\, ,
\label{data_processing_Umegaki}
\ee
which is valid for all quantum channels $\NN$ and for all pairs of states $\rho,\sigma$. In fact,~\eqref{data_processing_Umegaki} holds provided that $\NN$ is positive and trace preserving, so that \emph{complete} positivity is not even necessary~\cite{Alex2017}. Similarly, the max-relative entropy satisfies that
\bb
D_{\max}\big(\NN(X) \,\big\|\, \NN(\sigma) \big) \leq D_{\max} (X \| \sigma)\, ,
\label{data_processing_D_max}
\ee
for all positive maps $\NN$, and in particular for all quantum channels. An analogous inequality for the measured relative entropy also holds, provided that a simple compatibility condition is obeyed. Namely, for all sets of measurements $\mm,\mm'$ and for all channels $\NN$ such that $\NN^\dag(\mm)\subseteq \mm'$, it holds that
\bb
D^\mm\big(\NN(\rho)\,\big\|\, \NN(\sigma) \big) \leq D^{\mm'}(\rho\|\sigma)\, .
\label{data_processing_measured_relent}
\ee
Finally, there is a simple hierarchical relation between the three relative entropies discussed here. Namely, for all states $\rho,\sigma$ and all sets of measurements $\mm$ it holds that
\bb
D^{\mm}(\rho\|\sigma) \leq D(\rho \|\sigma) \leq D_{\max}(\rho\|\sigma)\, ;
\label{Umegaki_max_inequality}
\ee
here, the first inequality follows from~\eqref{data_processing_Umegaki}, and the second from the operator monotonicity of the logarithm.

\subsubsection{Entanglement and PPT-ness}

Consider an arbitrary finite-dimensional bipartite quantum system $AB$ with Hilbert space $\HH_{AB} = \HH_A\otimes \HH_B$. A state $\sigma_{AB}$ on $AB$ is called \deff{separable}~\cite{Werner} if there exist $N\in \N_+$, a probability distribution $p$ on $\{1,\ldots, N\}$, and $N$ pairs of states $\alpha^i_A$ on $A$ and $\beta^i_B$ on $B$ ($i=1,\ldots, N$) such that
\bb
\sigma_{AB} = \sum_{i=1}^N p_i\, \alpha^i_A \otimes \beta^i_B\, .
\label{separable}
\ee
A bipartite state that does not admit any decomposition of the above form is called \deff{entangled}. We will denote the set of separable states on $AB$ by $\SEP(A\!:\!B)$, or simply $\SEP$ if there is no ambiguity regarding the system. The cone generated by $\SEP$ will be denoted by $\cone(\SEP)\coloneqq \{\lambda \sigma:\ \lambda\geq 0,\ \sigma\in \SEP\}$.

An incredibly useful outer approximation of the set of separable states is provided by the \deff{positive partial transpose (PPT) criterion}~\cite{PeresPPT}, stating that any separable state $\sigma_{AB}$ satisfies the operator inequality $\sigma_{AB}^\Gamma \geq 0$, where the \deff{partial transpose} $\Gamma$ of an operator acting on the finite-dimensional space $\HH_{AB}$ is defined by 
\bb
\Gamma \left( X_A\otimes Y_B\right) = \left( X_A\otimes Y_B\right)^\Gamma \coloneqq X_A \otimes Y_B^\intercal
\ee
on product operators, and extended by linearity. A state $\sigma_{AB}$ satisfying the PPT criterion is called a \deff{PPT state}. The set of PPT states is denoted by $\PPT(A\!:\!B)$, or simply by $\PPT$ when there is no ambiguity concerning the system under consideration. Although $\SEP \subseteq \PPT$, the inclusion is known to be strict whenever the bipartite system has dimension larger than $6$~\cite{Horodecki-PPT-entangled}.


The above definitions, valid for states, can be easily extended to measurements. A measurement represented by the POVM $(E_x)_x$ is said to be an \deff{LOCC measurement} if it can be implemented with local operations and classical communication. This notion can be turned into a rigorous mathematical definition as detailed in~\cite{LOCC}. Outer approximations to the set of LOCC measurements, denoted by $\mlocc$, are given by the sets of \deff{separable measurements} and \deff{PPT measurements}, respectively defined by
\bb
\sep(A\!:\!B) &\coloneqq \left\{ \big(E^{AB}_x\big)_x:\ E^{AB}_x\in \SEP(A\!:\!B)\ \ \forall\ x,\ \sumno_x E_x^{AB} = \id^{AB} \right\} , \\
\ppt(A\!:\!B) &\coloneqq \left\{ \big(E^{AB}_x\big)_x:\ E^{AB}_x\in \PPT(A\!:\!B)\ \ \forall\ x,\ \sumno_x E_x^{AB} = \id^{AB} \right\} .
\ee
As mentioned, it holds that
\bb
\mlocc(A\!:\!B) \subseteq \sep(A\!:\!B) \subseteq \ppt(A\!:\!B) \subseteq \all(A\!:\!B) \, ,
\label{measurement_inclusions}
\ee
where $\all \coloneqq \left\{(E_x)_x:\ E_x\geq 0,\ \sumno_x E_x = \id\right\}$ is the set of all measurements. Typically, all inclusions in~\eqref{measurement_inclusions} are strict~\cite{LOCC, Chitambar2014, Cheng2020}.

\begin{note}
Throughout this paper, we will leave the operationally important $\mlocc$ set aside, and consider only its outer approximations $\sep$ and $\ppt$, as well as, naturally, $\all$. Therefore, from now on, we will use the symbol $\mm$ for one of the three latter sets of measurements, i.e.\ $\mm\in \{\sep,\ppt,\all\}$ unless otherwise specified.
\end{note}

\subsection{Entanglement measures} \label{subsec_entanglement_measures}

An entanglement measure is a function defined on states of all (finite-dimensional) bipartite systems and obeying certain properties like monotonicity under LOCC~\cite{Horodecki-review}. 
A general family of entanglement measures, sometimes called `entanglement divergences', can be constructed by setting
\bb
\mathds{D}(\rho\|\SEP) \coloneqq \inf_{\sigma\in \SEP} \mathds{D}(\rho\|\sigma)\, ,
\label{divergence_of_entanglement}
\ee
where $\mathds{D}\in \{D,\, D_{\max},\, D^\mm\}$ is some `quantum divergence'. Roughly speaking,~\eqref{divergence_of_entanglement} can be thought of as quantifying the `distance' between the state $\rho$ and the set of separable states, as measured by the given quantum divergence. Here the quotation marks are appropriate, because divergences are not symmetric in their two arguments, and hence they do not represent, properly speaking, distance measures.

Choosing $\mathds{D} = D$ in~\eqref{divergence_of_entanglement} yields the celebrated \deff{relative entropy of entanglement}~\cite{Vedral1997, Vedral1998}; setting $\mathds{D} = D_{\max}$, instead, yields~\cite{Datta-alias}
\bb
D_{\max}(\rho\|\SEP) = \log_2\left(1+R_\SEP^g(\rho)\right)\, ,
\label{robustness_as_D_max}
\ee
where $R_\SEP^g$, known as the \deff{generalised robustness}~\cite{VidalTarrach, Steiner2003}, is given by
\bb
R_\SEP^g(\rho) &\coloneqq \min\left\{ r\geq 0:\ \exists\ \sigma\in \SEP:\ \rho \leq (1+r)\sigma \right\} .
\label{generalised_robustness}
\ee
Finally, picking $\mathds{D} = D^\sep$ in~\eqref{divergence_of_entanglement} yields a quantity introduced by Piani~\cite{Piani2009}, which we will therefore call \deff{Piani relative entropy of entanglement}.

These entanglement measures can be defined also in infinite dimension, and especially the relative entropy of entanglement and the generalised robustness are typically surprisingly well behaved~\cite{achievability, taming-PRA, taming-PRL}.

The three measures above have different properties: when $\mathds{D} = D,D_{\max}$ the corresponding entanglement divergence is sub-additive, i.e.
\bb
\mathds{D}\big(\rho_{AB} \otimes \rho'_{A'B'}\, \big\|\, \SEP(AA'\!:\!BB')\big) \leq \mathds{D}\big(\rho_{AB}\, \big\|\, \SEP(A\!:\!B)\big) + \mathds{D}\big(\rho'_{A'B'}\, \big\|\, \SEP(A'\!:\!B')\big)\, ,\qquad \mathds{D} = D,\, D_{\max}\, ,
\label{subadditivity}
\ee
for all pairs of bipartite states $\rho_{AB},\rho'_{A'B'}$. This means that combining independent systems can only decrease the total amount of entanglement as measured by either $D(\cdot\|\SEP)$ or $D_{\max}(\cdot\|\SEP)$. However, a breakthrough result by Piani~\cite[Theorem~2]{Piani2009} showed that when $\mm=\sep$ the corresponding measured relative entropy of entanglement is \deff{strongly super-additive}, i.e.\
\bb
D^\sep\big(\rho_{AA'BB'}\, \big\|\, \SEP(AA'\!:\!BB')\big) \geq D^\sep\big(\rho_{AB}\, \big\|\, \SEP(A\!:\!B)\big) + D^\sep\big(\rho_{A'B'}\, \big\|\, \SEP(A'\!:\!B')\big)
\label{strong_superadditivity}
\ee
for all (possibly correlated!) states $\rho_{AA'BB'}$ on a four-partite system $AA':BB'$.

In all cases where either~\eqref{subadditivity} or~\eqref{strong_superadditivity} is obeyed we can obtain new entanglement divergences from old ones by regularisation, i.e.\ by setting
\bb
\mathds{D}^\infty(\rho \|\SEP) \coloneqq \lim_{n\to\infty} \frac1n\, \mathds{D}\big(\rho^{\otimes n} \,\big\|\, \SEP_n\big)\, ,\qquad \mathds{D} = D,\, D_{\max},\, D^\sep\, ,
\label{regularised_entanglement_divergence}
\ee
where
\bb
\SEP_n \coloneqq \SEP(A^n\!:\! B^n)
\label{S_n}
\ee
is the set of separable states over $n$ copies of the system, and we will write $\big(D^\sep\big)^\infty = D^{\sep,\infty}$ for brevity. The existence of the limit that defines~\eqref{regularised_entanglement_divergence} is guaranteed by Fekete's lemma~\cite{Fekete1923}, because the sequence $\mathds{D}\big(\rho^n \,\big\|\, \SEP_n\big)$ is sub-additive when $\mathds{D} = D,D_{\max}$, due to~\eqref{subadditivity}, and super-additive when $\mathds{D} = D^\sep$, due to Piani's result~\eqref{strong_superadditivity}. The measure $D^\infty(\cdot \|\SEP)$, corresponding to the choice $\mathds{D}=D$ in~\eqref{regularised_entanglement_divergence}, is known as the \deff{regularised relative entropy of entanglement}, and has been extensively investigated~\cite{Christandl2012, rel-ent-sq}. The \deff{regularised Piani relative entropy of entanglement} $D^{\sep,\infty}$, instead, has been much less studied, although Li and Winter looked into a closely related version where the set of available measurements comprises all those obtainable with local operations and one-way classical communication~\cite{rel-ent-sq}.

\subsection{Quantum Stein's lemmas}

Hypothesis testing is a general task based on distinguishing two quantum states $\rho$ and $\sigma$~\cite{HAYASHI,TOMAMICHEL}. More specifically, one is given a classical description of both states and a box that contains one copy of one of the two; the goal is to make a suitable measurement so as to guess accurately. The two types of error are (I)~type-I, which consists in guessing $\sigma$ while the state was $\rho$, and (II)~type-II, which consists in guessing $\rho$ while the state was $\sigma$. In the asymmetric setting, we treat these two errors differently. For this reason, it is very useful to consider the minimal type-II error probability that is achievable under the constraint that the type-I error probability be at most $\e\in [0,1)$. This probability is given by $2^{-D_H^\e(\rho\|\sigma)}$, where the \deff{hypothesis testing relative entropy} is defined by
\bb
D_H^\e(\rho\|\sigma) \coloneqq - \log_2 \inf\left\{ \Tr E\sigma:\ E\in [0,\id],\ \Tr E\rho\geq 1-\e \right\} .
\label{DH}
\ee

The above expression rests on the implicit assumption that all tests $E\in [0,\id]$ are available to the experimentalist. However, this is not necessarily the case, and as in the discussion preceding~\eqref{measured_relative_entropy} we could consider a restricted set $\ff\in \{\sep,\ppt\}$ of measurements that are allowed on a given bipartite system $AB$. In this setting, implicitly considered in in~\cite{brandao_adversarial}, the quantity in~\eqref{DH} is modified, and becomes
\bb
D_H^{\ff\!,\,\e}(\rho\|\sigma) \coloneqq - \log_2 \inf\left\{ \Tr E\sigma:\ E\in [0,\id],\ \left(E,\id-E\right)\in \ff,\ \Tr E\rho\geq 1-\e \right\} .
\label{free_DH}
\ee
We will call the above the \deff{$\ff$-measured hypothesis testing relative entropy}. Note that when $\ff=\mathrm{ALL}$ contains all measurements we see that~\eqref{free_DH} reduces to~\eqref{DH}.

We can easily turn the above divergence into an entanglement measure by following the blueprint of~\eqref{divergence_of_entanglement}, thus setting
\bb
D_H^{\ff\!,\,\e}(\rho\|\SEP) \coloneqq&\ \inf_{\sigma\in \SEP} D_H^{\ff\!,\,\e}(\rho\|\sigma) \\
=&\ - \log_2 \inf\left\{ \sup_{\sigma \in \SEP} \Tr E\sigma:\ E\in [0,\id],\ \left(E,\id-E\right)\in \ff,\ \Tr E\rho\geq 1-\e \right\} ,
\label{min_free_DH}
\ee
where all states and operators are on a bipartite system $AB$, and the second identity holds for all $\mm\in \{\sep,\ppt,\all\}$. We will care in particular about the choice $\mm=\sep$, in which case~\eqref{min_free_DH} reduces to
\bb
\inf_{\sigma \in \SEP} D_H^{\sep\!,\,\e} (\rho\|\sigma) = - \log_2 \inf\left\{ \sup_{\sigma \in \SEP} \Tr E\sigma:\ E\in [0,\id],\ E,\id-E\in \cone(\SEP),\ \Tr E\rho\geq 1-\e \right\} .
\label{min_SEP_DH}
\ee

\begin{rem}
To exchange infimum and supremum in the second line of~\eqref{min_free_DH} we applied Sion's theorem~\cite{Sion}. To do so, it is necessary to assume that (a)~we are working with finite-dimensional systems, as we always do here, and (b)~that the set of binary measurements in $\ff$ is closed\footnote{Being bounded, it is then automatically also compact.} and convex. The sets of measurements listed in~\eqref{measurement_inclusions} are all such that the corresponding subsets of binary measurements are closed and convex, with the exception of $\locc$. Hence, once again we exclude $\locc$ from our treatment, and consider only $\mm\in \{\sep,\ppt,\all\}$.
\end{rem}

Let $\rho=\rho_{AB}$ be a generic bipartite state. The quantity
\bb
\lim_{\e\to 0^+} \liminf_{n\to\infty} \frac1n \inf_{\sigma_n\in \SEP_n} D_H^{\ff,\e} \big(\rho^{\otimes n} \,\big\|\, \sigma_n\big) \eqqcolon \stein_{\ff} \left(\rho \| \SEP\right) ,
\label{restricted_Stein_exponent}
\ee
where we used the shorthand notation $\SEP_n\coloneqq \SEP\big(A^n\!:\! B^n\big)$ for the set of separable states on $n$ copies of the system $AB$, has an operational interpretation as the Stein coefficient associated with the hypothesis testing task of discriminating $\rho_{AB}^{\otimes n}$ from separable states in $\SEP\big(A^n\!:\! B^n\big)$ using only measurements in the restricted set $\ff\in \{\sep,\ppt,\all\}$. A so-called `generalised quantum Stein's lemma'~\cite{Brandao2010, brandao_adversarial} is a statement that simplifies the calculation of the limit in~\eqref{restricted_Stein_exponent} by eliminating the external limit in $\e$. Of the two most important cases $\ff=\sep$ and $\ff=\all$, the only one for which we have such a statement is, curiously, the former~\cite{brandao_adversarial}. As for the latter, Brand\~{a}o and Plenio proposed an argument in~\cite{Brandao2010}, but this was later found to contain a fatal flaw~\cite{gap}. Although it is still \emph{conjectured} that a generalised quantum Stein's lemma might hold for $\ff=\all$, as of the writing of this manuscript it is not known whether that is the case.
This would take the form~\cite{gap}
 \bb
 \stein_{\all} \left(\rho \| \SEP\right) = \lim_{\e\to 0^+} \liminf_{n\to\infty} \frac1n \inf_{\sigma_n\in \SEP_n} D_H^\e \big(\rho^{\otimes n} \,\big\|\, \sigma_n\big) \overset{?}{=} D^\infty(\rho \|\SEP)\, .
 \label{GQSL_conjecture}
 \ee
 What is remarkable about this conjectured identity is that on the right-hand side we have the `pure' regularised relative entropy of entanglement, without any $\e$ in sight. (There is still, however, a regularisation, i.e.\ a limit $n\to\infty$, but we know that quantum information theory tends to keep these limits around.)


Fortunately, the case of importance to us here is $\ff=\sep$, for which we \emph{do} have a generalised quantum Stein's lemma. This case follows from the main result of~\cite{brandao_adversarial}, which can be stated as follows.

\begin{thm}[(Brand\~{a}o/Harrow/Lee/Peres~\cite{brandao_adversarial})] \label{BHLP_thm}
Let $\FF$ be a family of sets of states on a collection of quantum systems $A$, and let $\ff$ be a family of set of measurements on the same systems. If $\FF$ and $\ff$ are compatible, meaning that for every $\sigma \in \FF_{AA'}$, every measurement $(E_x)_{x\in \XX}\in \ff_{A'}$, and every $x\in \XX$, it holds that
\bb
\Tr \sigma_{AA'} E_x^{A'} \in \FF_{A}\, ,
\label{compatibility}
\ee
then
\bb
\stein_{\ff} \left(\rho \| \FF\right) \coloneqq \lim_{\e\to 0^+} \liminf_{n\to\infty} \frac1n \inf_{\sigma_n\in \FF_n} D_H^{\ff,\e} \big(\rho^{\otimes n} \,\big\|\, \sigma_n\big) = D^{\ff, \infty}(\rho \| \FF)\, .
\ee
\end{thm}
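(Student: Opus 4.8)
The plan is to establish the two inequalities separately; the bound $\stein_{\ff}(\rho\|\FF)\leq D^{\ff,\infty}(\rho\|\FF)$ is a routine one-shot argument, while $\stein_{\ff}(\rho\|\FF)\geq D^{\ff,\infty}(\rho\|\FF)$ is the achievability statement and contains all the difficulty. For the first bound, the starting point is the one-shot estimate $D_H^{\ff,\e}(\rho\|\sigma)\leq\frac{1}{1-\e}\bigl(D^{\ff}(\rho\|\sigma)+h_2(\e)\bigr)$, with $h_2$ the binary entropy: taking an optimal test $(E,\id-E)\in\ff$ for the left-hand side and measuring $\rho$ and $\sigma$ with it yields two binary classical distributions whose Kullback--Leibler divergence is at most $D^{\ff}(\rho\|\sigma)$ by the definition~\eqref{measured_relative_entropy}, and the estimate then follows from the standard classical relation between the hypothesis testing divergence and the relative entropy. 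Applying this with $\rho\mapsto\rho^{\otimes n}$, $\ff\mapsto\ff_n$, and $\sigma$ a minimiser of $D^{\ff_n}(\rho^{\otimes n}\|\FF_n)$, dividing by $n$, and taking $\liminf_{n\to\infty}$ followed by $\e\to 0^+$ gives the claim; the regularised quantity on the right is well defined by Fekete's lemma, e.g.\ via Piani's super-additivity~\eqref{strong_superadditivity} when $\ff=\sep$, $\FF=\SEP$.

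For the second bound one must \emph{construct} good $\ff_n$-measurements. Writing $R\coloneqq D^{\ff,\infty}(\rho\|\FF)$, I would fix a large block length $m$ with $R_m\coloneqq\frac1m D^{\ff_m}(\rho^{\otimes m}\|\FF_m)\geq R-\delta$, split $n=km$ copies into $k$ blocks of size $m$, perform one and the same $\ff_m$-measurement $(F_y)_y$ on every block, and accept $\rho^{\otimes n}$ exactly when the empirical distribution of the $k$ outcomes is close to the distribution $P$ induced by $(F_y)_y$ on $\rho^{\otimes m}$. Since a tensor product of measurement operators from $\sep_m$ (resp.\ $\ppt_m$) again lies in $\sep_n$ (resp.\ $\ppt_n$) and the acceptance region only depends on the classical outcomes, this binary test belongs to $\ff_n$, and its type-I error vanishes as $k\to\infty$ by Sanov's theorem. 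For the type-II error I would use the compatibility hypothesis~\eqref{compatibility}: measuring all but one block with $(F_y)_y$ leaves, conditionally on each outcome string, an unnormalised state on the remaining block still lying in $\cone(\FF_m)$, so that, iterating, the joint outcome distribution under \emph{any} adversarial $\sigma_n\in\FF_n$ is a chain each of whose one-step conditionals equals $\{\Tr[\tau F_y]\}_y$ for some $\tau\in\FF_m$. If $(F_y)_y$ is chosen so that $\min_{\tau\in\FF_m}D\!\left(P\,\big\|\,\{\Tr[\tau F_y]\}_y\right)$ is nearly $mR_m$, the method of types --- exploiting the convexity of $\{\{\Tr[\tau F_y]\}_y:\tau\in\FF_m\}$, inherited from that of $\FF_m$ --- forces the type-II error to decay at least as $2^{-n(R_m-o(1))}$. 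Letting $k\to\infty$, then $\delta\to 0$ and $m\to\infty$, and noting that the vanishing type-I error renders the outer limit $\e\to 0^+$ harmless, yields $\stein_{\ff}(\rho\|\FF)\geq R$; indices $n$ not divisible by $m$ are handled by discarding a bounded number of copies.

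I expect the crux to be the choice of the block measurement $(F_y)_y$ above: one needs a \emph{single} measurement that is near-optimal against the \emph{entire} free set $\FF_m$, not merely against the optimal separating state. This is tantamount to a (regularised) minimax identity exchanging an infimum over $\FF_m$ with a supremum over measurements, and since the relevant functional is jointly convex rather than concave--convex, it is not a direct consequence of Sion's theorem (cf.\ the Remark following~\eqref{min_free_DH}); this is exactly the point where the compatibility assumption~\eqref{compatibility} together with the convexity and compactness of the free sets become indispensable. For unrestricted measurements the ``peeling'' recursion above does not close --- which is the obstruction that keeps the analogue of Theorem~\ref{BHLP_thm} for $\ff=\all$, and hence the conjecture~\eqref{GQSL_conjecture}, open.
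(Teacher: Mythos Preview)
The paper does not contain its own proof of Theorem~\ref{BHLP_thm}; the statement is merely quoted and attributed to Brand\~{a}o, Harrow, Lee, and Peres~\cite{brandao_adversarial}, so there is no in-paper argument to compare your sketch against.

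That said, your outline is broadly in the spirit of the original result. The converse via the one-shot bound $D_H^{\ff,\e}\leq \tfrac{1}{1-\e}(D^{\ff}+h_2(\e))$ is standard and correct. For achievability, your block-measurement-plus-types strategy, with the compatibility hypothesis~\eqref{compatibility} used to ``peel'' the adversary's state into a chain of conditionals lying in $\FF_m$, is indeed the mechanism that makes the restricted Stein lemma go through. You also correctly flag the genuine difficulty: one needs a \emph{single} $(F_y)_y\in\ff_m$ whose induced divergence $\min_{\tau\in\FF_m} D\big(P\,\|\,\{\Tr[\tau F_y]\}_y\big)$ is close to $D^{\ff_m}(\rho^{\otimes m}\|\FF_m)$, and the required minimax exchange is not an immediate consequence of Sion's theorem. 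Your sketch acknowledges this but does not resolve it; in the original argument this is where most of the work lies, and it is handled precisely via the compatibility-driven super-additivity (Piani's argument~\eqref{strong_superadditivity}) together with the convexity and compactness of $\FF_m$. As a proof \emph{sketch} your write-up is acceptable, but be aware that the paragraph you label ``the crux'' is not merely a technicality --- it is the heart of the theorem, and a full proof must supply that step explicitly rather than defer it.
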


Choosing $\FF=\SEP$ and $\ff=\sep$ satisfies the compatibility condition~\eqref{compatibility}, as originally deduced by Piani~\cite{Piani2009}. From Theorem~\ref{BHLP_thm} we then see that
\bb
\stein_{\sep} \left(\rho \| \SEP\right) = \lim_{\e\to 0} \lim_{n\to\infty} \frac1n \inf_{\sigma_n\in \SEP_n} D_H^{\sep\!,\,\e} \big(\rho^{\otimes n} \,\big\|\, \sigma_n\big) = D^{\sep, \infty}(\rho \| \SEP)
\label{restricted_Stein_SEP}
\ee
for all bipartite states $\rho_{AB}$, which is nothing but the advertised generalised quantum Stein's lemma for the case $\ff=\sep$.

\subsection{Smoothed divergences and asymptotic equipartition property}

The hypothesis testing relative entropy in~\eqref{DH} seems to be a rather different beast from the other divergences $D,D_{\max}, D^\mm$. However, there is a way to recover $D_H^\e$ from a `smoothed' version of the max-relative entropy. Smoothing, a technique introduced and systematically studied by Renner in his PhD thesis~\cite{RennerPhD}, consists of minimising a given divergence over a small ball centred around the first state $\rho$, while keeping the second state $\sigma$ fixed. If one uses the trace norm to define such a ball, setting
\bb
B^\e_T(\rho) \coloneqq \left\{ \tilde{\rho}:\ \tilde{\rho}\geq 0,\ \Tr \tilde{\rho} = 1,\ \frac12 \|\rho - \tilde{\rho}\|_1\leq \e\right\},
\ee
for some $\e\in [0,1)$, the corresponding \deff{smoothed max-relative entropy} takes the form
\bb
D_{\max}^{\e} (\rho \| \sigma) \coloneqq \min_{\tilde{\rho}\in B_T^\e(\rho)} D_{\max}(\tilde{\rho}\|\sigma)\, ,
\label{T_smoothed_max_relative_entropy}
\ee
where the minimum exists because $(\rho,\sigma) \mapsto D_{\max}(\rho\|\sigma)$ is lower semi-continuous and 
$B_T^\e(\rho)$ is compact.

What kind of entanglement measure do we obtain by plugging the smoothed max-relative entropy into~\eqref{divergence_of_entanglement}, i.e.\ minimising them over all separable states? This would amount to defining
\bb
D_{\max}^{\e}(\rho\|\SEP) &\coloneqq \min_{\sigma\in \SEP} D_{\max}^{\e}(\rho\|\sigma) = \min_{\sigma\in \SEP,\ \tilde{\rho} \in B^\e_T(\rho)} D_{\max}(\tilde{\rho}\|\sigma)\,
\label{smoothed_max_relent_entanglement}
\ee
where once again the minimum is achieved because the max-relative entropy $(\rho,\sigma)\mapsto D_{\max}(\rho\|\sigma)$ is lower semi-continuous, and both $\SEP$ and $B_T^\e(\rho)$ are compact. Brand\~{a}o and Plenio~\cite[Proposition~IV.2]{BrandaoPlenio2} and Datta~\cite[Theorem~1]{Datta-alias} have independently shown that by regularising the above measures and subsequently taking the limit $\e\to 0$ one gets back the familiar regularised relative entropy of entanglement. 

\begin{thm}[(Brand\~{a}o--Plenio--Datta, Asymptotic equipartition property of max-relative entropy of entanglement~\cite{BrandaoPlenio2, Datta-alias})] \label{AEP_entanglement_thm}
Let $\rho=\rho_{AB}$ be a finite-dimensional bipartite state, and let $\SEP_n \coloneqq \SEP\big(A^n\!:\!B^n\big)$ be the set of separable states over $n$ copies of $AB$. Then the measure constructed in~\eqref{smoothed_max_relent_entanglement}, with the notation defined by~\eqref{max_relative_entropy} and~\eqref{T_smoothed_max_relative_entropy}, satisfies that
\bb
\lim_{\e\to 0^+} \limsup_{n\to\infty} \frac1n\, D^{\e}_{\max}\big(\rho^{\otimes n} \,\big\|\, \SEP_n\big) = \lim_{\e\to 0^+} \liminf_{n\to\infty} \frac1n\, D^{\e}_{\max}\big(\rho^{\otimes n}\, \big\|\, \SEP_n\big) = D^\infty(\rho\|\SEP)\,.
\label{AEP_entanglement}
\ee
\end{thm}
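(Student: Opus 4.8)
The plan is to sandwich both the $\liminf$ and the $\limsup$ in~\eqref{AEP_entanglement} between the single number $D^\infty(\rho\|\SEP)$, handling the $T$- and $P$-smoothings simultaneously via Lemma~\ref{equivalence_T_P_smoothing_lemma}. Write $d$ for the product $d_A d_B$ of the local dimensions, so that $\rho^{\otimes n}$ acts on a system of dimension $d^n$; recall that $D^\infty(\rho\|\SEP)=\inf_m \tfrac1m D(\rho^{\otimes m}\|\SEP_m)$ by subadditivity~\eqref{subadditivity} and Fekete's lemma, and that this is finite because one may always compare $\rho$ with a full-support separable state.

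\emph{The lower bound.} First I would show $\lim_{\e\to0^+}\liminf_{n\to\infty}\tfrac1n D_{\max}^{\e,T/P}(\rho^{\otimes n}\|\SEP_n)\ge D^\infty(\rho\|\SEP)$. Let $(\rho'_n,\sigma_n)$ attain $D_{\max}^{\e,P}(\rho^{\otimes n}\|\SEP_n)$. Exactly as in the proof of Lemma~\ref{equivalence_T_P_smoothing_lemma}, $\Tr\rho'_n\ge 1-\e$, so the normalised state $\tilde\rho_n=\rho'_n/\Tr\rho'_n$ obeys $\tfrac12\|\rho^{\otimes n}-\tilde\rho_n\|_1\le\e$ and $D_{\max}(\rho'_n\|\sigma_n)\ge D_{\max}(\tilde\rho_n\|\sigma_n)+\log_2(1-\e)$. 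Combining this with $D_{\max}\ge D$ from~\eqref{Umegaki_max_inequality}, with $D(\tilde\rho_n\|\sigma_n)\ge D(\tilde\rho_n\|\SEP_n)$, and with the asymptotic continuity of the relative entropy of entanglement, $\big|D(\rho^{\otimes n}\|\SEP_n)-D(\tilde\rho_n\|\SEP_n)\big|\le \e\,n\log_2 d+(1+\e)\,h\!\big(\tfrac{\e}{1+\e}\big)$, yields $\tfrac1n D_{\max}^{\e,P}(\rho^{\otimes n}\|\SEP_n)\ge \tfrac1n D(\rho^{\otimes n}\|\SEP_n)-\e\log_2 d-O(1/n)$. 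Letting $n\to\infty$ gives $\ge D^\infty(\rho\|\SEP)-\e\log_2 d$, and then $\e\to 0^+$ finishes this half; the $T$-smoothing case is identical, only with $\rho'_n$ already normalised.

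\emph{The upper bound.} For $\lim_{\e\to0^+}\limsup_{n\to\infty}\tfrac1n D_{\max}^{\e,T/P}(\rho^{\otimes n}\|\SEP_n)\le D^\infty(\rho\|\SEP)$ I would use an $m$-block construction. Fix $m$ and choose $\sigma^\star\in\SEP_m$ with $D(\rho^{\otimes m}\|\sigma^\star)=D(\rho^{\otimes m}\|\SEP_m)$; taking $\sigma^\star$ of full support we also have $\supp(\rho^{\otimes m})\subseteq\supp(\sigma^\star)$ since the divergence is finite. For $n=mk+r$ with $0\le r<m$, fix some full-support $\omega_r\in\SEP_r$; then $(\sigma^\star)^{\otimes k}\otimes\omega_r\in\SEP_n$ is feasible in the minimisation defining $D_{\max}^{\e,P}(\cdot\|\SEP_n)$, and the additivity of $D_{\max}$ on tensor products together with the tensorisation of smoothing balls gives
\bbb
\tfrac1n D_{\max}^{\e,P}\big(\rho^{\otimes n}\,\big\|\,\SEP_n\big) &\le \tfrac1n D_{\max}^{\e,P}\big((\rho^{\otimes m})^{\otimes k}\,\big\|\,(\sigma^\star)^{\otimes k}\big)+\tfrac1n D_{\max}\big(\rho^{\otimes r}\,\big\|\,\omega_r\big),
\eee
where the second term is $O(1/n)$ since $r<m$ is bounded. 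It then remains to prove the i.i.d.\ asymptotic equipartition property $\lim_{k\to\infty}\tfrac1k D_{\max}^{s,P}\big((\rho^{\otimes m})^{\otimes k}\,\big\|\,(\sigma^\star)^{\otimes k}\big)=D(\rho^{\otimes m}\|\sigma^\star)$ for every fixed $s\in(0,1)$: applying~\eqref{equivalence_DH_Dmax_strong_converse} with $\e\mapsto s^2$ to $(\rho^{\otimes m})^{\otimes k}$ versus $(\sigma^\star)^{\otimes k}$ pinches $D_{\max}^{s,P}$ between $D_H^{1-s^2}$ and $D_H^{1-s^2-\delta}$ up to $k$-independent additive constants, while the quantum Stein lemma in its strong-converse form (valid for every type-I error in $(0,1)$) makes both $\tfrac1k D_H^{1-s^2}$ and $\tfrac1k D_H^{1-s^2-\delta}$ converge to $D(\rho^{\otimes m}\|\sigma^\star)$. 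Hence $\limsup_{n\to\infty}\tfrac1n D_{\max}^{\e,P}(\rho^{\otimes n}\|\SEP_n)\le \tfrac1m D(\rho^{\otimes m}\|\SEP_m)$ for every $m$ and every $\e\in(0,1)$, so taking $\inf_m$ (hence also $\e\to0^+$) bounds it by $D^\infty(\rho\|\SEP)$. Lemma~\ref{equivalence_T_P_smoothing_lemma} transfers the bound to the $T$-smoothing, and since $\liminf\le\limsup$ for each $\e$ and all the one-sided limits exist by monotonicity of $D_{\max}^{\e,\cdot}$ in $\e$, all quantities in~\eqref{AEP_entanglement} are forced to equal $D^\infty(\rho\|\SEP)$.

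\emph{Expected main obstacle.} The delicate part is the upper bound: the separable states that are optimal for $\rho^{\otimes n}$ are generically \emph{not} of product form, so one cannot apply the i.i.d.\ AEP to $D_{\max}^{\e}(\rho^{\otimes n}\|\SEP_n)$ directly. This is what forces the two nested limits --- first regularise the relative entropy of entanglement over $m$-blocks to produce a near-optimal \emph{product} reference $(\sigma^\star)^{\otimes k}$, then run the i.i.d.\ AEP against it --- and getting the bookkeeping right for block lengths not divisible by $m$, together with distributing the smoothing budget across the tensor factors, is where the care is needed. A minor secondary nuisance, already handled in Lemma~\ref{equivalence_T_P_smoothing_lemma}, is that the purified-distance ball contains sub-normalised states, so in the lower bound the continuity estimate must be applied to the renormalised state $\tilde\rho_n$, at the harmless cost of a $\log_2(1-\e)$ term.
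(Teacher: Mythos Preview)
Your argument is correct. Note, however, that the paper does not actually prove Theorem~\ref{AEP_entanglement_thm}: it simply attributes the result to~\cite{BrandaoPlenio2, Datta-alias} and then explains, in the paragraph following the statement, how the slightly different smoothing conventions used in those references are reconciled with the $T$- and $P$-balls of~\eqref{T_ball}--\eqref{P_ball} by means of Lemma~\ref{equivalence_T_P_smoothing_lemma} and the inclusions $B^{\e/2}_T(\rho)\subseteq \widebar{B}^\e(\rho)\subseteq B^{\sqrt{\e(2-\e)}}_P(\rho)$.

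What you have written is thus a self-contained reconstruction of the Brand\~{a}o--Plenio/Datta proof rather than a different approach. The two ingredients you use --- asymptotic continuity of $D(\cdot\|\SEP)$ for the lower bound, and the $m$-block reduction to an i.i.d.\ pair $(\rho^{\otimes m},\sigma^\star)$ followed by the i.i.d.\ AEP (obtained from the strong-converse Stein lemma via~\eqref{equivalence_DH_Dmax_strong_converse}) for the upper bound --- are precisely the ones underlying those references. Two minor remarks: (i)~you do not need $\sigma^\star$ to have full support, only $\supp(\rho^{\otimes m})\subseteq\supp(\sigma^\star)$, which is automatic for any minimiser since $D(\rho^{\otimes m}\|\SEP_m)<\infty$; alternatively, mix in a tiny fraction of the maximally mixed state at the cost of an $O(-\log_2(1-t))$ error. (ii)~The ``tensorisation of smoothing balls'' step you invoke is justified because if $\rho'\in B_P^\e(\rho^{\otimes m k})$ then $\rho'\otimes\rho^{\otimes r}\in B_P^\e(\rho^{\otimes n})$, as the generalised fidelity is multiplicative when one factor is a normalised state.
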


Interestingly, the smoothed max-relative entropy $D^\ve_{\max}$ satisfies a form of duality with the hypothesis testing relative entropy $D^\ve_H$, in the sense that $D^{1-\ve}_{\max}(\rho\|\sigma)$ is approximately equivalent to $D^\ve_H(\rho\|\sigma)$~\cite{Tomamichel2013, Anshu2019}.
Because of this, we can rewrite Eq.~\eqref{AEP_entanglement} as
\bb
\lim_{\e\to 1^-} \limsup_{n\to\infty} \frac1n\, D^{\e}_{H}\big(\rho^{\otimes n} \,\big\|\, \SEP_n\big) = D^\infty(\rho\|\SEP)\,,
\ee
 where we note that the error $\e$ now goes to \emph{one}. 
Another important consequence of this point of view is that the generalised quantum Stein's lemma conjecture, formulated in~\eqref{GQSL_conjecture} in terms of the hypothesis testing relative entropy, can be equivalently rephrased in terms of a max-relative entropy in the strong converse regime. Below we record this simple fact, already stated in~\cite[Eq.~(84)]{gap}.

\begin{cor}[(Reformulation of the generalised quantum Stein's lemma conjecture)] \label{GQSL_equivalent_form_cor}
For all bipartite states $\rho=\rho_{AB}$, the generalised Stein exponent under all measurements, defined by~\eqref{GQSL_conjecture}, can be alternatively expressed as
\bb
\stein_{\all} \left(\rho \| \SEP\right) &= \lim_{\e \to 0^+} \liminf_{n\to\infty} \frac1n\, D_{\max}^{1-\e} \big(\rho^{\otimes n} \,\big\|\, \SEP_n\big)\, ,
\label{GQSL_equivalent_form}
\ee
where $\SEP_n \coloneqq \SEP\big(A^n\!:\! B^n\big)$ is the set of separable states over $n$ copies of the system $AB$.
\end{cor}

This reformulation will be the starting point of our partial solution to this problem in Supplementary Note~\ref{sec:antisym_note}.

\subsection{Special states: Werner and isotropic families}\label{sec:werner}

In what follows, we will work often with a few special states defined on a bipartite system with Hilbert space $\C^d\otimes \C^d$, conventionally referred to as a `$d\times d$ bipartite system'. Here, $d\in \N_+$ is an arbitrary positive integer. First, the \deff{maximally entangled state} is given by
\bb
\Phi_d \coloneqq \ketbra{\Phi_d}\, ,\qquad \ket{\Phi_d} \coloneqq \frac{1}{\sqrt{d}} \sum_{i=1}^d \ket{ii}\, .
\label{maximally_entangled}
\ee
We will also use the shorthand notation
\bb
\tau_d\coloneqq \frac{\id - \Phi_d}{d^2-1}\, .
\ee
A generic \deff{isotropic state}~\cite[Section~V]{Horodecki1999} can be written as $p\Phi_d + (1-p)\tau_d$, for $p\in [0,1]$. 
An important feature of isotropic states is their invariance under twirling. The \deff{twirling} map in local dimension $d$ is the quantum channel
\bb
\TT(X) \coloneqq \int \dd U\ (U\otimes U^*) X (U\otimes U^*)^\dag\, ,
\label{twirling}
\ee
where $X$ is a generic operator on $\C^d\otimes \C^d$, and $\dd U$ denotes the Haar measure over the unitary group of dimension $d$. Note that $\TT$ can be realised by using local operations and shared randomness only, and is therefore in particular an LOCC~\cite{LOCC} and a non-entangling operation~\cite{BrandaoPlenio1, irreversibility}.

For subsequent use, we record the following simple computation of the generalised robustness of isotropic states.

\begin{lemma} \label{robustness_isotropic_lemma}
For all $d\in \N_+$ and $p\in [0,1]$, the generalised robustness of the corresponding isotropic state is given by
\bb
R_\SEP^g\big(p\Phi_d + (1-p)\tau_d\big) = R_\PPT^g\big(p\Phi_d + (1-p)\tau_d\big) = (dp-1)_+\, ,
\label{robustness_isotropic}
\ee
where $x_+\coloneqq \max\{x,0\}$, and $R_\PPT^g$ is defined as in~\eqref{generalised_robustness}, but replacing the set of separable states $\SEP$ with the set of PPT states $\PPT$.

In particular, an isotropic state $p\Phi_d + (1-p)\tau_d$ is separable if and only if it is PPT, if and only if $p\leq 1/d$. Hence, every separable or PPT isotropic state is a convex combination of $\tau_d$ and $\frac1d\Phi_d + \frac{d-1}{d}\tau_d$.
\end{lemma}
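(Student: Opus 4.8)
The plan is to compute $R^g_\SEP$ and $R^g_\PPT$ for isotropic states directly, using twirling invariance to reduce the optimisation to the two-parameter isotropic family, and then to exploit the coincidence of the separability and PPT conditions to get both equalities at once. First I would observe that twirling $\TT$ from \eqref{twirling} fixes every isotropic state and, being an LOCC (hence non-entangling) channel, maps $\SEP$ into $\SEP$ and $\PPT$ into $\PPT$; therefore, in the definition \eqref{generalised_robustness} of $R^g_\SEP\big(p\Phi_d+(1-p)\tau_d\big)$, if $\sigma\in\SEP$ witnesses $\rho\le(1+r)\sigma$, then so does $\TT(\sigma)\in\SEP$, because $\TT(\rho)=\rho$ and $\TT$ is positive. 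Hence without loss of generality the optimal $\sigma$ is itself isotropic, $\sigma=q\Phi_d+(1-q)\tau_d$, and the same reduction works verbatim for $\PPT$.

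Next I would carry out the scalar optimisation. Writing $\rho=p\Phi_d+(1-p)\tau_d$ and $\sigma=q\Phi_d+(1-q)\tau_d$, the operator inequality $\rho\le(1+r)\sigma$ decouples on the two eigenspaces (the span of $\ket{\Phi_d}$ and its orthocomplement), giving $p\le(1+r)q$ and $\frac{1-p}{d^2-1}\le(1+r)\frac{1-q}{d^2-1}$, i.e.\ $1-p\le(1+r)(1-q)$. For $\sigma$ to be separable (equivalently PPT, by the standard isotropic fact which I would either cite or re-derive from $\sigma^\Gamma=\tfrac{q d-? }{\cdots}$ positivity giving $q\le 1/d$) we need $q\in[0,1/d]$. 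Minimising $1+r$ subject to $1+r\ge p/q$ and $1+r\ge(1-p)/(1-q)$ over $q\in[0,1/d]$: the first bound is decreasing in $q$ and the second increasing, so if $p\le 1/d$ one can take $q=p$ and $r=0$; if $p>1/d$ the first bound forces $q=1/d$, giving $1+r=dp$, and one checks the second bound $(1-p)/(1-1/d)=d(1-p)/(d-1)\le dp\iff 1-p\le(d-1)p\iff p\ge 1/d$ holds automatically. Thus $R^g_\SEP=(dp-1)_+$, and since every step used only $q\le 1/d$, which characterises both $\SEP$ and $\PPT$ isotropic states, the identical computation yields $R^g_\PPT=(dp-1)_+$, proving \eqref{robustness_isotropic}.

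For the final two sentences of the statement: an isotropic state is separable iff PPT iff $p\le 1/d$ follows because $R^g=0$ exactly when $\rho$ lies in the (closed) set, so $\rho\in\SEP\iff(dp-1)_+=0\iff p\le 1/d$, and likewise for $\PPT$; alternatively one reads it off the partial transpose $\big(p\Phi_d+(1-p)\tau_d\big)^\Gamma$, whose minimal eigenvalue is nonnegative precisely when $p\le 1/d$. The last claim — that every separable or PPT isotropic state is a convex combination of $\tau_d$ (the $p=0$ state) and $\tfrac1d\Phi_d+\tfrac{d-1}{d}\tau_d$ (the $p=1/d$ state) — is then immediate: the isotropic family is the line segment $\{p\Phi_d+(1-p)\tau_d:p\in[0,1]\}$ parametrised affinely by $p$, so the subsegment $p\in[0,1/d]$ is exactly the convex hull of its two endpoints, which are the two named states. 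I expect no serious obstacle here; the only mild care needed is the twirling-invariance reduction (making sure $\TT$ preserves $\PPT$, which follows since $\TT$ is itself a PPT operation, being LOCC) and keeping the two eigenvalue inequalities straight in the scalar optimisation.
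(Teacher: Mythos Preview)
Your argument is correct and complete in all essential steps; the only informal spot is the placeholder ``$\sigma^\Gamma=\tfrac{qd-?}{\cdots}$'', but the underlying fact (isotropic states are separable iff PPT iff $q\le 1/d$) is standard and easy to supply.

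Your route, however, differs from the paper's. You use the twirling invariance of isotropic states to reduce the optimisation in~\eqref{generalised_robustness} to isotropic $\sigma$, and then solve the resulting two-inequality scalar problem over $q\in[0,1/d]$. The paper instead works by a direct primal--dual sandwich: for the upper bound it exhibits the explicit ansatz $\rho\le dp\,\sigma_d$ with $\sigma_d=\tfrac1d\Phi_d+\tfrac{d-1}{d}\tau_d$ separable (shown via $\sigma_d=\TT(\ketbra{00})$), and for the lower bound it uses the partial transpose as a witness, computing $dp=d\Tr\rho\Phi_d\le d\lambda\Tr\sigma\Phi_d=\lambda\Tr\sigma^\Gamma F\le\lambda$ for any $\sigma\in\PPT$ with $\rho\le\lambda\sigma$. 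The paper's approach is slightly shorter and yields the $\PPT$ lower bound without needing to first characterise which isotropic states are PPT; your symmetry-reduction approach is more systematic and would generalise more readily to other twirling-invariant families, but it relies on knowing the separability range $q\le 1/d$ as input rather than deriving it as a by-product.
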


\begin{proof}
The fact that $\tau_d$ and $\sigma_d \coloneqq \frac1d\Phi_d + \frac{d-1}{d}\tau_d$ are separable states can be seen by observing that they can be obtained by twirling two states which are explicitly separable, $\sigma_d = \TT(\ketbra{00})$ and $\tau_d = \TT(\ketbra{01})$~\cite{Horodecki1999}. The result for $p \leq \frac1d$ thus follows immediately.

For $p > \frac{1}{d}$, we have
\begin{equation}\begin{aligned}
    p\Phi_d + (1-p)\tau_d \leq p \Phi_d + (dp - p) \tau_d = d p\, \sigma_d,
\end{aligned}\end{equation}
which gives a feasible solution to the definition of $R^g_\SEP$ in~\eqref{generalised_robustness}, yielding $R^g_\SEP( p\Phi_d + (1-p)\tau_d) \leq dp - 1$. 
To see the opposite inequality, consider that for any $\sigma \in \PPT$ such that $p\Phi_d + (1-p)\tau_d \leq \lambda \sigma$ it holds that
\begin{equation}\begin{aligned}
    d p &= d \Tr \big[ (p\Phi_d + (1-p)\tau_d) \Phi_d \big] \\
    &\leq d \lambda \Tr \sigma \Phi_d\\
    &= d \lambda \Tr \sigma^\Gamma \Phi_d^\Gamma\\
    &\eqt{(i)} \lambda \Tr \sigma^\Gamma F\\
    &\leqt{(ii)} \lambda \left\|F\right\|_\infty\\
    &= \lambda,
\end{aligned}\end{equation}
where in (i) we used the fact that $\Phi_d^\Gamma = \frac1d F$ where $F$ is the swap operator with eigenvalues $\pm 1$, and (ii) follows by the fact that $\sigma^\Gamma$ is a valid state. This means that $R^g_\PPT(p\Phi_d + (1-p)\tau_d) \geq dp - 1$, and by the inclusion $\SEP \subseteq \PPT$, $R^g_\SEP$ cannot be smaller than $R^g_\PPT$.
\end{proof}

Isotropic states can be characterised precisely as those states $\rho$ that are invariant under twirling, i.e.\ such that $\TT(\rho) = \rho$. Using this fact, it can be shown that $\TT$ admits the alternative expression
\bb
\TT(X) = \Phi_d \Tr[X\Phi_d] + \tau_d \Tr[X (\id-\Phi_d)]\, .
\label{twirling_simplified}
\ee
An elementary and well known consequence of this fact is as follows.

\begin{lemma}[{(Twirling reduction lemma)}] \label{twirling_reduction_lemma}
Let $\NN$ be a quantum channel acting on a system of dimension $d$. Then:
\begin{enumerate}[(a)]
\item there exists a measurement operator $E\in [0,\id]$ such that
\bb
\big(\TT \circ \NN\big) (X) = \Phi_d \Tr[EX] + \tau_d \Tr[(\id-E)X]\, ;
\ee
\item there exist two states $\tilde{\rho},\omega$ such that
\bb
\big(\NN\circ \TT\big) (X) = \tilde{\rho} \Tr[X\Phi_d] + \omega \Tr[X (\id-\Phi_d)]\, .
\ee
\end{enumerate}
\end{lemma}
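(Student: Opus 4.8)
The plan is to reduce everything to the simplified expression~\eqref{twirling_simplified} for the twirling channel, $\TT(X) = \Phi_d \Tr[X\Phi_d] + \tau_d \Tr[X(\id-\Phi_d)]$, applied to the appropriate composite channel. For part~(a), I would compose on the left: since $\TT\circ\NN$ is itself a quantum channel, applying~\eqref{twirling_simplified} to the operator $\NN(X)$ gives $(\TT\circ\NN)(X) = \Phi_d\,\Tr[\NN(X)\Phi_d] + \tau_d\,\Tr[\NN(X)(\id-\Phi_d)]$. Now I use the adjoint: $\Tr[\NN(X)\Phi_d] = \Tr[X\,\NN^\dag(\Phi_d)]$. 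Setting $E \coloneqq \NN^\dag(\Phi_d)$, this is a legitimate measurement operator, i.e.\ $E\in[0,\id]$, because $\NN^\dag$ is completely positive (hence positive) so $E = \NN^\dag(\Phi_d)\geq 0$, and $\NN^\dag(\id)=\id$ by trace preservation of $\NN$ together with $\Phi_d\leq\id$, so $E\leq\NN^\dag(\id)=\id$. Then $\Tr[\NN(X)(\id-\Phi_d)] = \Tr[\NN(X)] - \Tr[\NN(X)\Phi_d] = \Tr[X] - \Tr[EX] = \Tr[(\id-E)X]$, again using trace preservation of $\NN$. Substituting back yields exactly the claimed form $(\TT\circ\NN)(X) = \Phi_d\Tr[EX] + \tau_d\Tr[(\id-E)X]$.

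For part~(b), I would compose on the right: by~\eqref{twirling_simplified}, $\TT(X) = \Phi_d\,\Tr[X\Phi_d] + \tau_d\,\Tr[X(\id-\Phi_d)]$, so applying the linear map $\NN$ and using linearity gives
\begin{equation*}
(\NN\circ\TT)(X) = \NN(\Phi_d)\,\Tr[X\Phi_d] + \NN(\tau_d)\,\Tr[X(\id-\Phi_d)]\, .
\end{equation*}
Setting $\tilde\rho\coloneqq\NN(\Phi_d)$ and $\omega\coloneqq\NN(\tau_d)$, these are states because $\NN$ is CPTP and $\Phi_d,\tau_d$ are states, which is precisely the asserted identity.

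I do not expect any serious obstacle here; the only points requiring a modicum of care are (i) verifying that the coefficient operator in part~(a) genuinely lies in $[0,\id]$ — which is where complete positivity and trace preservation of $\NN$ (equivalently, positivity and unitality of $\NN^\dag$) enter — and (ii) being slightly careful that~\eqref{twirling_simplified} is being applied to the correct argument in each case (to $\NN(X)$ in part~(a), to $X$ itself before applying $\NN$ in part~(b)). Everything else is a direct substitution and the use of the defining property $\Tr[\NN(X)\Phi_d] = \Tr[X\,\NN^\dag(\Phi_d)]$ of the adjoint from~\eqref{dual_map}.
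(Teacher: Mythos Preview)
Your proof is correct and follows exactly the same approach as the paper: set $E \coloneqq \NN^\dag(\Phi_d)$ for part~(a) and $\tilde\rho \coloneqq \NN(\Phi_d)$, $\omega \coloneqq \NN(\tau_d)$ for part~(b), both derived from~\eqref{twirling_simplified}. The paper's proof is terser and leaves the verifications that $E\in[0,\id]$ and that $\tilde\rho,\omega$ are states to the reader, whereas you spell these out explicitly.
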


\begin{proof}
The claims follow straightforwardly from~\eqref{twirling_simplified}. For~(a), it suffices to set $E\coloneqq \NN^\dag(\Phi_d)$, where $\NN^\dag$ is the dual map to $\NN$, defined by~\eqref{dual_map}. As for~(b), one takes $\tilde{\rho} \coloneqq \NN(\Phi_d)$ and $\omega \coloneqq \NN(\tau_d)$.
\end{proof}

Another interesting family of states on a $d\times d$ bipartite system is that of \deff{Werner states}~\cite{Werner}. These are defined as convex combinations of the two extremal Werner states, the \deff{antisymmetric state} and the \deff{symmetric state}, respectively given by
\bb
\alpha_d \coloneqq \frac{\id - F}{d(d-1)}\, ,\qquad \sigma_d \coloneqq \frac{\id + F}{d(d+1)}\, .
\label{antisymmetric_and_symmetric}
\ee
An arbitrary Werner state can thus be written as $p\alpha_d + (1-p) \sigma_d$. Also Werner states can be thought of as belonging to a symmetric family. Defining the \deff{Werner twirling} as the modified version of the map in~\eqref{twirling} given by~\cite{Werner}
\bb
\TT_W(X) \coloneqq \int \dd U\ (U\otimes U) X (U\otimes U)^\dag\, ,
\label{Werner_twirling}
\ee
one can see that Werner states are precisely those states that are invariant under the action of $\TT_W$, i.e.\ such that $\TT_W(\rho) = \rho$. This fact allows to obtain the alternative expression
\bb
\TT_W(X) = \alpha_d \Tr\Big[X\,\frac{\id - F}{2}\Big] + \sigma_d \Tr\Big[X\,\frac{\id + F}{2}\Big]\, .
\label{Werner_twirling_simplified}
\ee
As is the case for $\TT$, also $\TT_W$ can be realised with local operations and shared randomness only, and it is therefore in particular an LOCC.

\subsection{Free operations: non-entangling and dually non-entangling}

Brand\~{a}o, Plenio~\cite{BrandaoPlenio1, BrandaoPlenio2}, and Datta~\cite{Brandao-Datta} (see also~\cite{irreversibility}) have considered entanglement manipulation under the set of \deff{non-entangling operations}, defined by
\bb
\sepp(A\to B) \coloneqq \left\{ \Xi \in \cptp(A\to B):\ \Xi(\SEP)\subseteq \SEP \right\} ,
\ee
where $\cptp(A\to B)$ denotes the set of completely positive and trace-preserving linear maps $\TT(\HH_A)\to \TT(\HH_B)$ --- here, $\TT(\HH)$ denotes the space of trace class operators on the Hilbert space $\HH$. When there is no ambiguity concerning the underlying systems, or simply when we have not specified the systems in question, we will simply write $\sepp$ instead of $\sepp(A\to B)$.

The above set of quantum channels is well known to be larger and more powerful than the set of local operations assisted by classical communication (LOCC)~\cite{BrandaoPlenio2, irreversibility, gap}. In order to get a better outer approximation of the latter, following~\cite{Chitambar2020} we can define the set of \deff{dually non-entangling operations} by setting
\bb
\dne &\coloneqq \left\{ \Xi \in \cptp:\ \Xi(\SEP)\subseteq \SEP,\ \Xi^\dag(\SEP)\subseteq \cone(\SEP) \right\} \\
&\hphantom{:}= \left\{ \Xi \in \cptp:\ \Xi(\SEP)\subseteq \SEP,\ \Xi^\dag(\sep)\subseteq \sep \right\},
\ee
where $\Xi^\dag(\sep)$ is the set of measurements $(\Xi^\dag(E_x^{AB}))_x$ for $(E_x^{AB})_x \in \sep$. It is not difficult to see that we have $\rm LOCC \subseteq SEP \subseteq DNE \subset NE$, where SEP are the separable operations~\cite{Rains1997}, and indeed all of these inclusions can be strict.

For a given $\delta\geq 0$, we can also consider the set of \deff{$\boldsymbol{\big(R_\SEP^g,\delta\big)}$-approximately non-entangling operations}, or simply \deff{$\boldsymbol{\delta}$-approximately non-entangling operations}, defined by
\bb
\sepp^g_\delta \coloneqq \left\{ \Xi \in \cptp:\ R_\SEP^g\left(\Xi(\sigma)\right) \leq \delta\ \ \forall\ \sigma\in\SEP \right\} ,
\ee
where $R^g_\SEP$, defined by~\eqref{generalised_robustness}, is the generalised robustness. Analogously, the set of \deff{$\boldsymbol{\delta}$-approximately dually non-entangling operations} can be constructed as
\bb
\dne^g_\delta \coloneqq \left\{ \Xi \in \cptp:\ R_\SEP^g\left(\Xi(\sigma)\right) \leq \delta\ \ \forall\ \sigma\in\SEP,\ \Xi^\dag(\SEP)\subseteq \cone(\SEP) \right\} .
\ee
\begin{rem}
We stress the explicit dependence of the sets of approximately (dually) non-entangling operations on the choice of the measure of entanglement $R^g_\SEP$. This is motivated by a similar choice taken in~\cite{BrandaoPlenio2}, and in particular by the fact that quantities such as entanglement cost can be evaluated exactly when such operations are considered. As shown in Ref.~\cite{irreversibility}, the asymptotic properties of entanglement manipulation depend crucially on the choice of a measure here, and even a small change may have catastrophic consequences for the transformation rates. We regard the choice of $R^g_\SEP$ as a permissive choice that leads to a well-behaved asymptotic theory, but it is by no means unique.
\end{rem}

We note here that many of our results will immediately apply not only to DNE operations, but also to \emph{dually PPT-preserving operations} and their approximate variant, defined analogously as
\bb
\dpptp &\coloneqq \left\{ \Xi \in \cptp:\ \Xi(\PPT)\subseteq \PPT,\ \Xi^\dag(\PPT)\subseteq \cone(\PPT) \right\} \\
\dpptp^g_\delta &\coloneqq \left\{ \Xi \in \cptp:\ R_\PPT^g\left(\Xi(\sigma)\right) \leq \delta\ \ \forall\ \sigma\in\PPT,\ \Xi^\dag(\PPT)\subseteq \cone(\PPT) \right\} .
\ee
Such operations can be noted to be an outer approximation to the well-studied set of PPT operations~\cite{Rains2001}, defined as those maps $\Xi$ for which $\Gamma \circ \Xi \circ \Gamma \in \rm CPTP$.

\subsection{Distillable entanglement and entanglement cost}

For $\e\in [0,1)$, $\delta\geq 0$, $\FF\in \{\sepp,\, \dne\}$, and an arbitrary bipartite state $\rho_{AB}$, one defines the associated \deff{one-shot distillable entanglement} and \deff{one-shot entanglement cost} by
\begin{align}
E_{d,\,\FF^g_\delta}^{(1),\,\e} (\rho_{AB}) &\coloneqq \max\left\{m\in \N:\, \inf_{\Theta \in \FF^g_\delta\left(AB\to A_0^{m}B_0^{m}\right)} \frac12 \left\| \Theta\left( \rho_{AB} \right) - \Phi_2^{\otimes m} \right\|_1 \leq \e \right\} , \label{one_shot_distillable} \\[.5ex]
E_{c,\,\FF^g_\delta}^{(1),\,\e} (\rho_{AB}) &\coloneqq \min \left\{m\in \N:\, \inf_{\Lambda \in \FF^g_\delta\left(A_0^{m}B_0^{m}\to AB\right)} \frac12 \left\| \Lambda\left( \Phi_2^{\otimes m} \right) - \rho_{AB} \right\|_1 \leq \e \right\} . \label{one_shot_cost}
\end{align}
If $\delta=0$, we will use also the shorthand notation
\bb
E_{d,\, \FF}^{(1),\,\e} (\rho_{AB}) \coloneqq E_{d,\, \FF^g_0}^{(1),\,\e} (\rho_{AB})\, ,\qquad E_{c,\, \FF}^{(1),\,\e}(\rho_{AB}) \coloneqq E_{c,\, \FF^g_0}^{(1),\,\e} (\rho_{AB})\, .
\ee
To define the corresponding asymptotic quantities, we pick a sequence $(\delta_n)$ of non-negative numbers $\delta_n\geq 0$. The \deff{distillable entanglement} and \deff{entanglement cost} under operations in $\FF^g_{(\delta_n)}$ are thus given by
\begin{align}
E_{d,\,\FF^g_{(\delta_n)}}^\e (\rho_{AB}) &\coloneqq \liminf_{n\to\infty} \frac1n\, E_{d,\,\FF^g_{\delta_n}}^{(1),\,\e}\!\!\left(\rho_{AB}^{\otimes n}\right) , \label{distillable_A_error} \\
E_{c,\,\FF^g_{(\delta_n)}}^\e (\rho_{AB}) &\coloneqq \limsup_{n\to\infty} \frac1n\, E_{c,\,\FF^g_{\delta_n}}^{(1),\,\e} \!\!\left(\rho_{AB}^{\otimes n}\right) . \label{cost_A_error}
\end{align}
Similarly to before, we will also write
\bb
E_{d,\,\FF}^\e(\rho_{AB}) \coloneqq E_{d,\,\FF^g_{(0)}}^\e(\rho_{AB})\, ,\qquad E_{c,\,\FF}^\e(\rho_{AB}) \coloneqq E_{c,\,\FF^g_{(0)}}^\e(\rho_{AB})\, .
\label{distillable_cost_error}
\ee
In all cases, omitting the superscript $\e$ will mean that we take the (always well-defined) limit $\e\to 0^+$. More pedantically,
\begin{align}
E_{d,\, \FF^g_{(\delta_n)}}\!(\rho_{AB}) &\coloneqq \inf_{\e\in (0,1)} E_{d,\, \FF^g_{(\delta_n)}}^\e \!(\rho_{AB}) = \lim_{\e\to 0^+} E_{d,\, \FF^g_{(\delta_n)}}^\e \!(\rho_{AB})\, , \label{distillable_A} \\
E_{c,\, \FF^g_{(\delta_n)}}\!(\rho_{AB}) &\coloneqq \sup_{\e\in (0,1)} E_{c,\, \FF^g_{(\delta_n)}}^\e \!(\rho_{AB}) = \lim_{\e\to 0^+} E_{c,\, \FF^g_{(\delta_n)}}^\e \!(\rho_{AB})\, , \label{cost_A} \\
E_{d,\, \FF}(\rho_{AB}) &\coloneqq \inf_{\e\in (0,1)} E_{d,\, \FF}^\e (\rho_{AB}) = \lim_{\e\to 0^+} E_{d,\, \FF}^\e (\rho_{AB})\, , \label{distillable} \\
E_{c,\, \FF}(\rho_{AB}) &\coloneqq \sup_{\e\in (0,1)} E_{c,\, \FF}^\e (\rho_{AB}) = \lim_{\e\to 0^+} E_{c,\, \FF}^\e (\rho_{AB})\, . \label{cost}
\end{align}

Let us establish some convenient notation. For a given positive integer $d\in \N_+$ and some measurement operator $E = E_{AB} \in [0,\id]$ on a bipartite system $AB$, we define the associated channel $\Theta_{d;\, E}$, mapping states on $AB$ to states on a bipartite $d\times d$ system, by the formula
\bb
\Theta_{d;\, E} (X) = \Phi_d \Tr[EX] + \tau_d \Tr[(\id-E)X]\, .
\label{canonical_form_distillation_map}
\ee
Given two states $\tilde{\rho} = \tilde{\rho}_{AB}$ and $\omega = \omega_{AB}$ on $AB$, we also define the map $\Lambda_{d;\, \tilde{\rho},\omega}$, mapping states on a $d\times d$ bipartite system to states on $AB$, by
\bb
\Lambda_{d;\, \tilde{\rho},\omega} (X) = \tilde{\rho} \Tr[X\Phi_d] + \omega \Tr [X (\id-\Phi_d)]\, .
\label{canonical_form_dilution_map}
\ee
With the above terminology, we can then prove the following.

\begin{lemma} \label{twirling_reduction_entanglement_manipulation_lemma}
For an arbitrary bipartite state $\rho_{AB}$, a class of free operations $\FF\in\{\sepp,\dne\}$, some $\delta\geq 0$, and an error threshold $\e\in [0,1)$, the associated one-shot distillable entanglement~\eqref{one_shot_distillable} and one-shot entanglement cost~\eqref{one_shot_cost} admit the alternate expressions
\begin{align}
E_{d,\,\FF^g_\delta}^{(1),\,\e} (\rho) &= \max\left\{m\in \N:\, \exists\ E\in [0,\id]:\ \ \Theta_{2^m;\, E} \in \FF^g_\delta\left(AB\to A_0^{m}B_0^{m}\right) ,\ \ \Tr[E\rho] \geq 1-\e\right\} , \label{one_shot_distillable_simplified} \\
E_{c,\,\FF^g_\delta}^{(1),\,\e} (\rho) &= \min \left\{m\in \N:\, \exists\ \tilde{\rho},\omega:\ \ \Lambda_{2^m;\, \tilde{\rho},\omega} \in \FF^g_\delta\left(A_0^{m}B_0^{m}\to AB\right) ,\ \ \tilde{\rho}\in B_\e(\rho) \right\} . \label{one_shot_cost_simplified}
\end{align}
\end{lemma}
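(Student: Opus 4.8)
The plan is to derive both identities from the twirling reduction lemma (Lemma~\ref{twirling_reduction_lemma}), applied in dimension $d=2^m$ --- identifying $\Phi_2^{\otimes m}$ with the maximally entangled state $\Phi_d$ on the $d\times d$ system --- together with the fact that the twirling channel $\TT$ in dimension $d$ is an LOCC, hence in particular (dually) non-entangling, and fixes the maximally entangled state: $\TT(\Phi_d)=\Phi_d$. For both identities I would take an arbitrary protocol attaining the optimum on the left-hand side and compose it with $\TT$ (post-composing for distillation, pre-composing for dilution); Lemma~\ref{twirling_reduction_lemma} then forces the composed protocol into the canonical form~\eqref{canonical_form_distillation_map} or~\eqref{canonical_form_dilution_map}, while the fixed-point property of $\TT$ guarantees that the error does not increase.

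For~\eqref{one_shot_distillable_simplified}, the inequality ``$\geq$'' is immediate: for any $E\in[0,\id]$ with $\Theta_{d;E}\in\FF^g_\delta$ and $\Tr[E\rho]\geq 1-\e$, formula~\eqref{canonical_form_distillation_map} gives $\Theta_{d;E}(\rho)-\Phi_d=(1-\Tr[E\rho])(\tau_d-\Phi_d)$, and since $\Phi_d,\tau_d$ have orthogonal supports, $\|\Phi_d-\tau_d\|_1=2$, so the distillation error is exactly $1-\Tr[E\rho]\leq\e$ and $m$ is feasible on the left. For ``$\leq$'', I would take a feasible $\Theta\in\FF^g_\delta(AB\to A_0^mB_0^m)$ and set $E\coloneqq\Theta^\dag(\Phi_d)$, which satisfies $0\leq E\leq\id$ since $\Theta^\dag$ is positive and unital; by Lemma~\ref{twirling_reduction_lemma}(a), $\TT\circ\Theta=\Theta_{d;E}$. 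Then I would check that $\TT\circ\Theta$ still lies in $\FF^g_\delta$ and that data processing of the trace norm under $\TT$, together with $\TT(\Phi_d)=\Phi_d$, yields $\tfrac12\|\Theta_{d;E}(\rho)-\Phi_d\|_1\leq\tfrac12\|\Theta(\rho)-\Phi_d\|_1\leq\e$; combined with the displayed identity, this error equals $1-\Tr[E\rho]$, so $\Tr[E\rho]\geq 1-\e$ and $(m,E)$ is feasible on the right.

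The dilution identity~\eqref{one_shot_cost_simplified} is handled symmetrically, pre-composing with the twirling channel in dimension $d=2^m$ acting on the input. From a feasible $\Lambda\in\FF^g_\delta(A_0^mB_0^m\to AB)$ one sets $\tilde\rho\coloneqq\Lambda(\Phi_d)$ and $\omega\coloneqq\Lambda(\tau_d)$; Lemma~\ref{twirling_reduction_lemma}(b) gives $\Lambda\circ\TT=\Lambda_{d;\tilde\rho,\omega}$, which (again) lies in $\FF^g_\delta$, and since $\TT(\Phi_d)=\Phi_d$ its action on $\Phi_d$ is unchanged, so $\tfrac12\|\tilde\rho-\rho\|_1=\tfrac12\|\Lambda(\Phi_d)-\rho\|_1\leq\e$, i.e.\ $\tilde\rho\in B_\e(\rho)$. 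Conversely, $\Lambda_{d;\tilde\rho,\omega}(\Phi_d)=\tilde\rho$ by~\eqref{canonical_form_dilution_map} and $\Tr[\Phi_d\Phi_d]=1$, so any right-hand-side feasible triple yields dilution error $\tfrac12\|\tilde\rho-\rho\|_1\leq\e$; since both sides are minima, they agree.

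The step I expect to require the most care is verifying that the \emph{approximate} classes $\sepp^g_\delta$ and $\dne^g_\delta$ are closed under composition with $\TT$ on either side. The set conditions $\Xi(\SEP)\subseteq\SEP$ and, for DNE, $\Xi^\dag(\SEP)\subseteq\cone(\SEP)$ are easy, using $\TT(\SEP)\subseteq\SEP$, $\TT(\cone(\SEP))\subseteq\cone(\SEP)$, and $\TT^\dag=\TT$. The approximate constraint $R^g_\SEP(\Xi(\sigma))\leq\delta$ for all separable $\sigma$, however, relies on the monotonicity of the generalised robustness under non-entangling maps --- equivalently, that $D_{\max}(\cdot\|\SEP)$ is non-increasing, which follows from~\eqref{data_processing_D_max} and $\TT(\SEP)\subseteq\SEP$ --- and this, as well as the elementary trace-norm bookkeeping powered by $\|\Phi_d-\tau_d\|_1=2$, has to be carried out uniformly over $\FF\in\{\sepp,\dne\}$ and over $\delta=0$ versus $\delta>0$.
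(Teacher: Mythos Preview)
Your proposal is correct and follows exactly the approach the paper indicates: the paper's proof merely says the claim follows from Lemma~\ref{twirling_reduction_lemma} together with the invariance of the maximally entangled state under twirling, leaving the details to the reader. You have filled in precisely those details --- composing with $\TT$ on the appropriate side, invoking Lemma~\ref{twirling_reduction_lemma}(a)/(b), and checking closure of $\FF^g_\delta$ under composition with $\TT$ via $\TT(\SEP)\subseteq\SEP$, $\TT^\dag=\TT$, and monotonicity of $R^g_\SEP$ under non-entangling maps --- and the trace-norm bookkeeping is correct.
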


\begin{proof}
Follows from Lemma~\ref{twirling_reduction_lemma}, together with the observation that the maximally entangled state is invariant under twirling~\eqref{twirling}. The details are left to the reader.
\end{proof}

\section{Distillable entanglement under (approximately) DNE operations} \label{sec_distillable}

Throughout this section we will compute the distillable entanglement under dually non-entangling operations, proving that it coincides with the regularised Piani relative entropy of entanglement $D^{\sep,\infty}$ defined in Section~\ref{subsec_entanglement_measures}.

\subsection{One-shot distillable entanglement under approximately DNE operations}

We first start with an estimate of the one-shot distillable entanglement, defined by~\eqref{one_shot_distillable} for $\FF = \dne$. The following can be thought of as a generalisation of~\cite[Theorem~1]{Brandao-Datta}.


\begin{prop} \label{one_shot_distillable_dne_prop}
For all bipartite states $\rho=\rho_{AB}$, all $\e\in [0,1]$, and all $\delta\geq 0$, it holds that
\bb
\floor{D_H^{\sep\!,\,\e}(\rho\| \SEP)} \leq E_{d,\,\dne}^{(1),\,\e} (\rho_{AB}) \leq E_{d,\,\dne^g_\delta}^{(1),\,\e} (\rho_{AB}) \leq D_H^{\sep\!,\,\e}(\rho\| \SEP) + \log_2(1+\delta)+1\, ,
\label{one_shot_distillable_dne}
\ee
where, according to~\eqref{min_free_DH}, we set $D_H^{\sep\!,\,\e}(\rho\|\SEP) \coloneqq \inf_{\sigma\in \SEP(A:B)} D_H^{\sep\!,\,\e}(\rho_{AB}\| \sigma_{AB})$.
\end{prop}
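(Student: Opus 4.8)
The plan is to establish the two interesting inequalities in~\eqref{one_shot_distillable_dne} — the lower bound $\floor{D_H^{\sep,\e}(\rho\|\SEP)} \leq E_{d,\dne}^{(1),\e}(\rho)$ and the upper bound $E_{d,\dne^g_\delta}^{(1),\e}(\rho) \leq D_H^{\sep,\e}(\rho\|\SEP) + \log_2(1+\delta) + 1$ — while the middle inequality $E_{d,\dne}^{(1),\e} \leq E_{d,\dne^g_\delta}^{(1),\e}$ is immediate from $\dne = \dne^g_0 \subseteq \dne^g_\delta$. Throughout I would work with the simplified characterisation of one-shot distillation from Lemma~\ref{twirling_reduction_entanglement_manipulation_lemma}: distilling $m$ ebits up to error $\e$ with an operation in $\FF^g_\delta$ is equivalent to finding a measurement operator $E \in [0,\id]$ with $\Tr[E\rho] \geq 1-\e$ such that the twirled channel $\Theta_{2^m;\,E}$ (defined in~\eqref{canonical_form_distillation_map}) lies in $\FF^g_\delta$. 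So the whole problem reduces to understanding \emph{when} $\Theta_{2^m;\,E}$ is (approximately) dually non-entangling.

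For the \textbf{lower bound}, I would start from a near-optimal binary separable measurement $(E, \id - E)$ witnessing $D_H^{\sep,\e}(\rho\|\SEP)$: it satisfies $\Tr[E\rho]\geq 1-\e$, both $E$ and $\id-E$ lie in $\cone(\SEP)$, and $\sup_{\sigma\in\SEP}\Tr[E\sigma] \leq 2^{-D_H^{\sep,\e}(\rho\|\SEP)}$ (using the minimax form~\eqref{min_SEP_DH}; one might need to pass to an approximately optimal witness and take limits, but in finite dimension with closed $\sep$ the infimum is attained). Set $m = \floor{D_H^{\sep,\e}(\rho\|\SEP)}$. The key computation is to verify that $\Theta_{2^m;\,E} \in \dne$, i.e.\ that it preserves $\SEP$ in the Schr\"odinger picture and maps separable measurement operators into $\cone(\SEP)$ in the Heisenberg picture. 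For the Schr\"odinger condition: for $\sigma\in\SEP$, $\Theta_{2^m;\,E}(\sigma) = \Phi_{2^m}\Tr[E\sigma] + \tau_{2^m}\Tr[(\id-E)\sigma]$ is an isotropic state with weight $p = \Tr[E\sigma]$ on $\Phi_{2^m}$; by Lemma~\ref{robustness_isotropic_lemma} this is separable iff $p \leq 2^{-m}$, and indeed $\Tr[E\sigma] \leq 2^{-D_H^{\sep,\e}(\rho\|\SEP)} \leq 2^{-m}$. For the Heisenberg condition: $\Theta_{2^m;\,E}^\dag(X) = E\,\Tr[X\Phi_{2^m}] + (\id-E)\,\Tr[X(\id-\Phi_{2^m})]$, and since $E, \id-E \in \cone(\SEP)$ and the two traces are nonnegative for $X \geq 0$, this is a nonnegative combination of separable cone elements, hence in $\cone(\SEP)$. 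Thus $\Theta_{2^m;\,E} \in \dne$, and Lemma~\ref{twirling_reduction_entanglement_manipulation_lemma} gives $E_{d,\dne}^{(1),\e}(\rho) \geq m$.

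For the \textbf{upper bound}, I would run the argument in reverse. Suppose $m = E_{d,\dne^g_\delta}^{(1),\e}(\rho)$ ebits can be distilled; by Lemma~\ref{twirling_reduction_entanglement_manipulation_lemma} there is $E\in[0,\id]$ with $\Tr[E\rho]\geq 1-\e$ and $\Theta_{2^m;\,E}\in\dne^g_\delta$. The dual-non-entangling part of the constraint, $\Theta_{2^m;\,E}^\dag(\SEP)\subseteq\cone(\SEP)$, forces $E = \Theta_{2^m;\,E}^\dag(\Phi_{2^m}) \cdot$(something) — more precisely, plugging a suitable separable state into $\Theta^\dag$ and using that $\Phi_{2^m}$ and $\id - \Phi_{2^m}$ separately appear, one extracts that $E\in\cone(\SEP)$ (and likewise $\id - E \in \cone(\SEP)$), so $(E,\id-E)$ is a binary separable measurement. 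The approximate-non-entangling part controls the overlap: for every $\sigma\in\SEP$, $R_\SEP^g(\Theta_{2^m;\,E}(\sigma))\leq\delta$, and since $\Theta_{2^m;\,E}(\sigma)$ is the isotropic state with weight $p = \Tr[E\sigma]$, Lemma~\ref{robustness_isotropic_lemma} gives $R_\SEP^g = (2^m p - 1)_+ \leq \delta$, i.e.\ $\Tr[E\sigma] \leq (1+\delta)2^{-m}$ for all $\sigma\in\SEP$. Therefore $\sup_{\sigma\in\SEP}\Tr[E\sigma]\leq(1+\delta)2^{-m}$, which by the definition~\eqref{min_SEP_DH} yields $D_H^{\sep,\e}(\rho\|\SEP)\geq -\log_2\big((1+\delta)2^{-m}\big) = m - \log_2(1+\delta)$, i.e.\ $m \leq D_H^{\sep,\e}(\rho\|\SEP)+\log_2(1+\delta)$. (The extra $+1$ absorbs the rounding $m\in\N$ and any $\e$-slack if one does not take the cleaner attained-optimum route.)

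\textbf{Main obstacle.} The routine part is the two isotropic-state computations; the genuinely delicate point is the Heisenberg-side bookkeeping in the upper bound — showing rigorously that $\Theta_{2^m;\,E}\in\dne^g_\delta$ \emph{forces} $(E,\id-E)$ to be a separable binary measurement, rather than merely some weaker condition. One has to verify that $\Theta_{2^m;\,E}^\dag$ maps $\cone(\SEP)$ into $\cone(\SEP)$ exactly when $E,\id-E\in\cone(\SEP)$, which requires exhibiting separable operators whose images under $\Theta^\dag$ isolate $E$ and $\id-E$ (e.g.\ using that $\Phi_{2^m}$ is, up to scaling, in the span of separable operators, or picking product states $\ketbra{jj}$ and averaging). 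A secondary technical nuisance is handling the case $D_H^{\sep,\e}(\rho\|\SEP)$ non-integer or $+\infty$, and whether to phrase the argument with exact optima (attained since $\sep$ is closed and convex in finite dimension) or with $\nu$-approximate witnesses and a limiting argument; the $+1$ in the statement gives enough room that either route works.
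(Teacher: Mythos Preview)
Your lower bound argument is correct and matches the paper: a separable binary measurement $(E,\id-E)$ achieving the hypothesis-testing bound makes $\Theta_{2^m;\,E}$ dually non-entangling, because the Heisenberg condition $\Theta^\dag(\cone(\SEP))\subseteq\cone(\SEP)$ is automatic when $E,\id-E\in\cone(\SEP)$.

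The upper bound, however, has a genuine gap, and it lies exactly where you flagged your ``main obstacle.'' The dual-non-entangling condition on $\Theta_{2^m;\,E}$ does \emph{not} force $E\in\cone(\SEP)$. Computing $\Theta_{2^m;\,E}^\dag(Y)=E\,\Tr[\Phi_{2^m}Y]+(\id-E)\,\Tr[\tau_{2^m}Y]$ and ranging $Y$ over $\cone(\SEP)$, the extremal separable inputs give precisely the two conditions $\id-E\in\cone(\SEP)$ and $\id+2^m E\in\cone(\SEP)$ (the latter coming from the separable isotropic state $\frac{1}{2^m}\Phi_{2^m}+\frac{2^m-1}{2^m}\tau_{2^m}$). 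You cannot ``isolate $E$'' by choosing a separable $Y$: the only state with $\Tr[\tau_{2^m}Y]=0$ is $\Phi_{2^m}$ itself, which is maximally entangled. So $(E,\id-E)$ need not be a separable measurement, and your direct route to $D_H^{\sep,\e}(\rho\|\SEP)\geq m-\log_2(1+\delta)$ fails. A tell-tale sign: your argument, if it worked, would give the bound \emph{without} the $+1$, so your explanation of the $+1$ as mere ``rounding slack'' is a symptom of the gap.

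The paper closes this gap in two alternative ways. The main proof abandons the reduction to $\Theta_{2^m;\,E}$ for the converse and instead uses a data-processing argument: for any $\Lambda\in\dne^g_\delta$ one has $D_H^{\sep,\e}(\rho\|\SEP)+\log_2(1+\delta)\geq D_H^{\sep,\e}(\Lambda(\rho)\|\SEP)$, and then one tests $\Lambda(\rho)$ against separable states with the explicit separable measurement $\big(\Phi_{2^m}+\tfrac{1}{2^m+1}(\id-\Phi_{2^m}),\ \tfrac{2^m}{2^m+1}(\id-\Phi_{2^m})\big)$, which yields $m-1$. The alternative (closer to your intended route) is to \emph{modify} the test: set $E'\coloneqq\frac{E+2^{-m}\id}{1+2^{-m}}=\frac{\id+2^mE}{2^m+1}$, which \emph{is} in $\cone(\SEP)$ by the DNE condition $\id+2^mE\in\cone(\SEP)$, and likewise $\id-E'=\frac{\id-E}{1+2^{-m}}\in\cone(\SEP)$. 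One checks $\Tr[E'\rho]\geq 1-\e$ and $\sup_{\sigma\in\SEP}\Tr[E'\sigma]\leq\frac{2^{-m}(2+\delta)}{1+2^{-m}}$, giving $D_H^{\sep,\e}(\rho\|\SEP)\geq m-\log_2(2+\delta)$. Either way, the $+1$ is the genuine cost of passing from the DNE condition on $E$ to an honestly separable test.
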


\begin{proof}
Using Lemma~\ref{twirling_reduction_entanglement_manipulation_lemma}, without loss of generality we can analyse distillation maps of the form $\Theta_{2^m;\, E}$, whose action is defined by~\eqref{canonical_form_distillation_map}, and where $E\in [0,\id]$. 
Thanks to Lemma~\ref{robustness_isotropic_lemma} we see that $\Theta_{2^m;\, E}$ is $\delta$-approximately non-entangling if and only if
\bb
\max_{\sigma\in \SEP} \Tr E\sigma \leq 2^{-m} (1+\delta)\, .
\label{ne_condition_E}
\ee
The fact that it distils entanglement from $\rho$ up to error $\e$ translates into the condition
\bb
\Tr E\rho \geq 1-\e\, .
\label{error_condition_E}
\ee
Now, note that the action of the adjoint of $\Theta_{2^m;\, E}$, defined as in~\eqref{dual_map}, is given by
\bb
\Theta_{2^m;\, E}^\dag(Y) = E \Tr \left[ \Phi_2^{\otimes m} Y\right] + (\id-E) \Tr \left[ \tau_{2^m} Y\right] .
\ee
Therefore, remembering that $0\leq \Tr \Phi_2^{\otimes m} \sigma \leq 2^{-m}$ for all separable $\sigma$, we see that $\Lambda^\dag$ is non-entangling if and only if
\bb
\id-E,\ \id + 2^m E \in \cone(\SEP)\, .
\label{dne_condition_E}
\ee
Every feasible measurement for~\eqref{min_SEP_DH}, call it $(E,\id-E)$, satisfies that $E,\id-E\in \cone(\SEP)$. As this immediately implies~\eqref{dne_condition_E}, simply because 
also $\id\in \cone(\SEP)$, we see that it yields a feasible DNE entanglement distillation protocol, whose yield is the maximum $m$ such that~\eqref{ne_condition_E} holds. This proves immediately the first inequality in~\eqref{one_shot_distillable_dne}. 

The second inequality is obvious, as $\dne\subseteq \dne^g_\delta$, hence the distillable entanglement can only increase in going from $\dne$ to $\dne^g_\delta$.

Before proving the converse bound for one-shot distillation under DNE operations, i.e.\ the third inequality in~\eqref{one_shot_distillable_dne}, it is worth comparing this problem with the approach used to establish an upper bound for one-shot distillation under non-entangling operations~\cite{BrandaoPlenio2,Brandao-Datta}. There, after performing a distillation protocol, one performs hypothesis testing with the measurement $( \Phi_2^{\otimes m}, \id - \Phi_2^{\otimes m} )$ to yield a general converse bound based on $D^\e_H$. However, this measurement is not separable --- we thus need to adjust it to find a valid measurement that can be used here.

Let us consider any feasible one-shot distillation protocol $\Lambda \in \dne^g_\delta$ such that $\frac12 \|\Lambda(\rho) - \Phi_2^{\otimes m}\|_1 \leq \e$. Since $E \in \cone(\SEP) \Rightarrow \Lambda^\dagger(E) \in \cone(\SEP)$ and
\begin{equation}\begin{aligned}
	\sup_{\sigma \in \SEP} \Tr \Lambda^\dagger(E) \sigma &\leq \sup_{\sigma' \in \SEP} \Tr E (1+\delta) \sigma'
\end{aligned}\end{equation}
due to the map $\Lambda$ being $\delta$-approximately dually non-entangling, we get that
\begin{equation}\begin{aligned}
	D_H^{\sep\!,\,\e}(\rho\|\SEP) + \log_2(1+\delta)&\geq D_H^{\sep\!,\,\e}(\Lambda(\rho)\|\SEP)\\
	&= - \log_2 \inf\left\{ \sup_{\sigma \in \SEP} \Tr E\sigma:\ (E,\id-E ) \in \sep,\ \Tr E\rho\geq 1-\e \right\} .
\end{aligned}\end{equation}
Let us then pick the measurement
\begin{equation}\begin{aligned}
	(E, \id - E) = \left( \Phi_2^{\otimes m} + \frac{1}{2^m+1} \left( \id - \Phi_2^{\otimes m}\right),\, \frac{2^m}{2^m+1} \left( \id - \Phi_2^{\otimes m}\right) \right)
\end{aligned}\end{equation}
which is separable since both $\id - \Phi_2^{\otimes m}$ and $\Phi_2^{\otimes m} + (2^m-1) \frac{\id - \Phi_2^{\otimes m}}{2^{2m} - 1}$ are in $\cone(\SEP)$ by Lemma~\ref{robustness_isotropic_lemma}, and clearly satisfies
\begin{equation}\begin{aligned}
	\Tr E\Lambda(\rho) \geq \Tr \Phi_2^{\otimes m} \Lambda(\rho) = F(\Phi_2^{\otimes m}\!, \Lambda(\rho)) \geq 1-\e
\end{aligned}\end{equation}
by hypothesis. This gives
\begin{equation}\begin{aligned}
	D_H^{\sep\!,\,\e}(\rho\|\SEP) + \log_2(1+\delta) &\geq - \log_2 \left( \sup_{\sigma \in \SEP} \Tr E\sigma \right)\\
	&=  - \log_2 \left( \sup_{\sigma \in \SEP} \Tr \left(\frac{2^m}{2^m+1}\, \Phi_2^{\otimes m} + \frac{1}{2^m+1} \id \right) \sigma \right)\\
	&= - \log_2 \frac{2}{2^m+1}\\
	&\geq m - 1.
\end{aligned}\end{equation}
\end{proof}

\begin{rem}
A more direct way to show the last inequality in~\eqref{one_shot_distillable_dne}, leading also to a slight improvement in the $\delta$ correction term, is as follows. Given an optimal $\delta$-approximate DNE entanglement distillation protocol, and hence a measurement $(E,\id-E)$ such that~\eqref{ne_condition_E},~\eqref{error_condition_E}, and~\eqref{dne_condition_E} hold for $m=E_{d,\,\dne^g_\delta}^{(1),\,\e}(\rho)$, we can define
\bb
E'\coloneqq \frac{E + 2^{-m}\id}{1+2^{-m}} \in \cone(\SEP)\, .
\ee
Observe that also
\bb
\id - E' = \frac{\id-E}{1+2^{-m}} \in \cone(\SEP)\, ,
\ee
so that $(E',\id-E')\in \sep$. Furthermore, from~\eqref{error_condition_E} we surmise that
\bb
\Tr E' \rho \geq \frac{1-\e + 2^{-m}}{1+2^{-m}} \geq 1-\e\, .
\ee
Consequently, $(E',\id-E')$ is a feasible measurement for~\eqref{min_SEP_DH}. Putting all together, this implies that
\bb
D_H^{\sep\!,\,\e}(\rho\|\SEP) &\geq - \log_2 \sup_{\sigma\in \SEP} \Tr E'\sigma \\
&= - \log_2 \frac{1}{1+2^{-m}} \left( 2^{-m} + \sup_{\sigma\in \SEP} \Tr E\sigma \right) \\
&\geq - \log_2 \frac{2^{-m} (2+\delta)}{1+2^{-m}}\\
&\geq m - \log_2(2+\delta) \\
&= E_{d,\,\dne}^{(1),\,\e}(\rho) - \log_2(2+\delta)\, .
\ee
\end{rem}

\begin{rem}
Because, as shown in Lemma~\ref{robustness_isotropic_lemma}, isotropic states are separable if and only if they are PPT, the proof method applies immediately also to dually PPT-preserving operations. Explicitly, we get
\bb
\floor{D_H^{\ppt\!,\,\e}(\rho\| \PPT)} \leq E_{d,\,\dpptp}^{(1),\,\e} (\rho_{AB}) \leq E_{d,\,\dpptp^g_\delta}^{(1),\,\e} (\rho_{AB}) \leq D_H^{\ppt\!,\,\e}(\rho\| \PPT) + \log_2(1+\delta)+1\, .
\ee
\end{rem}

\subsection{Asymptotic DNE distillable entanglement} \label{subsec_ADNE_distillable}

We are now ready to prove one of our main results, the identity between the DNE distillable entanglement and the regularised Piani relative entropy of entanglement $D^{\sep,\infty}(\cdot\|\SEP)$ defined in Section~\ref{subsec_entanglement_measures}.

\begin{manualthm}{\ref{distillable_dne_thm}}
For all bipartite states $\rho=\rho_{AB}$ the distillable entanglement under dually non-entangling operations coincides with the regularised Piani relative entropy of entanglement, in formula
\begin{equation}
E_{d,\,\dne}(\rho) = D^{\sep,\infty}(\rho\|\SEP)\, .
\label{distillable_dne}
\end{equation}
In particular, all entangled states $\rho$ are DNE distillable, i.e.\ satisfy $E_{d,\,\dne}(\rho) \geq D^{\sep}(\rho\|\SEP) > 0$.

The same result is obtained if one considers instead a sequence of $\delta_n$-approximate DNE operations, as long as $\delta_n$ does not increase exponentially fast in $n$. More precisely, if $\delta_n\geq 0$ and $\delta_n = 2^{o(n)}$ then again
\begin{equation}
E_{d,\,\dne^g_{(\delta_n)}}(\rho) = D^{\sep,\infty}(\rho\|\SEP)\, .
\label{distillable_A_dne}
\end{equation}
\end{manualthm}

\begin{proof}
It suffices to write
\bb
E_{d,\, \dne}(\rho) &\eqt{(i)} \lim_{\e\to 0^+} \liminf_{n\to\infty} \frac1n\, E_{d,\,\dne}^{(1),\,\e}\big(\rho^{\otimes n}\big) \\
&\eqt{(ii)} \lim_{\e\to 0^+} \liminf_{n\to\infty} \frac1n\, D_H^{\sep\!,\,\e}(\rho^{\otimes n}\, \big\|\, \SEP_n\big) \\
&\eqt{(iii)} D^{\sep,\infty}(\rho \|\SEP)\, ,
\ee
where (i)~follows by combining the definitions~\eqref{distillable_A_error},~\eqref{distillable_cost_error}, and~\eqref{distillable}, in~(ii) we set $\SEP_n \coloneqq \SEP\big(A^n\!:\!B^n\big)$ and employed the two-fold bound in Proposition~\ref{one_shot_distillable_dne_prop}, noting that the constant additive term that makes the rightmost side of~\eqref{one_shot_distillable_dne} larger than the leftmost side gets washed out in the limit, and finally (iii)~is due to Theorem~\ref{BHLP_thm}, originally proved in~\cite{brandao_adversarial}. The fact that the DNE distillable entanglement is faithful, i.e.\ it is nonzero for all entangled states, follows from the faithfulness of the Piani relative entropy of entanglement, which, by super-additivity, implies the faithfulness of its regularised version. For details on these latter claims we refer the reader to the original work~\cite{Piani2009}. Finally, the proof of~\eqref{distillable_A_dne} is entirely analogous.
\end{proof}

It is worth remarking here that Theorem~\ref{distillable_dne_thm} provides a strong quantitative and asymptotic 
extension of~\cite[Theorem~13]{Chitambar2020}, where the distillability under dually non-entangling operations was considered.

\begin{rem}
Following the same reasoning as before, an analogous result applies also to dually PPT-preserving operations:
    \begin{equation}
E_{d,\,\dpptp}(\rho) = D^{\ppt,\infty}(\rho\|\PPT)\, ,
\end{equation}
and hence $E_{d,\,\dpptp}(\rho) > 0$ for all $\rho \notin \PPT$. This in particular gives a direct operational meaning to the PPT-measured Piani relative entropy $D^{\ppt,\infty}(\cdot\|\PPT)$.
\end{rem}

\subsection{A general upper bound on the DNE distillable entanglement} \label{subsec_generic_upper_bound}

Throughout this section we present the novel upper bound on the regularised Piani relative entropy of entanglement discussed in the main text. 


In~\cite{Xin-exact-PPT} (see also~\cite{Wang2023}) Wang and Wilde 
studied the SDP 
\bb
E_\kappa(\rho) \coloneqq \log_2 \min\left\{ \Tr S:\ -S^\Gamma \leq \rho^\Gamma \leq S^\Gamma ,\ S\geq 0 \right\} ,
\label{kappa_entanglement}
\ee
which they called \deff{$\boldsymbol{\kappa}$-entanglement}. Here, $\rho=\rho_{AB}$ is an arbitrary bipartite state. Note that the operator $S$ appearing in~\eqref{kappa_entanglement} is not only positive semi-definite, but it also has a positive partial transpose. It is, in other words, a multiple of a PPT state. We can imagine to reinforce this constraint by demanding that $S$ belong to the cone generated by separable states, denoted by $\cone(\SEP)$. We therefore introduce the \deff{modified $\boldsymbol{\kappa}$-entanglement}, obtained by adding this constraint to the program~\eqref{kappa_entanglement}. It is given by
\bb
\kk(\rho) \coloneqq&\ \log_2 \min\left\{ \Tr S:\ -S^\Gamma \leq \rho^\Gamma \leq S^\Gamma ,\ S\!\in\! \cone(\SEP) \right\} \\
=&\ \log_2 \min\left\{ \Tr S:\ -S \leq \rho^\Gamma \leq S ,\ S\!\in\! \cone(\SEP) \right\} ,
\label{modified_kappa_entanglement}
\ee
where in the second line we substituted $S\mapsto S^\Gamma$, noting that $\cone(\SEP)$ is invariant under partial transposition, i.e.\ $S\in \cone(\SEP)$ if and only if $S^\Gamma \in \cone(\SEP)$. 
The lemma below lists some elementary properties of the above function. The reason why we are interested in it is expressed by the forthcoming Proposition~\ref{upper_bound_DNE_distillable_prop}.

\begin{lemma}[(Elementary properties of $\kk$)] \label{elementary_properties_modified_kappa_lemma}
The function $\kk$ is:
\begin{enumerate}[(a)]
\item non-negative and faithful, i.e.\ such that
\bb
\kk(\rho_{AB})\geq 0\, ,\qquad \kk(\rho_{AB}) = 0 \ \ \Longleftrightarrow\ \ \rho_{AB} \in \SEP(A\!:\! B)\, ;
\ee
\item monotonically non-increasing under operations that are both PPT and non-entangling, i.e.\ such that
\bb
\NN \in \pptop(AB\to A'B') \cap \sepp(AB\to A'B')\quad \Longrightarrow\quad \kk\big(\NN(\rho_{AB})\big) \leq \kk(\rho_{AB})\, ,
\ee
for all states $\rho_{AB}$ on $AB$, which in particular includes all LOCC;
\item sub-additive, i.e.\ such that
\bb
\kk\left(\rho_{AB} \otimes \omega_{A'B'}\right) \leq \kk(\rho_{AB}) + \kk(\omega_{A'B'})
\ee
for all pairs of states $\rho_{AB}$ and $\omega_{A'B'}$.
\end{enumerate}
\end{lemma}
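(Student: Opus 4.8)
The plan is to prove the three properties essentially by inspecting the semidefinite program \eqref{modified_kappa_entanglement}, mimicking the arguments known for the $\kappa$-entanglement of Wang and Wilde but being careful about the extra constraint $S\in\cone(\SEP)$. For part~(a), I would first note that $S\geq 0$ automatically (since $-S\leq S$ forces $S\geq 0$) and hence $\Tr S\geq \|\rho^\Gamma\|_\infty\cdot$(something)---more directly, taking the trace of $-S\leq \rho^\Gamma\leq S$ and using $\Tr\rho^\Gamma = \Tr\rho = 1$ gives $\Tr S\geq |\Tr\rho^\Gamma| = 1$, so $\kk(\rho)\geq 0$. For the faithfulness equivalence: if $\rho\in\SEP$ then $\rho^\Gamma\in\cone(\SEP)$ and also $-\rho^\Gamma\leq\rho^\Gamma$ (as $\rho^\Gamma\geq0$), so $S=\rho^\Gamma$ is feasible with $\Tr S = 1$, giving $\kk(\rho)\leq 0$, hence $=0$. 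Conversely, if $\kk(\rho)=0$ there is a feasible $S\in\cone(\SEP)$ with $\Tr S = 1$, i.e.\ $S$ is a separable state; the constraint $-S\leq\rho^\Gamma\leq S$ with $\Tr S = \Tr\rho^\Gamma = 1$ forces $S = \rho^\Gamma$ (since $S-\rho^\Gamma\geq 0$ and $S+\rho^\Gamma\geq 0$ both have trace... actually $S-\rho^\Gamma\geq 0$ with trace $0$ implies $S=\rho^\Gamma$), so $\rho^\Gamma$ is separable, hence $\rho = (\rho^\Gamma)^\Gamma$ is separable too since $\cone(\SEP)$ is $\Gamma$-invariant.

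For part~(c), sub-additivity, I would take optimal feasible operators $S$ for $\rho_{AB}$ and $S'$ for $\omega_{A'B'}$ and show that $S\otimes S'$ is feasible for $\rho_{AB}\otimes\omega_{A'B'}$: we have $S\otimes S'\in\cone(\SEP)$ (tensor product of separable cones lands in the separable cone on the bipartite cut $AA'\!:\!BB'$), and the operator inequality $-S\otimes S'\leq (\rho\otimes\omega)^\Gamma = \rho^\Gamma\otimes\omega^\Gamma\leq S\otimes S'$ follows from $-S\leq\rho^\Gamma\leq S$, $-S'\leq\omega^\Gamma\leq S'$ by a standard argument: write $\tfrac12(S\otimes S' \pm \rho^\Gamma\otimes\omega^\Gamma)$ as a positive combination, e.g.\ using that $S\otimes S' - \rho^\Gamma\otimes\omega^\Gamma = \tfrac12(S-\rho^\Gamma)\otimes(S'+\omega^\Gamma) + \tfrac12(S+\rho^\Gamma)\otimes(S'-\omega^\Gamma)$ is a sum of tensor products of PSD operators. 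Then $\Tr(S\otimes S') = \Tr S\,\Tr S'$ gives the result after taking $\log_2$.

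Part~(b), monotonicity, is the main obstacle and requires the most care. Given $\NN\in\pptop\cap\sepp$, take an optimal feasible $S$ for $\rho_{AB}$ and consider $S' := \Gamma_{B'}\big(\NN(\Gamma_B(S))\big)$; since $\NN$ is a PPT channel, the map $\widetilde{\NN} := \Gamma_{B'}\circ\NN\circ\Gamma_B$ is completely positive and trace-preserving, so $S'\geq 0$ and $\Tr S' = \Tr S$. Applying $\widetilde\NN$ to $-S\leq\rho^\Gamma\leq S$ and using positivity of $\widetilde\NN$ yields $-S'\leq \widetilde\NN(\rho^\Gamma)\leq S'$, and $\widetilde\NN(\rho^\Gamma) = \Gamma_{B'}\big(\NN(\Gamma_B(\Gamma_B(\rho)))\big) = \Gamma_{B'}(\NN(\rho)) = \NN(\rho)^\Gamma$, so the operator inequality constraint for $\NN(\rho)$ is met. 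The remaining point---and the crux---is that $S'\in\cone(\SEP)$: this is where I would use the $\sepp$ property combined with $\cone(\SEP)$ being $\Gamma$-invariant. Since $S\in\cone(\SEP)$ and $\cone(\SEP)$ is $\Gamma$-invariant, $\Gamma_B(S)\in\cone(\SEP)$; since $\NN$ is non-entangling (and $\cone(\SEP)$ is spanned by separable states so $\NN$ maps $\cone(\SEP)$ into $\cone(\SEP)$ by linearity and scaling), $\NN(\Gamma_B(S))\in\cone(\SEP)$; applying $\Gamma_{B'}$ again and using $\Gamma$-invariance, $S' = \Gamma_{B'}(\NN(\Gamma_B(S)))\in\cone(\SEP)$. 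Hence $S'$ is feasible for $\NN(\rho)$ with $\Tr S' = \Tr S$, giving $\kk(\NN(\rho))\leq\kk(\rho)$. Finally, every LOCC operation is both non-entangling and PPT (indeed LOCC channels have a local separable-operation form and commute appropriately with partial transposition in the required sense), so the claim for LOCC follows as a special case.
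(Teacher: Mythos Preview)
Your proposal is correct and follows essentially the same approach as the paper's proof: the same trace argument for non-negativity, the same $S=\rho^\Gamma$ identification for faithfulness (your version is actually slightly more direct, skipping the paper's separate verification that $\rho^\Gamma\geq 0$), the same tensor-product ansatz with the identity $S\otimes S' \pm \rho^\Gamma\otimes\omega^\Gamma = \tfrac12(S+\rho^\Gamma)\otimes(S'\pm\omega^\Gamma) + \tfrac12(S-\rho^\Gamma)\otimes(S'\mp\omega^\Gamma)$ for sub-additivity, and the same push-forward of the optimal $S$ through the channel for monotonicity. The only cosmetic difference in part~(b) is that the paper works with the first line of~\eqref{modified_kappa_entanglement} and sets $T\coloneqq\NN(S)$ directly (so that $T^\Gamma\pm\NN(\rho)^\Gamma=(\Gamma\circ\NN\circ\Gamma)(S^\Gamma\pm\rho^\Gamma)\geq 0$ and $T\in\cone(\SEP)$ by non-entanglingness), whereas you work with the second line and set $S'=\Gamma_{B'}\NN\Gamma_B(S)$; these are related by $S'=T^\Gamma$ and the arguments are equivalent via the $\Gamma$-invariance of $\cone(\SEP)$ that you invoke.
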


\begin{proof}
Non-negativity follows because for all operators $S$ such that $\rho^\Gamma \leq S$ we have that $1 = \Tr \rho^\Gamma \leq \Tr S$. As for faithfulness, assume that there exists a state $S$ such that $-S\leq \rho^\Gamma \leq S$, $S\in \cone(\SEP)$, and $\Tr S =1$. Then, first of all we have that $\rho^\Gamma\geq 0$, because otherwise, since $\Tr \rho^\Gamma =1$, the projector $P$ onto the positive part of $\rho^\Gamma$ would satisfy $1 < \Tr P\rho^\Gamma \leq \Tr PS \leq 1$, yielding a contradiction. Now, since $0\leq \rho^\Gamma\leq S$ and $1=\Tr\rho^\Gamma = \Tr S$, we have that $\rho^\Gamma = S$, so that $\rho^\Gamma$, and hence $\rho$ itself, is a separable state. This proves claim~(a).

We now move on to~(b). Let $\NN$ be a PPT and non-entangling channel, and let $S$ be the optimal operator for the program in the first line of~\eqref{modified_kappa_entanglement}, so that $\kk(\rho) = \log_2 \Tr S$, $-S^\Gamma \leq \rho^\Gamma \leq S^\Gamma$ and $S\in \cone(\SEP)$. Then $T\coloneqq \NN(S)$ satisfies that
\bb
T^\Gamma \pm \NN(\rho)^\Gamma = \left( \Gamma \circ \NN \circ \Gamma\right)\left( S^\Gamma \pm \rho^\Gamma\right) \geq 0\, ,
\ee
because $\Gamma\circ \NN\circ \Gamma$ is still completely positive, as $\NN$ is a PPT channel. Also, due to the fact that $\NN$ is non-entangling we also deduce that $T\in \cone(\SEP)$. Hence, $\kk\big(\NN(\rho)\big) \leq \log_2 \Tr T = \log_2 \Tr S = \kk(\rho)$, proving claim~(b).

Sub-additivity is also straightforward. Pick two optimal operators $S$ and $T$ for the programs on the second line of~\eqref{kappa_entanglement} written for $\rho$ and $\omega$, respectively. Therefore, on the one hand $S \pm \rho^\Gamma\geq 0$, $S\in \cone\big(\SEP(A\!:\!B)\big)$, and $\kk(\rho) = \log_2 \Tr S$, while on the other $T\pm \omega^\Gamma \geq 0$, $T\in \cone\big(\SEP(A'\!:\!B')\big)$, and $\kk(\omega) = \log_2 \Tr T$. Note that $S\otimes T\in \cone\big(\SEP(AA'\!:\!BB')\big)$, and moreover
\bb
S\otimes T \pm \rho^\Gamma \otimes \omega^\Gamma = \frac12 \left( \big(S + \rho^\Gamma\big) \otimes \big(T \pm \omega^\Gamma\big) + \big(S - \rho^\Gamma\big) \otimes \big(T \mp \omega^\Gamma\big) \right) \geq 0\, .
\ee
Thus,
\bb
\kk(\rho\otimes \omega) \leq \log_2 \Tr \left[ S\otimes T \right] = \log_2 \Tr S + \log_2 \Tr T = \kk(\rho) + \kk(\omega)\, ,
\ee
establishing~(c).
\end{proof}

The main reason why we constructed the above function $\kk$ is that it yields an easily computable upper bound on the distillable entanglement under DNE operations. Before to come to the proof of this claim, note that by Lemma~\ref{elementary_properties_modified_kappa_lemma}(c) the regularisation
\bb
\kk^\infty(\rho) \coloneqq \lim_{n\to\infty} \frac1n \kk\big(\rho^{\otimes n}\big)
\ee
is well defined for all states $\rho$, due to Fekete's lemma~\cite{Fekete1923}. From sub-additivity it also clearly follows that $\kk^\infty(\rho)\leq \kk(\rho)$.

\begin{prop}\label{upper_bound_DNE_distillable_prop}
Let $\rho=\rho_{AB}$ be a finite-dimensional bipartite quantum state. Then it holds that
\begin{equation}
E_{d,\,\dne}(\rho) = D^{\sep,\infty}(\rho \| \SEP) \leq \kk^\infty(\rho) \leq \kk(\rho)\, .
\tag{\ref{upper_bound_DNE_distillable}}
\end{equation}
\end{prop}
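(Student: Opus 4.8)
The plan is to prove the middle inequality $D^{\sep,\infty}(\rho\|\SEP)\leq\kk^\infty(\rho)$ at the regularised level by first establishing the single-letter comparison $D^{\sep,\e}_H(\rho\|\SEP)\leq\kk(\rho)+O(1)$, or more precisely a bound between the one-shot distillable entanglement and $\kk$, and then regularising using Lemma~\ref{elementary_properties_modified_kappa_lemma}(c). The cleanest route exploits the structure already set up: by Proposition~\ref{one_shot_distillable_dne_prop} and Theorem~\ref{distillable_dne_thm}, it suffices to show that $\kk$ is an upper bound on the one-shot DNE distillable entanglement up to an additive constant, i.e.\ $E^{(1),\e}_{d,\,\dne}(\rho)\leq\kk(\rho)+\text{(correction)}$, and then take $\frac1n$ of the $n$-copy version and send $n\to\infty$, $\e\to0$; the correction term washes out and sub-additivity of $\kk$ gives the regularisation $\kk^\infty$. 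The last inequality $\kk^\infty(\rho)\leq\kk(\rho)$ is already noted in the excerpt as an immediate consequence of sub-additivity via Fekete.

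First I would recall from the proof of Proposition~\ref{one_shot_distillable_dne_prop} that any feasible one-shot DNE distillation protocol at rate $m=E^{(1),\e}_{d,\,\dne}(\rho)$ can be taken in the canonical form $\Theta_{2^m;E}$ for some measurement operator $E\in[0,\id]$ satisfying the non-entangling condition $\max_{\sigma\in\SEP}\Tr E\sigma\leq 2^{-m}$, the error condition $\Tr E\rho\geq 1-\e$, and the dual condition $\id-E,\ \id+2^mE\in\cone(\SEP)$. The key observation to bring in $\kk$ is that $\kk$ is monotone under operations that are simultaneously PPT and non-entangling (Lemma~\ref{elementary_properties_modified_kappa_lemma}(b)). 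The map $\Theta_{2^m;E}$ is non-entangling by the condition above; I would check it is also a PPT channel — this follows because its output lives in the two-dimensional isotropic family where separability and PPT-ness coincide (Lemma~\ref{robustness_isotropic_lemma}), so the dual/primal cone conditions for $\cone(\SEP)$ are equivalent to those for $\cone(\PPT)$, which is exactly what PPT-ness of the channel requires. Hence $\kk(\Phi_{2^m})=\kk(\Theta_{2^m;E}(\rho_{\text{twirled}}))\leq\kk(\rho)$ modulo the usual reduction that twirling does not change $\rho^{\otimes n}$ in the relevant limit, or more carefully by applying monotonicity to the composed protocol. Finally I would compute $\kk(\Phi_{2^m})$ directly: for $\rho=\Phi_{2^m}$ one has $\rho^\Gamma=\frac1{2^m}F$ with $\|F\|_\infty=1$, and the optimal $S$ in~\eqref{modified_kappa_entanglement} is (a multiple of) $\id$, giving $\Tr S=2^m\cdot$(something), so that $\kk(\Phi_{2^m})=m-O(1)$ — I expect exactly $\kk(\Phi_d)=\log_2(2-\tfrac1d)$ or a similarly mild expression, certainly $\geq m-1$.

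Putting these together yields $m=E^{(1),\e}_{d,\,\dne}(\rho)\leq\kk(\rho)+c$ for an absolute constant $c$; writing this for $\rho^{\otimes n}$, dividing by $n$, and using $\kk(\rho^{\otimes n})\leq n\,\kk(\rho)$ is too weak on its own — instead one keeps $\tfrac1n\kk(\rho^{\otimes n})$ and lets $n\to\infty$ to get $E_{d,\dne}(\rho)\leq\kk^\infty(\rho)$, and then $\kk^\infty(\rho)\leq\kk(\rho)$ by sub-additivity. Combined with the already-proven identity $E_{d,\dne}(\rho)=D^{\sep,\infty}(\rho\|\SEP)$ from Theorem~\ref{distillable_dne_thm}, this is exactly~\eqref{upper_bound_DNE_distillable}. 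I expect the main obstacle to be verifying cleanly that $\Theta_{2^m;E}$ is a \emph{PPT} channel (not merely non-entangling) so that Lemma~\ref{elementary_properties_modified_kappa_lemma}(b) applies — one must confirm that $\Gamma\circ\Theta_{2^m;E}\circ\Gamma$ is completely positive, which should reduce to checking that $E\in\cone(\PPT)$ and $\id-E\in\cone(\PPT)$ follow from the DNE conditions via the coincidence of SEP and PPT on the isotropic family, plus the observation that $\Phi_{2^m}^\Gamma$ and $\tau_{2^m}^\Gamma$ have controlled signs. A secondary technical point is making the twirling-invariance reduction rigorous at the $n$-copy level so that comparing $\kk(\Phi_{2^m})$ to $\kk(\rho^{\otimes n})$ is legitimate; this is routine given Lemma~\ref{twirling_reduction_entanglement_manipulation_lemma} but should be spelled out.
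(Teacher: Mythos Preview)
Your overall strategy---show monotonicity of $\kk$ under the canonical distillation map $\Theta_{2^m;E}$, compute $\kk$ on the (approximate) output, and regularise---is natural, but the step you yourself flag as ``the main obstacle'' is a genuine gap, not a routine verification. The DNE conditions on $E$ do \emph{not} force $\Theta_{2^m;E}$ to be a PPT channel in the sense required by Lemma~\ref{elementary_properties_modified_kappa_lemma}(b), namely that $\Gamma\circ\Theta\circ\Gamma$ be completely positive (or even merely positive, which is all the proof of that lemma actually uses). Your argument that ``the output is isotropic, where $\SEP=\PPT$'' only shows that $\Theta_{2^m;E}$ is PPT-\emph{preserving}; that is a strictly weaker property. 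Concretely, take $d=3$, $m=1$, and $E=\tfrac12 P_+$ with $P_+=\tfrac{\id+F}{2}$ the symmetric projector on $\C^3\otimes\C^3$. One checks directly that $\max_{\sigma\in\SEP}\Tr E\sigma=\tfrac12=2^{-m}$, that $\id-E$ and $\id+2E$ are (multiples of) separable Werner states, so $\Theta_{2;E}\in\dne$. Yet $E^\Gamma=\tfrac14(\id+3\Phi_3)$ has $\Tr[E^\Gamma\Phi_3]=1$, hence $(\Gamma\circ\Theta_{2;E}\circ\Gamma)(\Phi_3)=\Phi_2^\Gamma=\tfrac12 F_2\not\geq0$. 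So the composed map is not even positive, and Lemma~\ref{elementary_properties_modified_kappa_lemma}(b) cannot be invoked. (The DNE constraints give $-2^{-m}\id\leq E^\Gamma\leq\id$, but nothing prevents the top eigenvalue of $E^\Gamma$ from exceeding $2^{-m}$, which is exactly what breaks positivity.)

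The paper avoids this obstruction entirely by proving the single-letter inequality $D^{\sep}(\rho\|\SEP)\leq\kk(\rho)$ \emph{directly}, without passing through the operational distillable entanglement or any monotonicity statement for $\kk$ under DNE maps. The chain is
\[
D^{\sep}(\rho\|\SEP)\ \leq\ D^{\ppt}(\rho\|\SEP)\ \leq\ \min_{\sigma\in\SEP}\max_{\MM\in\ppt}D_{\max}\big(\MM(\rho)\big\|\MM(\sigma)\big)\ \leq\ \min_{\sigma\in\SEP}D_{\max}\big(\rho^\Gamma\big\|\sigma^\Gamma\big)\ \leq\ \kk(\rho)\,,
\]
where the crucial third step uses that for any PPT measurement $\MM$ the map $\MM\circ\Gamma$ is again a bona fide measurement, so data processing for $D_{\max}$ (applied with first argument $\rho^\Gamma$, which need not be a state---this is why the paper sets up $D_{\max}$ for general Hermitian first arguments) gives the bound. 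Regularising then yields $D^{\sep,\infty}(\rho\|\SEP)\leq\kk^\infty(\rho)$, and Theorem~\ref{distillable_dne_thm} supplies the left-hand equality. This route completely sidesteps the question of whether any particular DNE protocol is PPT.
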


\begin{proof}
We write the chain of inequalities
\bb
D^{\sep}(\rho \| \SEP) &\leqt{(i)} D^{\ppt} (\rho \| \SEP) \\
&= \min_{\sigma\in \SEP} \max_{\MM\in \ppt} D\!\left(\MM(\rho) \,\big\|\, \MM(\sigma)\right) \\
&\leqt{(ii)} \min_{\sigma\in \SEP} \max_{\MM\in \ppt} D_{\max}\!\left(\MM(\rho) \,\big\|\, \MM(\sigma)\right) \\
&= \min_{\sigma\in \SEP} \max_{\MM\in \ppt} D_{\max}\!\left((\MM\circ \Gamma)\big(\rho^\Gamma\big) \,\big\|\, (\MM\circ \Gamma)\big(\sigma^\Gamma\big)\right) \\
&\leqt{(iii)} \min_{\sigma\in \SEP} D_{\max}\big(\rho^\Gamma \,\big\|\, \sigma^\Gamma \big) \\
&= \log_2 \min\left\{ \Tr S:\ \rho^\Gamma \leq S\, ,\ S\in \cone(\SEP)\right\} \\
&\leqt{(iv)} \kk(\rho)\, .
\label{upper_bound_DNE_distillable_proof}
\ee
The above derivation is justified as follows. In~(i) we expanded the set of available measurements from $\sep$ to $\ppt$, while in~(ii) we upper bounded the Umegaki relative entropy with the max-relative entropy, following~\eqref{Umegaki_max_inequality}. In~(iii) we observed that for every PPT measurement $\MM(\cdot) = \sum_x \Tr[E_x(\cdot)] \ketbra{x}$, where $E_x^\Gamma \geq 0$ for all $x$, the modified map $(\MM\circ \Gamma)(\cdot) = \sum_x \Tr\big[E_x^\Gamma (\cdot)\big] \ketbra{x}$ is still a valid quantum measurement, and hence by the data processing inequality~\eqref{data_processing_D_max} we see that
\bb
D_{\max}\!\left((\MM\circ \Gamma)\big(\rho^\Gamma\big) \,\big\|\, (\MM\circ \Gamma)\big(\sigma^\Gamma\big)\right) \leq D_{\max}\big(\rho^\Gamma \,\big\|\, \sigma^\Gamma \big)\, .
\ee
Note that $\rho^\Gamma$ may not be a state in general, however the above inequality, which rests on the general definition~\eqref{max_relative_entropy}, holds nevertheless. Finally, step~(iv) in~\eqref{upper_bound_DNE_distillable_proof} holds by direct inspection of~\eqref{modified_kappa_entanglement}.

Now, combining the above inequality~\eqref{upper_bound_DNE_distillable_proof} with Theorem~\ref{distillable_dne_thm} implies that
\bb
E_{d,\,\dne}(\rho) &= D^{\sep,\infty} (\rho \|\SEP) \\
&= \lim_{n\to\infty} \frac1n\, D^\sep \big( \rho^{\otimes n}\, \big\|\, \SEP \big) \\
&\leq \lim_{n\to\infty} \frac1n\, \kk\big( \rho^{\otimes n}\big) \\
&= \kk^\infty(\rho) \\
&\leq \kk(\rho)\, ,
\ee
completing the proof.
\end{proof}

\begin{rem}
The analogous upper bound
\begin{equation}
E_{d,\,\dpptp}(\rho) = D^{\ppt,\infty}(\rho \| \PPT) \leq E_\kappa^\infty(\rho) \leq E_\kappa(\rho)
\end{equation}
holds for dually PPT-preserving operations. Here, the original monotone $E_\kappa$ from~\cite{Xin-exact-PPT} makes a direct appearance.
\end{rem}

\section{Entanglement cost under (approximately) DNE operations} \label{sec_cost}

Throughout this section we will analyse entanglement dilution under (approximately) dually non-entangling operations. After proving a preliminary technical lemma in Section~\ref{subsec_general_lemma}, i.e.\ Lemma~\ref{one_shot_cost_key_lemma}, in Section~\ref{subsec_cost_DNE} we establish some of our main results. Intuitively speaking, the common thread of all of these results is that the entanglement cost is not significantly affected by the restriction to \emph{dually} non-entangling operations, as compared to vanilla non-entangling ones.

There are two cases to consider: first, if a small $\delta$ is allowed in the definition of approximately (dually) non-entangling operations, then Brand\~{a}o and Plenio~\cite{BrandaoPlenio2} (see also~\cite{Brandao-Datta}) showed that the corresponding asymptotic cost coincides with the regularised relative entropy of entanglement. In Corollary~\ref{one_shot_cost_A_dne_ne_cor} we show that the same is true for $\delta$-approximate DNE. On the other hand, there is the $\delta=0$ case, i.e.\ that of exactly (dually) non-entangling operations. We showed previously~\cite{irreversibility} that in this case the NE entanglement cost can be strictly larger than the regularised relative entropy of entanglement. We also left as an open problem that of calculating that cost in terms of a possibly regularised quantity. We do not solve this problem here, but we limit ourselves to showing that the same formula will also apply to DNE operations, because --- once again --- the NE entanglement cost and the DNE entanglement cost coincide (Corollary~\ref{cost_dne=ne_cor}). 

The statements of the results in this section will heavily rely on an entanglement monotone that we introduced before, namely the generalised robustness $R_\SEP^g(\rho) = \min\left\{ r\geq 0:\ \exists\ \sigma\in \SEP:\ \rho \leq (1+r)\sigma \right\}$ and the closely related max-relative entropy $D_{\max}(\rho\|\SEP) = \log_2\left(1+R_\SEP^g(\rho)\right)$. Here we will also employ the smoothed variant of the latter quantity, which we recall to be defined as
\bb
D_{\max}^{\e} (\rho \| \sigma) \coloneqq \min_{\tilde{\rho}\in B_T^\e(\rho)} D_{\max}(\tilde{\rho}\|\sigma)\, ,
\ee
where
\bb
B^\e_T(\rho) \coloneqq \left\{ \tilde{\rho}:\ \tilde{\rho}\geq 0,\ \Tr \tilde{\rho} = 1,\ \frac12 \|\rho - \tilde{\rho}\|_1\leq \e\right\}.
\label{T_ball}
\ee

\subsection{A general lemma} \label{subsec_general_lemma}

\begin{lemma} \label{one_shot_cost_key_lemma}
Let $\tilde{\rho} = \tilde{\rho}_{AB}$ be a bipartite state. For any other bipartite state $\omega=\omega_{AB}$ and some $d\in \N_+$, construct the channel 
$\Lambda_{d;\, \tilde{\rho},\omega} (X) = \tilde{\rho} \Tr[X\Phi_d] + \omega \Tr [X (\id-\Phi_d)]$ as before. Then, for all $\delta\geq 0$, we have that:
\begin{enumerate}[(a)]
\item $\Lambda_{d;\,\tilde{\rho},\omega} \in \sepp^g_\delta$ if and only if there exist $\sigma_1,\sigma_2\in \SEP$ such that
\bb
\omega\leq (1+\delta)\sigma_1\, ,\qquad \frac1d\, \tilde{\rho} + \frac{d-1}{d}\, \omega\leq (1+\delta)\sigma_2\, ;
\label{ne_condition_dilution_map}
\ee
\item $\Lambda_{d;\,\tilde{\rho},\omega} \in \dne^g_\delta$, if and only if~\eqref{ne_condition_dilution_map} is obeyed for some $\sigma_1,\sigma_2\in \SEP$, and in addition
\bb
\sup_{\sigma\in \SEP} \frac{\Tr[\sigma \tilde{\rho}]}{\Tr[\sigma \omega]} \leq d+1\, .
\label{dne_condition_dilution_map}
\ee
\end{enumerate}
Therefore,
\begin{enumerate}[(a)] \setcounter{enumi}{2}
\item if $\Lambda_{d;\,\tilde{\rho},\omega}\in \sepp^g_\delta$ then $d\geq \frac{1}{1+\delta}\left(1+R^g_\SEP(\tilde{\rho})\right)$;
\item vice versa, if 
\bb
d\geq \frac{1 + \max\big\{ R^g_\SEP(\tilde{\rho}),\, \delta^{-1}\big\}}{1 + \max\big\{ R^g_\SEP(\tilde{\rho})^{-1},\, \delta\big\}}\, ,
\ee
then there exists a state $\omega$ such that the corresponding $\Lambda_{d;\,\tilde{\rho},\omega}$ defined by~\eqref{canonical_form_dilution_map} satisfies that $\Lambda_{d;\,\tilde{\rho},\omega}\in \sepp^g_\delta$;
moreover, if 
\bb
d\geq \frac{1 + \max\big\{ 1+2R^g_\SEP(\tilde{\rho}),\, \delta^{-1}\big\}}{1 + \max\big\{ \big(1+2R^g_\SEP(\tilde{\rho}\big)^{-1},\, \delta\big\}}\, ,
\ee
then we may also have $\Lambda_{d;\,\tilde{\rho},\omega}\in \dne^g_\delta$ for a suitable choice of $\omega$;

\item finally, if for a certain $d$ and some choice of $\tilde{\rho},\omega$ the map $\Lambda_{d;\,\tilde{\rho},\omega}$ in~\eqref{canonical_form_dilution_map} satisfies that $\Lambda_{d;\,\tilde{\rho},\omega}\in \sepp^g_\delta$, then there exists another state $\omega'$ such that $\Lambda_{2d;\, \tilde{\rho},\omega'}\in \dne^g_\delta$.
\end{enumerate}
\end{lemma}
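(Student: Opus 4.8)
Everything will rest on two facts already in hand: the \emph{quasi-convexity} of the generalised robustness~\eqref{generalised_robustness}, i.e.\ that $\rho_i\leq(1+r_i)\sigma_i$ with $\sigma_i\in\SEP$ forces $\sum_i\lambda_i\rho_i\leq(1+\max_i r_i)\sum_i\lambda_i\sigma_i$ for any convex weights $\lambda_i$, so that $R_\SEP^g$ of a mixture is at most the largest $R_\SEP^g$ among the summands; and Lemma~\ref{robustness_isotropic_lemma}, by which $p\Phi_d+(1-p)\tau_d\in\SEP$ exactly when $p\leq1/d$. I would open by recording the adjoint (cf.~\eqref{dual_map}) $\Lambda_{d;\,\tilde{\rho},\omega}^\dag(Y)=\Tr[Y\tilde{\rho}]\,\Phi_d+\Tr[Y\omega]\,(\id-\Phi_d)$, the identity $\Lambda_{d;\,\tilde{\rho},\omega}(\sigma)=q\,\tilde{\rho}+(1-q)\,\omega$ with $q=\Tr[\sigma\Phi_d]$, and the elementary fact that $\{\Tr[\sigma\Phi_d]:\sigma\in\SEP\}=[0,1/d]$ (the maximal overlap of a separable state with $\Phi_d$ being $1/d$, and $0$ clearly attained).

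For (a): $\Lambda_{d;\,\tilde{\rho},\omega}\in\sepp^g_\delta$ iff $R_\SEP^g\!\big(q\tilde{\rho}+(1-q)\omega\big)\leq\delta$ for every $q\in[0,1/d]$, and by quasi-convexity it is enough to test the endpoints $q\in\{0,1/d\}$; spelling out $R_\SEP^g(\cdot)\leq\delta$ as domination by $(1+\delta)$ times a separable state yields exactly the pair of conditions~\eqref{ne_condition_dilution_map}. For (b): applying the adjoint to $\sigma\in\SEP$ on $AB$ produces a non-negative multiple of the isotropic state with weight $p=\Tr[\sigma\tilde{\rho}]/(\Tr[\sigma\tilde{\rho}]+(d^2-1)\Tr[\sigma\omega])$ on $\Phi_d$, which by Lemma~\ref{robustness_isotropic_lemma} lies in $\cone(\SEP)$ iff $p\leq1/d$, i.e.\ iff $\Tr[\sigma\tilde{\rho}]\leq(d+1)\Tr[\sigma\omega]$; quantifying over $\sigma\in\SEP$ (the cases $d=1$ and $\Tr[\sigma\omega]=0$ being degenerate, the latter handled via $\supp\tilde{\rho}\subseteq\supp\sigma^\ast$ for the optimal $\sigma^\ast$) gives~\eqref{dne_condition_dilution_map} on top of the conditions of (a). Part (c) follows at once: from $\tfrac1d\tilde{\rho}\leq\tfrac1d\tilde{\rho}+\tfrac{d-1}d\omega\leq(1+\delta)\sigma_2$ we get $R_\SEP^g(\tilde{\rho})\leq d(1+\delta)-1$.

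For (d), write $R\coloneqq R_\SEP^g(\tilde{\rho})$, fix $\sigma^\ast\in\SEP$ with $\tilde{\rho}\leq(1+R)\sigma^\ast$, and set $\pi\coloneqq\big((1+R)\sigma^\ast-\tilde{\rho}\big)/R$, a state, so that $\sigma^\ast=\tfrac1{1+R}\tilde{\rho}+\tfrac R{1+R}\pi$. For the $\sepp^g_\delta$ part I would try the one-parameter family $\omega\coloneqq(1-s)\sigma^\ast+s\pi=(1+\tfrac sR)\sigma^\ast-\tfrac sR\tilde{\rho}$ with $s=\min\{\delta R,1\}\in[0,1]$: one checks $\omega$ is a state, $R_\SEP^g(\omega)\leq s/R\leq\delta$, and $\tfrac1d\tilde{\rho}+\tfrac{d-1}d\omega$ is an explicit $(\tilde{\rho},\sigma^\ast)$-combination whose robustness (bounding $\tilde{\rho}\leq(1+R)\sigma^\ast$, distinguishing the sign of its $\tilde{\rho}$-coefficient) is $\leq\delta$ precisely once $d\geq\frac{1+\max\{R,\delta^{-1}\}}{1+\max\{R^{-1},\delta\}}$, after splitting into $\delta R\leq1$ (where $s=\delta R$) and $\delta R>1$ (where $s=1$). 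For the $\dne^g_\delta$ part, one must also meet~\eqref{dne_condition_dilution_map}; imposing the stronger operator inequality $\omega\geq\tfrac1{d+1}\tilde{\rho}$ (sufficient for~\eqref{dne_condition_dilution_map} by (b)) amounts to looking for $\omega=\tfrac1{d+1}\tilde{\rho}+\tfrac d{d+1}\eta$, and since then $\tfrac1d\tilde{\rho}+\tfrac{d-1}d\omega=\tfrac2{d+1}\tilde{\rho}+\tfrac{d-1}{d+1}\eta$ while $\omega$ is the midpoint of this operator and $\eta$, quasi-convexity again reduces the task to choosing $\eta=(1+\tfrac uR)\sigma^\ast-\tfrac uR\tilde{\rho}$ and rerunning the case analysis — the forced $\tfrac1{d+1}\tilde{\rho}$ summand doubles the amount of robustness to be absorbed, which is where the substitution $R\mapsto1+2R$ enters the stated threshold. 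Finally, part (e) is a short quasi-convexity computation: from $\Lambda_{d;\,\tilde{\rho},\omega}\in\sepp^g_\delta$, put $\omega'\coloneqq\tfrac1{2d}\tilde{\rho}+\tfrac{2d-1}{2d}\omega$; then $\Tr[\sigma\omega']\geq\tfrac1{2d}\Tr[\sigma\tilde{\rho}]$ gives~\eqref{dne_condition_dilution_map} for $\Lambda_{2d;\,\tilde{\rho},\omega'}$ with room to spare, and both $\omega'$ and $\tfrac1{2d}\tilde{\rho}+\tfrac{2d-1}{2d}\omega'$ are convex combinations of $\tfrac1d\tilde{\rho}+\tfrac{d-1}d\omega$ and $\omega$ (with weights $\tfrac12$ and $\tfrac{4d-1}{4d}$), hence have robustness $\leq\delta$ by quasi-convexity and (a), so $\Lambda_{2d;\,\tilde{\rho},\omega'}\in\dne^g_\delta$ by (b).

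The only real obstacle I expect is bookkeeping, concentrated in part (d): pushing the two (NE and DNE) case analyses through so that the constants land on exactly the stated thresholds in every regime of $(R,\delta)$ — in particular for small $\delta$, where the naive choices $s=\sigma^\ast$ or $\eta=\sigma^\ast$ fall short — rather than on a merely comparable sufficient bound.
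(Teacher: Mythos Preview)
Your sketch is correct and follows essentially the same route as the paper. The only cosmetic differences are that in~(a) you parameterise directly by $q=\Tr[\sigma\Phi_d]\in[0,1/d]$ instead of first twirling the input to a separable isotropic state, and in~(d) you commit to the specific choice $s=\min\{\delta R,1\}$ (equivalently a specific $q$) rather than optimising over the whole one-parameter family as the paper does; your choice is in fact the optimiser, so the thresholds coincide. Your construction of $\omega'$ in~(e) and the convex-combination weights $1/2$ and $(4d-1)/(4d)$ match the paper's exactly.
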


\begin{proof}
Let us proceed claim by claim.
\begin{enumerate}[(a)]
\item Since $\Lambda_{d;\,\tilde{\rho},\omega}$ is invariant under twirling from the right, i.e.\ $\Lambda_{d;\,\tilde{\rho},\omega}\circ \TT = \Lambda_{d;\,\tilde{\rho},\omega}$, and $\TT$ is a separability-preserving map, we can assume without loss of generality that the input is isotropic when deciding whether $\Lambda_{d;\,\tilde{\rho},\omega}\in \sepp^g_\delta$. Every separable isotropic state is a convex combination of $\tau_d$ and $\frac1d \Phi_d + \frac{d-1}{d}\tau_d$ due to Lemma~\ref{robustness_isotropic_lemma}, therefore thanks to the convexity of the generalised robustness we have that $\Lambda_{d;\,\tilde{\rho},\omega}\in \sepp^g_\delta$ iff
\bb
\delta &\geq \max\left\{ R_\SEP^g\left(\Lambda_{d;\,\tilde{\rho},\omega}(\tau_d)\right),\, R_\SEP^g\left(\Lambda_{d;\,\tilde{\rho},\omega}\left( \frac1d \Phi_d + \frac{d-1}{d}\tau_d \right) \right) \right\} \\
&= \max\left\{ R_\SEP^g(\omega),\, R_\SEP^g\left( \frac1d\, \tilde{\rho} + \frac{d-1}{d}\,\omega \right) \right\}
\ee
which is equivalent to the existence of $\sigma_1,\sigma_2\in \SEP$ such that~\eqref{ne_condition_dilution_map} is satisfied.
\item The additional condition that $\Lambda_{d;\,\tilde{\rho},\omega}$ needs to obey in order to be dually non-entangling is that $\Lambda_{d;\,\tilde{\rho},\omega}^\dag(\SEP)\subseteq \cone(\SEP)$. Note that $\Lambda_{d;\,\tilde{\rho},\omega}^\dag$ acts as
\bb
\Lambda_{d;\,\tilde{\rho},\omega}^\dag(Y) = \Phi_d \Tr[Y\tilde{\rho}] + (\id-\Phi_d) \Tr [Y\omega]\, .
\ee
Thus, for every $\sigma\in \SEP$ it ought to be true that
\bb
\Lambda_{d;\,\tilde{\rho},\omega}^\dag(\sigma) = \Phi_d \Tr[\sigma \tilde{\rho}] + (\id-\Phi_d) \Tr [\sigma \omega] \in \cone(\SEP)\, .
\ee
Due to Lemma~\ref{robustness_isotropic_lemma}, we know that $a\Phi_d + b(\id-\Phi_d)\in \cone(\SEP)$ if and only if $b\geq 0$ and $0\leq a\leq (d+1)b$. Hence, the additional condition to be obeyed in order for $\Lambda_{d;\,\tilde{\rho},\omega}$ to be non-entangling is seen to be precisely~\eqref{dne_condition_dilution_map}.

\item If $\Lambda_{d;\,\tilde{\rho},\omega}\in \sepp^g_\delta$, and thus~\eqref{ne_condition_dilution_map} is obeyed, then
\bb
\tilde{\rho} \leq d\left(\frac1d\, \tilde{\rho} + \frac{d-1}{d}\, \omega\right) \leq d (1+\delta)\sigma_2\, ,
\ee
implying that
\bb
R^g_\SEP(\tilde{\rho}) \leq d(1+\delta) - 1\, ,
\ee
which proves claim~(c).

\item Let $R\coloneqq R^g_\SEP(\tilde{\rho})$, so that there exists a state $\tilde{\omega}$ with the property that $\frac{1}{1+R}\, \tilde{\rho} + \frac{R}{1+R}\, \tilde{\omega} \eqqcolon \tilde{\sigma}\in \SEP$ is separable. Set $\omega \coloneqq q\tilde{\rho} + (1-q)\tilde{\omega}$ for some $q\in [0,1]$ to be specified later. For an arbitrary $d$, we then have that
\bb
\frac1d\,\tilde{\rho} + \frac{d-1}{d}\,\omega &= \left( \frac1d + q\, \frac{d-1}{d}\right) \tilde{\rho} + \frac{d-1}{d}\,(1-q)\,\tilde{\omega} \\
&\leq \max\left\{ \left( \frac1d + q\, \frac{d-1}{d}\right) (1+R),\, \frac{d-1}{d}\,(1-q)\,\frac{1+R}{R}\right\} \tilde{\sigma}\, ,
\ee
while
\bb
\omega \leq \max\left\{ q(1+R),\, (1-q)\, \frac{1+R}{R}\right\} \tilde{\sigma}\, ,
\ee
so that $\Lambda\in \sepp^g_{\delta(d,R,q)}$, with $\delta(d,R,q)$ satisfying
\bb
1+\delta(d,R,q) &= \max\left\{ \left( \frac1d + q\, \frac{d\!-\!1}{d}\right) (1\!+\!R),\, \frac{d\!-\!1}{d}\,(1-q)\,\frac{1\!+\!R}{R},\, q(1\!+\!R),\, (1-q)\, \frac{1\!+\!R}{R}\right\} \\
&= \max\left\{ \left( \frac1d + q\, \frac{d\!-\!1}{d}\right) (1\!+\!R),\, (1-q)\, \frac{1\!+\!R}{R}\right\} ,
\ee
where we ignored the second and third element in the maximum because they are manifestly smaller than the fourth and the first, respectively. Now, the optimal construction, i.e.\ the one with smallest $\delta'(d,R,q)$, is obtained by minimising the above function with respect to $q\in [0,1]$. To do this, start by observing that the two straight lines $q\mapsto \frac1d + q\, \frac{d-1}{d}$ and $q\mapsto \frac{1-q}{R}$ meet at the point
\bb
q_0 = \frac{1-\frac{R}{d}}{\frac{d-1}{d}\,R + 1}\, .
\label{meeting_point_q_0}
\ee
Note that $q_0\leq 1$. Depending on whether $q_0\geq 0$ or $q_0<0$, i.e.\ whether $d\geq R$ or $d<R$, the mimimum in $q$ is achieved either at $q=q_0$ or at $q=0$, respectively. Using this observation, after a few manipulations we obtain that
\bb
\delta'(d,R) \coloneqq \min_{q\in [0,1]} \delta(d,R,q) = \left\{\begin{array}{ll} \frac{1+R}{d}-1 & \text{if $d\leq R$,} \\[1ex] \frac{R}{(d-1)R+d} & \text{if $d\geq R$.} \end{array}\right. 
\ee
All that is left to do now is to ask ourselves: for a fixed $R$ and given some $\delta\geq 0$, what is the minimum $d\in \N^+$ such that $\delta'(d,R)\leq \delta$? (For values of $d$ larger than this minimum we also have $\delta'(d,R)\leq \delta$, as $d\mapsto \delta'(d,R)\leq \delta$ is monotonically non-increasing in $d$.) The answer is easily found to be
\bb
d'(\delta,R) \coloneqq \min\{ d\in \N^+\!\!:\ \delta'(d,R)\leq \delta\} = \ceil{\frac{1+\max\{R,1/\delta\}}{1+\max\{1/R, \delta\}}}\, .
\ee
Therefore, for 
values of $d$ larger than this quantity we have constructed a state $\omega$ such that $\Lambda_{d;\, \tilde{\rho},\omega}\in \sepp^g_\delta$, proving the claim.

\item 
We can readily adapt the construction in~(d). To do so, observe that we can ensure that~\eqref{dne_condition_dilution_map} is obeyed if we demand that $\tilde{\rho} \leq (d+1)\omega$. Since with the construction in~(d) we have that
\bb
(d+1)\omega = q(d+1) \tilde{\rho} + (1-q)(d+1)\tilde{\omega}\, ,
\ee
this additional constraint is obeyed provided that $q(d+1)\geq 1$. This limits the range of $q$ from $[0,1]$ to $\left[\frac{1}{d+1},\, 1 \right]$. We thus obtain a modified optimisation over $q$, given by
\bb
\delta''(d,R) \coloneqq \min_{q\in \left[\frac{1}{d+1},\, 1 \right]} \delta(d,R,q) = \left\{\begin{array}{ll} \frac{2(1+R)}{d+1}-1 & \text{if $d\leq 2R$,} \\[1ex] \frac{R}{(d-1)R+d} & \text{if $d\geq 2R$.} \end{array}\right.
\ee
Continuing, we see that
\bb
d''(\delta,R) \coloneqq \min\{ d\in \N^+\!\!:\ \delta''(d,R)\leq \delta\} = \ceil{\frac{1+\max\big\{1+2R,\,\delta^{-1}\big\}}{1+\max\big\{(1+2R)^{-1}, \delta\big\}}}\, ,
\ee
as claimed.

\item Finally, let $\Lambda_{d;\, \tilde{\rho},\omega}\in \sepp^g_\delta$, so that by~(a) one can find $\sigma_1,\sigma_2\in \SEP$ satisfying~\eqref{ne_condition_dilution_map}. Set $d'=2d$, and consider the choice $\omega' = \frac{1}{d'} \tilde{\rho} + \frac{d'-1}{d'}\omega$. Then
\bb
\omega' = \frac12 \left( \frac1d\, \tilde{\rho} + \frac{d-1}{d}\, \omega \right) + \frac12\, \omega \leq (1+\delta) \left( \frac12\, \sigma_2 + \frac12\, \sigma_1\right) \eqqcolon (1+\delta) \sigma'_1 ,
\ee
and moreover
\bb
\frac{1}{d'}\, \tilde{\rho} + \frac{d'-1}{d'}\, \omega' &= \left(1-\frac{1}{4d}\right) \left( \frac1d\, \tilde{\rho} + \frac{d-1}{d}\, \omega \right) + \frac{1}{4d}\, \omega \\
&\leq (1+\delta) \left(\left(1-\frac{1}{4d}\right) \sigma_2 + \frac{1}{4d}\, \sigma_1 \right) \\
&\eqqcolon (1+\delta) \sigma'_2 .
\ee
Since $\sigma'_1,\sigma'_2\in \SEP$ by convexity, by what we have shown in~(a) we know already that $\Lambda_{d';\, \tilde{\rho},\omega}\in \sepp^g_\delta$. But it also holds that
\bb
\tilde{\rho} \leq \left(d'+1\right) \left(\frac{1}{d'} \tilde{\rho} + \frac{d'-1}{d'}\omega \right) = \left(d'+1\right) \omega'\, ,
\ee
so that in particular also condition~\eqref{dne_condition_dilution_map} is met. This in turn implies that in fact $\Lambda_{d';\, \tilde{\rho},\omega}\in \dne^g_\delta$, as claimed.
\end{enumerate}
\vspace{-2ex}
\end{proof}

\begin{rem}
The above proof shows that we could have imposed Eq.~\eqref{dne_condition_dilution_map} to hold for all states $\sigma$ rather than only separable ones. What this means is that, in Lemma~\ref{one_shot_cost_key_lemma}(e)--(f), the duals of the maps $\Lambda_{d;\, \tilde{\rho},\omega}$ and $\Lambda_{d';\, \tilde{\rho},\omega'}$ can actually be chosen to be entanglement annihilating rather than simply separability preserving.
\end{rem}

\begin{rem}
Lemma~\ref{one_shot_cost_key_lemma}(d) generalises the finding by Brand\~{a}o and Plenio~\cite[proof of Proposition~IV.2]{BrandaoPlenio2}, which in our language could be rephrased as the statement that \emph{given $d\geq R(\tilde{\rho})$, there exists a state $\omega$ such that $\Lambda_{d;\, \tilde{\rho},\omega}\in \sepp^g_\delta$, with $\delta=1/R(\tilde{\rho})$.} This statement is recovered by choosing this value of $\delta$ in Lemma~\ref{one_shot_cost_key_lemma}(d).
\end{rem}

\subsection{One-shot and asymptotic entanglement cost under approximately DNE operations} \label{subsec_cost_DNE}

Using Lemma~\ref{one_shot_cost_key_lemma}(f) leads to a straightforward proof of the equivalence of the asymptotic entanglement cost under (possibly approximate) DNE and NE operations.

\begin{cor} \label{cost_dne=ne_cor}
For all bipartite states $\rho=\rho_{AB}$, all $\e\in [0,1]$, and all $\delta\geq 0$, it holds that 
\bb
E_{c,\, \sepp^g_\delta}^\e(\rho) \leq E_{c,\, \dne^g_\delta}^\e(\rho) \leq E_{c,\, \sepp^g_\delta}^\e(\rho) + 1\, .
\label{cost_dne=ne_error_delta_n}
\ee
In particular, for all sequences $(\delta_n)$ with $\delta_n\geq 0$ it holds that
\bb
E_{c,\, \sepp^g_{(\delta_n)}}^\e(\rho) = E_{c,\, \dne^g_{(\delta_n)}}^\e(\rho)\, ,
\label{cost_dne=ne_delta_n}
\ee
implying as a special case that
\bb
E_{c,\, \sepp}(\rho) = E_{c,\, \dne} (\rho)\, .
\label{cost_dne=ne}
\ee
\end{cor}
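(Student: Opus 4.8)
The plan is to reduce the whole statement to the one-shot level and then invoke Lemma~\ref{one_shot_cost_key_lemma}(f). The leftmost inequality in~\eqref{cost_dne=ne_error_delta_n} requires no work: since $\dne^g_\delta \subseteq \sepp^g_\delta$, shrinking the class of allowed dilution maps can only increase the cost, so $E_{c,\,\sepp^g_\delta}^\e(\rho) \leq E_{c,\,\dne^g_\delta}^\e(\rho)$ for every $\e$ and $\delta$.

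For the rightmost inequality I would argue at the level of $n$ copies of $\rho$, for fixed $\delta$. Write $m^\star := E_{c,\,\sepp^g_\delta}^{(1),\,\e}\!\left(\rho^{\otimes n}\right)$. By the reduction to canonical maps in Lemma~\ref{twirling_reduction_entanglement_manipulation_lemma}, cf.~\eqref{one_shot_cost_simplified}, there exist states $\tilde{\rho}$ with $\tfrac12\big\|\tilde{\rho} - \rho^{\otimes n}\big\|_1 \leq \e$ and $\omega$ such that the canonical dilution map $\Lambda_{2^{m^\star};\,\tilde{\rho},\omega}$ belongs to $\sepp^g_\delta$. Applying Lemma~\ref{one_shot_cost_key_lemma}(f) with $d = 2^{m^\star}$ produces a state $\omega'$ with $\Lambda_{2^{m^\star+1};\,\tilde{\rho},\omega'} \in \dne^g_\delta$; since $\Lambda_{2^{m^\star+1};\,\tilde{\rho},\omega'}(\Phi_{2^{m^\star+1}}) = \tilde{\rho}$ by~\eqref{canonical_form_dilution_map}, this is a feasible $\delta$-approximate DNE dilution protocol with exactly the same error, consuming $m^\star+1$ ebits. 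Hence $E_{c,\,\dne^g_\delta}^{(1),\,\e}\!\left(\rho^{\otimes n}\right) \leq m^\star + 1$. Dividing by $n$ and taking $\limsup_{n\to\infty}$ gives the rightmost inequality of~\eqref{cost_dne=ne_error_delta_n}.

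The remaining claims follow by the same one-shot bound applied with the $n$-dependent parameter $\delta_n$ in place of $\delta$: for each $n$ one has $E_{c,\,\dne^g_{\delta_n}}^{(1),\,\e}\!\left(\rho^{\otimes n}\right) \leq E_{c,\,\sepp^g_{\delta_n}}^{(1),\,\e}\!\left(\rho^{\otimes n}\right) + 1$, and dividing by $n$ the additive constant disappears in the $\limsup$, yielding~\eqref{cost_dne=ne_delta_n}. Specialising to $\delta_n \equiv 0$ and then passing to the supremum over $\e \in (0,1)$ as in~\eqref{cost} gives~\eqref{cost_dne=ne}. (For a separable $\rho$ both sides are trivially $0$, so one may assume $\rho$ entangled, which also guarantees $m^\star \geq 1$ and avoids the degenerate $d=1$ case of the canonical maps.)

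The statement of the corollary is itself light; the real content has already been packed into Lemma~\ref{one_shot_cost_key_lemma}(f), whose proof exploits the right-twirling invariance $\Lambda_{d;\,\tilde{\rho},\omega}\circ\TT = \Lambda_{d;\,\tilde{\rho},\omega}$, the classification of separable isotropic states from Lemma~\ref{robustness_isotropic_lemma}, and a judicious choice of $\omega'$ as a convex mixture that tilts $\omega$ toward $\tilde{\rho}$ so that the dual condition~\eqref{dne_condition_dilution_map} can be met precisely once the output dimension is doubled. The only subtlety here is bookkeeping: doubling the Schmidt rank of the reference maximally entangled state costs exactly one extra ebit (as $2\cdot 2^{m} = 2^{m+1}$), so the price of upgrading NE to DNE dilution is an additive $1$ at the one-shot level, which is then washed out by the regularisation $\tfrac1n(\cdot)$.
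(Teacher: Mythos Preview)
Your proof is correct and follows essentially the same approach as the paper: the trivial inclusion $\dne^g_\delta \subseteq \sepp^g_\delta$ for the left inequality, and Lemma~\ref{one_shot_cost_key_lemma}(f) applied to the canonical dilution map from Lemma~\ref{twirling_reduction_entanglement_manipulation_lemma} for the right inequality, followed by dividing by $n$ and taking the $\limsup$. The paper phrases the one-shot bound for a generic $\rho$ and then substitutes $\rho\mapsto\rho^{\otimes n}$, $\delta\mapsto\delta_n$, whereas you work directly at the $n$-copy level; this is purely cosmetic. Your parenthetical about the degenerate $d=1$ case is harmless but not actually needed, since Lemma~\ref{one_shot_cost_key_lemma}(f) applies for all $d\in\N_+$.
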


\begin{proof}
Since $\dne^g_\delta \subseteq \sepp^g_\delta$ by definition, it holds trivially that $E_{c,\, \sepp^g_\delta}^\e(\rho_{AB}) \leq E_{c,\, \dne^g_\delta}^\e(\rho_{AB})$.
We thus move on to the second inequality in~\eqref{cost_dne=ne_error_delta_n}. Let $m$ be an achievable one-shot yield of entanglement dilution under operations in $\sepp^g_\delta$, i.e.\ let it belong to the set on the right-hand side of~\eqref{one_shot_cost_simplified} in Lemma~\ref{twirling_reduction_entanglement_manipulation_lemma} for the choice $\FF=\sepp$. Then, there exist two states $\tilde{\rho}\in B^\e_T(\rho)$ and $\omega$ such that $\Lambda_{2^m;\, \tilde{\rho},\omega}\in \sepp^g_\delta$. By Lemma~\ref{one_shot_cost_key_lemma}(f), we know that there exists a modified state $\omega'$ with the property that $\Lambda_{2^{m+1};\, \tilde{\rho},\omega'}\in \dne^g_\delta$. Applying once again Lemma~\ref{twirling_reduction_entanglement_manipulation_lemma}, we conclude that $E_{c,\, \dne^g_\delta}^\e(\rho)\leq m+1$. Taking the minimum over $m$ yields the claim~\eqref{cost_dne=ne_error_delta_n}.

To prove~\eqref{cost_dne=ne_delta_n}, it suffices to apply~\eqref{cost_dne=ne_error_delta_n} for $\rho\mapsto \rho^{\otimes n}$ and $\delta \mapsto \delta_n$, divide both sides by $n$, and then take the limit $n\to\infty$. Finally,~\eqref{cost_dne=ne} follows by taking $\delta_n=0$ for all $n$.
\end{proof}

One of the main results of~\cite{BrandaoPlenio2} is that there exists a sequence $(\delta_n)_n$ with $\delta_n \tendsn{} 0$ such that 
\begin{equation}\begin{aligned}
    E_{c,\, \sepp^g_{(\delta_n)}}(\rho) = D^\infty(\rho \| \SEP).
\end{aligned}\end{equation}
As discussed in the main text, such entanglement manipulation protocols are termed asymptotically non-entangling (ANE), and we can define the \deff{asymptotically dually non-entangling} (ADNE) protocols analogously by imposing that $\delta_n \tendsn{} 0$. The following result generalises the one-shot characterisation of entanglement dilution in~\cite[Theorem~2]{Brandao-Datta} to DNE operations, showing in detail how the asymptotic convergence to $D^\infty(\rho \| \SEP)$ is recovered.

\begin{cor} \label{one_shot_cost_A_dne_ne_cor}
For all bipartite states $\rho=\rho_{AB}$, all $\e\in [0,1]$, and all $\delta > 0$, it holds that 
\bb
D_{\max}^{\e}\big(\rho \,\big\|\, \SEP\big) - \log_2
(1\!+\!\delta) &= \log_2\left(1 +\! \min_{\tilde{\rho} \in B^\e_T(\rho)} \!R^g_\SEP(\tilde{\rho}) \right) - \log_2(1\!+\!\delta) \\
&\leq E_{c,\, \sepp^g_\delta}^\e(\rho) \leq E_{c,\, \dne^g_\delta}^\e(\rho) \\
&\leq \ceil{\log_2 \left( \frac{2}{1+\delta}\,2^{D_{\max}^{\e}(\rho \| \SEP)} + \frac1\delta \right)} \leq \ceil{\log_2\left(2^{D_{\max}^{\e}(\rho \| \SEP)} + \frac{1}{2\delta} \right)} + 1\, . 
\label{one_shot_cost_A_dne_ne}
\ee
In particular, for all sequences $(\delta_n)_n$ such that $\delta_n > 0$ for all sufficiently large $n$ and moreover
\bb
- D_{\max}^{\e}\big(\rho^{\otimes n} \,\big\|\, \SEP_n \big) - o(n) \leq \log_2 \delta_n \leq o(n)
\label{delta_n_twofold_bound}
\ee
as $n\to\infty$, we have that
\bb
E_{c,\, \sepp^g_{(\delta_n)}}^\e(\rho) = E_{c,\, \dne^g_{(\delta_n)}}^\e(\rho) = \limsup_{n\to\infty} \frac1n\, D_{\max}^{\e}\big(\rho \| \SEP\big)\, .
\label{cost_error_A_dne_ne}
\ee
Therefore, 
if $\limsup_{n\to\infty} \frac1n \log_2 \delta_n \leq 0$ and $\liminf_{n\to\infty} \frac1n \log_2 \delta_n > - D^\infty(\rho\|\SEP)$ we deduce that
\bb
E_{c,\, \sepp^g_{(\delta_n)}}(\rho) = E_{c,\, \dne^g_{(\delta_n)}} (\rho) = D^\infty\big(\rho \| \SEP\big)\, .
\label{cost_A_dne_ne_1}
\ee
As a special case, it suffices to take 
$\delta_n = 2^{-n \left(D^\infty(\rho \| \SEP) \,-\, t \right)}$ for an arbitrary small $t>0$ to obtain
\bb
E_{c,\, \ane}(\rho) = E_{c,\, \adne} (\rho) = D^\infty\big(\rho \| \SEP\big)\, .
\label{cost_A_dne_ne_2}
\ee
\end{cor}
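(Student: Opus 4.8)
The plan is to sandwich the one-shot entanglement cost by smoothed max-relative entropies of entanglement and then read off the asymptotic rate. First I would record the elementary identity
\bb
D_{\max}^{\e,T}(\rho\|\SEP) = \min_{\tilde\rho\in B^\e_T(\rho)} D_{\max}(\tilde\rho\|\SEP) = \log_2\!\Big( 1 + \min_{\tilde\rho\in B^\e_T(\rho)} R^g_\SEP(\tilde\rho)\Big),
\ee
which is immediate from the definition~\eqref{smoothed_max_relent_entanglement} of $D_{\max}^{\e,T}(\rho\|\SEP)$ combined with~\eqref{robustness_as_D_max}; this already gives the first equality in~\eqref{one_shot_cost_A_dne_ne}.

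For the one-shot chain, Lemma~\ref{twirling_reduction_entanglement_manipulation_lemma} reduces $E_{c,\FF^g_\delta}^{(1),\e}(\rho)$, for $\FF\in\{\sepp,\dne\}$, to the least integer $m$ admitting states $\tilde\rho\in B^\e_T(\rho)$ and $\omega$ with $\Lambda_{2^m;\,\tilde\rho,\omega}\in\FF^g_\delta$. For the lower bound I would invoke Lemma~\ref{one_shot_cost_key_lemma}(c): any feasible $m$ obeys $2^m\geq(1+\delta)^{-1}\big(1+R^g_\SEP(\tilde\rho)\big)\geq(1+\delta)^{-1}\,2^{D_{\max}^{\e,T}(\rho\|\SEP)}$, the last step because $\tilde\rho\in B^\e_T(\rho)$ together with the identity above, so $E_{c,\sepp^g_\delta}^{(1),\e}(\rho)\geq D_{\max}^{\e,T}(\rho\|\SEP)-\log_2(1+\delta)$. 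For the upper bound I would apply the ``moreover'' clause of Lemma~\ref{one_shot_cost_key_lemma}(d) to the optimiser $\tilde\rho^\star\in B^\e_T(\rho)$, writing $R:=R^g_\SEP(\tilde\rho^\star)=\min_{\tilde\rho\in B^\e_T(\rho)}R^g_\SEP(\tilde\rho)$ so that $2^{D_{\max}^{\e,T}(\rho\|\SEP)}=1+R$: whenever $2^m\geq\frac{1+\max\{1+2R,\,\delta^{-1}\}}{1+\max\{(1+2R)^{-1},\,\delta\}}$ there is an $\omega$ with $\Lambda_{2^m;\,\tilde\rho^\star,\omega}\in\dne^g_\delta$, hence $E_{c,\dne^g_\delta}^{(1),\e}(\rho)\leq m$. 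The remainder is arithmetic: a two-case split on whether $\delta\gtrless(1+2R)^{-1}$ shows $\frac{1+\max\{1+2R,\,\delta^{-1}\}}{1+\max\{(1+2R)^{-1},\,\delta\}}\leq\frac{2}{1+\delta}\,2^{D_{\max}^{\e,T}(\rho\|\SEP)}+\frac1\delta\leq 2\big(2^{D_{\max}^{\e,T}(\rho\|\SEP)}+\frac1{2\delta}\big)$, and applying the monotone map $x\mapsto\ceil{\log_2 x}$ produces the last two members of~\eqref{one_shot_cost_A_dne_ne}; the middle inequality $E_{c,\sepp^g_\delta}^{(1),\e}\leq E_{c,\dne^g_\delta}^{(1),\e}$ is trivial from $\dne^g_\delta\subseteq\sepp^g_\delta$ (or from Corollary~\ref{cost_dne=ne_cor}).

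To pass to~\eqref{cost_error_A_dne_ne} I would instantiate the one-shot chain at $\rho^{\otimes n}$ with $\delta\mapsto\delta_n$, divide by $n$, and take $\limsup_{n\to\infty}$; writing $x_n:=D_{\max}^{\e,T}(\rho^{\otimes n}\|\SEP_n)\geq0$, the lower error term $\tfrac1n\log_2(1+\delta_n)$ vanishes because $\log_2\delta_n\leq o(n)$, while the upper error term, via $\log_2(2^{x_n}+\tfrac1{2\delta_n})=x_n+\log_2(1+\tfrac{2^{-x_n}}{2\delta_n})$, has $\tfrac1n\log_2(1+\tfrac{2^{-x_n}}{2\delta_n})\to0$ precisely by the left half $-x_n-\log_2\delta_n\leq o(n)$ of hypothesis~\eqref{delta_n_twofold_bound}; both sides then collapse onto $\limsup_n\tfrac1n x_n$. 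For~\eqref{cost_A_dne_ne_1} I would take $\e\to0^+$, using $E_{c,\FF^g_{(\delta_n)}}(\rho)=\lim_{\e\to0^+}E_{c,\FF^g_{(\delta_n)}}^\e(\rho)$ and the asymptotic equipartition property (Theorem~\ref{AEP_entanglement_thm}), $\lim_{\e\to0^+}\limsup_n\tfrac1n D_{\max}^{\e,T}(\rho^{\otimes n}\|\SEP_n)=D^\infty(\rho\|\SEP)$; one checks that the hypotheses $\limsup_n\tfrac1n\log_2\delta_n\leq0$ and $\liminf_n\tfrac1n\log_2\delta_n>-D^\infty(\rho\|\SEP)$ force~\eqref{delta_n_twofold_bound} for all sufficiently small $\e$, since $\liminf_n\tfrac1n D_{\max}^{\e,T}(\rho^{\otimes n}\|\SEP_n)$ increases to $D^\infty(\rho\|\SEP)$ as $\e\downarrow0$ and hence eventually exceeds $-\liminf_n\tfrac1n\log_2\delta_n$. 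Finally,~\eqref{cost_A_dne_ne_2} is the instance $\delta_n=2^{-n(D^\infty(\rho\|\SEP)-t)}$: for $0<t<D^\infty(\rho\|\SEP)$ — a nonempty range since $D^\infty(\rho\|\SEP)\geq D^{\sep,\infty}(\rho\|\SEP)>0$ for entangled $\rho$ by Theorem~\ref{distillable_dne_thm}, the separable case being vacuous — it meets both conditions and additionally has $\delta_n\to0$, so the associated operations are ANE, respectively ADNE.

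The only delicate point is the asymptotic one: one must see that the two-sided scaling window~\eqref{delta_n_twofold_bound} is exactly what makes \emph{both} error terms (the $\log_2(1+\delta_n)$ on the lower side and the $2^{-x_n}/(2\delta_n)$ on the upper side) sublinear, and that it persists for $\e$ small enough to let the $\e\to0$ limit commute with the invocation of the AEP. Everything else — the one-shot sandwich and the two-case arithmetic simplification — is routine.
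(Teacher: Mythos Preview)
Your proposal is correct and follows essentially the same route as the paper's proof: the same reduction via Lemma~\ref{twirling_reduction_entanglement_manipulation_lemma}, the same use of Lemma~\ref{one_shot_cost_key_lemma}(c) for the lower bound and the ``moreover'' clause of Lemma~\ref{one_shot_cost_key_lemma}(d) for the upper bound, and the same asymptotic passage via the AEP (Theorem~\ref{AEP_entanglement_thm}). Your two-case split for the algebraic simplification is equivalent to the paper's one-line chain $\frac{1+\max\{x,1/\delta\}}{1+\max\{1/x,\delta\}}\leq\frac{1+x}{1+\delta}+\frac1\delta$, and you actually spell out more carefully than the paper why the two-sided window~\eqref{delta_n_twofold_bound} is exactly what kills both error terms and why it survives the limit $\e\to0^+$.
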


We remark that the choice of the sequence $(\delta_n)_n$ found in the original proof of the fact that $E_{c,\, \ane}(\rho) = D^\infty(\rho \| \SEP)$ in~\cite{BrandaoPlenio2} was $\delta_n = \Big(2^{D^{\e}_{\max}(\rho^{\otimes n}\|\SEP_n)}-1\Big)^{-1}$.

\begin{proof}
As we argued in the proof of Corollary~\ref{cost_dne=ne_cor}
the inequality $E_{c,\, \sepp^g_\delta}^\e(\rho) \leq E_{c,\, \dne^g_\delta}^\e(\rho)$ holds trivially by inclusion of the corresponding sets of free operations.
To establish the lower bound on the one-shot cost under non-entangling operations, i.e.\ the first inequality in~\eqref{one_shot_cost_A_dne_ne}, pick an $m\in \N$ that belongs to the set on the right-hand side of~\eqref{one_shot_cost_simplified} in Lemma~\ref{twirling_reduction_entanglement_manipulation_lemma} for $\FF=\sepp$. This means that there exist two states $\tilde{\rho}\in B^\e_T(\rho)$ and $\omega$ such that the associated dilution map $\Lambda_{2^m;\,\tilde{\rho},\omega}$ defined by~\eqref{canonical_form_dilution_map} satisfies that $\Lambda_{2^m;\,\tilde{\rho},\omega} \in \sepp^g_\delta$.
Then, from Lemma~\ref{one_shot_cost_key_lemma}(c) we obtain that
\bb
2^m \geq \frac{1}{1+\delta}\left( 1+R^g_\SEP(\tilde{\rho})\right) \geq \frac{1}{1+\delta}\left( 1+ \min_{\rho'\in B^\e_T(\rho)}R^g_\SEP(\rho')\right) ,
\ee
from which the lower bound follows by taking the logarithm of both sides.

As for the upper bound on $E_{c,\, \dne^g_\delta}^\e(\rho)$ in~\eqref{one_shot_cost_A_dne_ne}, set
\bb
m = \ceil{\log_2 \left( \frac{2}{1+\delta}\,2^{D_{\max}^{\e}(\rho \| \SEP)} + \frac1\delta \right)} = \ceil{\log_2 \left( \frac{2}{1+\delta} \left(1+\min_{\rho' \in B^\e_T(\rho)} \!R^g_\SEP(\rho')\right) + \frac1\delta \right)} ,
\ee
so that there exists $\tilde{\rho}\in B^\e_T(\rho)$ with the property that
\bb
2^m \geq \frac{2}{1+\delta}\left(1+ R^g_\SEP(\tilde{\rho})\right) + \frac{1}{\delta} \geq \frac{1 + \max\big\{ 1+2R^g_\SEP(\tilde{\rho}),\, \delta^{-1}\big\}}{1 + \max\big\{ \big(1+2R^g_\SEP(\tilde{\rho}\big)^{-1},\, \delta\big\}}\, ,
\ee
where the last inequality follows from some elementary algebraic manipulations, i.e.\ by observing that
\bb
\frac{1+\max\{x,1/\delta\}}{1+\max\{1/x,\delta\}} \leq \frac{1+\max\{x,1/\delta\}}{1+\delta} = \max\left\{ \frac{1+x}{1+\delta},\, \frac{1+1/\delta}{1+\delta}\right\} = \max\left\{ \frac{1+x}{1+\delta},\, \frac1\delta\right\} \leq \frac{1+x}{1+\delta} + \frac1\delta\, ,
\ee
where in our case $x=1+ 2 R^g_\SEP(\tilde{\rho})$.
Then, by Lemma~\ref{one_shot_cost_key_lemma}(e) we can construct a dilution map $\Lambda_{2^m;\,\tilde{\rho},\omega}$ of the form~\eqref{canonical_form_dilution_map} with the property that $\Lambda_{2^m;\,\tilde{\rho},\omega}\in \dne^g_\delta$. By~\eqref{one_shot_cost_simplified} in Lemma~\ref{twirling_reduction_entanglement_manipulation_lemma} applied for $\FF=\dne$, this proves that $E_{c,\, \dne^g_\delta}^\e(\rho)\leq m$, as claimed.

Note that~\eqref{cost_error_A_dne_ne}~follows from~\eqref{one_shot_cost_A_dne_ne} by dividing both sides of~\eqref{one_shot_cost_A_dne_ne} by $n$ and taking the $\limsup$ as $n\to\infty$. The identity~\eqref{cost_A_dne_ne_1} is a consequence of the asymptotic equipartition property
\bb
\lim_{\e\to 0^+} \limsup_{n\to\infty} \frac1n D^{\e}_{\max}(\rho \|\SEP) = D^\infty(\rho\|\SEP)\, ,
\ee
proved in~\cite{BrandaoPlenio2, Datta-alias} and reported here as Theorem~\ref{AEP_entanglement_thm}. 
Eq.~\eqref{cost_A_dne_ne_2} follows immediately, because the stated choice of $\delta_n$ satisfies~\eqref{delta_n_twofold_bound}.
\end{proof}

\section{Quantum hypothesis testing with the antisymmetric state}\label{sec:antisym_note}

Throughout this section, we study the entanglement properties of the antisymmetric state
\bb
\alpha_d = \frac{\id - F}{d(d-1)}
\label{antisymmetric_redef}
\ee
(cf.\ Sec.~\ref{sec:werner}).  
We begin in Section~\ref{subsec_GQSL_alpha} by showing that 
the generalised quantum Stein's lemma holds true for this state (Theorem~\ref{Stein_alpha_thm}). The approach we will use is an application of a general strategy sketched out in~\cite[Eq.~(84)]{gap}. This determines precisely the ultimate efficiency of entanglement testing using arbitrary measurements for the antisymmetric state. We then tackle the restriction to separable measurements in Section~\ref{subsec_Stein_alpha_SEP}, proving in Proposition~\ref{restricted_Stein_antisymmetric_prop} and Theorem~\ref{antisymmetric_gap_thm} that this restriction leads to a gap that impacts in a quantifiable way the asymptotic performance of entanglement testing.

\subsection{The generalised quantum Stein's lemma for the antisymmetric state} \label{subsec_GQSL_alpha}

\begin{thm}[(Generalised quantum Stein's lemma for the antisymmetric state)] \label{Stein_alpha_thm}
For two positive integers $d,n\in \N_+$, let $\alpha_d$ be the antisymmetric state defined by~\eqref{antisymmetric_redef}. 
Then it holds that
\bb
D_{\max}\big(\alpha_d^{\otimes n}\, \big\|\, \SEP_n\big) - \log_2\frac{1}{1-\delta} \leq D_{\max}^{\delta}\big(\alpha_d^{\otimes n}\, \big\|\, \SEP_n\big) \leq D_{\max}\big(\alpha_d^{\otimes n}\, \big\|\, \SEP_n\big)\qquad \forall\ \delta\in [0,1)\, .
\label{Stein_alpha_one_shot_D_max_version}
\ee
where $\SEP_n\coloneqq \SEP(A^n\!:\!B^n)$ stands for the set of separable states over $n$ copies of $AB$ with Hilbert space $\C^d\otimes \C^d$. Therefore,
\bb
\stein_{\all}\left(\alpha_d \| \SEP\right) &= \lim_{\e\to 0^+} \liminf_{n\to\infty} \frac1n\, D_H^\e \big(\alpha_d^{\otimes n} \,\big\|\, \SEP_n\big) \\
&= \lim_{\e \to 0^+} \liminf_{n\to\infty} \frac1n\, D_{\max}^{1-\e} \big(\alpha_d^{\otimes n} \,\big\|\, \SEP_n\big) \\
&= D_{\max}^\infty(\alpha_d \|\SEP) \\
&= D^\infty(\alpha_d \|\SEP)\, ,
\label{Stein_alpha}
\ee
where $D_{\max}^{\e}$ is defined by~\eqref{smoothed_max_relent_entanglement}.
\end{thm}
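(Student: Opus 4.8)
The plan is to establish the one-shot sandwich~\eqref{Stein_alpha_one_shot_D_max_version} first and then read off~\eqref{Stein_alpha} from it via the reformulation of the generalised Stein exponent recorded in Corollary~\ref{GQSL_equivalent_form_cor}.

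For the one-shot bound, the upper inequality is immediate, since $\alpha_d^{\otimes n}\in B^\delta_T(\alpha_d^{\otimes n})$. The lower inequality rests on the fact that $\alpha_d^{\otimes n}$ is \emph{flat}: with $\Pi_-\coloneqq\frac{\id-F}{2}$ the projector onto the antisymmetric subspace of $\C^d\otimes\C^d$, one has $\alpha_d=\binom d2^{-1}\Pi_-$, so $\alpha_d^{\otimes n}=\binom d2^{-n}P$ with $P\coloneqq\Pi_-^{\otimes n}$ a projector of rank $\binom d2^n$. First I would note that if $\tilde\rho\in B^\delta_T(\alpha_d^{\otimes n})$ then, since $P$ is a projector and $\tilde\rho-\alpha_d^{\otimes n}$ is traceless, $\Tr[P\tilde\rho]\geq\Tr[P\alpha_d^{\otimes n}]-\tfrac12\|\tilde\rho-\alpha_d^{\otimes n}\|_1\geq 1-\delta$. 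Consequently, for any $\sigma\in\SEP_n$ and any $\lambda$ with $\tilde\rho\leq\lambda\sigma$ we obtain $1-\delta\leq\Tr[P\tilde\rho]\leq\lambda\,\Tr[P\sigma]\leq\lambda\,p_n$, where $p_n\coloneqq\max_{\sigma\in\SEP_n}\Tr[\Pi_-^{\otimes n}\sigma]$; minimising over $\sigma$ and $\tilde\rho$ gives $D_{\max}^{\delta,T}(\alpha_d^{\otimes n}\|\SEP_n)\geq\log_2\tfrac{1-\delta}{p_n}$. It therefore remains to prove that $D_{\max}(\alpha_d^{\otimes n}\|\SEP_n)=\log_2\tfrac1{p_n}$, which together with the above yields exactly~\eqref{Stein_alpha_one_shot_D_max_version}.

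This last identity --- and with it the whole theorem --- reduces to the single combinatorial claim
\[
p_n \;=\; \max_{\sigma\in\SEP_n}\Tr[\Pi_-^{\otimes n}\sigma] \;=\; \max_{\ket{\psi},\,\ket{\phi}}\ \bra{\psi}_{A^n}\bra{\phi}_{B^n}\,\Pi_-^{\otimes n}\,\ket{\psi}_{A^n}\ket{\phi}_{B^n} \;=\; 2^{-n}\,,
\]
where the $i$-th factor of $\Pi_-^{\otimes n}$ acts on $(A_i,B_i)$. Indeed: (i)~$W=p_n^{-1}\Pi_-^{\otimes n}$ is always a feasible witness for the SDP dual of $D_{\max}(\cdot\|\SEP_n)=\log_2\!\big(1+R^g_{\SEP_n}(\cdot)\big)$, of objective value $p_n^{-1}$, so $D_{\max}(\alpha_d^{\otimes n}\|\SEP_n)\geq\log_2 p_n^{-1}$; (ii)~the lower bound $p_n\geq 2^{-n}$ is witnessed by $\sigma_\star^{\otimes n}$ with $\sigma_\star$ any separable state saturating $\Tr[\Pi_-\sigma_\star]=\tfrac12$ (e.g.\ $\ketbra{0}\otimes\ketbra{1}$); (iii)~granting $p_n\leq 2^{-n}$, the same product structure furnishes a matching primal point, namely $\sigma^\star=\big(\tfrac12\alpha_d+\tfrac12\tfrac{\id+F}{d(d+1)}\big)^{\otimes n}\in\SEP_n$, which satisfies $\alpha_d^{\otimes n}\leq 2^n\sigma^\star=p_n^{-1}\sigma^\star$ because $\tfrac12\alpha_d+\tfrac12\tfrac{\id+F}{d(d+1)}\geq\tfrac12\alpha_d$; hence $D_{\max}(\alpha_d^{\otimes n}\|\SEP_n)=\log_2 p_n^{-1}=n$. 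The substantive content is the upper bound $p_n\leq 2^{-n}$: \emph{entanglement across the $n$ copies cannot make a separable state overlap the tensor-power antisymmetric subspace by more than the product value $2^{-n}$}. This is the main obstacle. I would attempt it by induction on $n$, peeling off one factor via the identity $\langle\psi\phi|\Pi_-^{\otimes n}|\psi\phi\rangle=\tfrac12\sum_{a<c}\langle h_{ac}|\Pi_-^{\otimes(n-1)}|h_{ac}\rangle$, with $h_{ac}\coloneqq\ket{\psi_a}\ket{\phi_c}-\ket{\psi_c}\ket{\phi_a}$ obtained from the coordinate expansions $\ket{\psi}=\sum_a\ket{\psi_a}_{A^{n-1}}\ket{a}_{A_n}$, $\ket{\phi}=\sum_c\ket{\phi_c}_{B^{n-1}}\ket{c}_{B_n}$, together with $\sum_{a<c}\|h_{ac}\|^2=1-\Tr[\rho^\psi_{A_n}\rho^\phi_{B_n}]\leq 1$; since each $h_{ac}$ has Schmidt rank at most $2$ rather than $1$, the induction will not close naively and will need a finer control of how these low-rank vectors interact with $\Pi_-^{\otimes(n-1)}$, or else a global representation-theoretic argument exploiting the $U(d)^{\times n}\!\rtimes S_n$ symmetry of $\Pi_-^{\otimes n}$.

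Finally, for the asymptotics: setting $\delta=1-\e$ in~\eqref{Stein_alpha_one_shot_D_max_version} gives $D_{\max}^{1-\e,T}(\alpha_d^{\otimes n}\|\SEP_n)\geq D_{\max}(\alpha_d^{\otimes n}\|\SEP_n)-\log_2\tfrac1\e=n-\log_2\tfrac1\e$; dividing by $n$, taking $\liminf_{n\to\infty}$ and then $\e\to 0^+$, and invoking Corollary~\ref{GQSL_equivalent_form_cor} (which, through Lemma~\ref{equivalence_T_P_smoothing_lemma}, also legitimises replacing $D_{\max}^{1-\e,T}$ by $D_{\max}^{1-\e,P}$), I obtain $\stein(\alpha_d\|\SEP)\geq D_{\max}^\infty(\alpha_d\|\SEP)$; the reverse inequality is the trivial $D_{\max}^{1-\e,T}\leq D_{\max}$, so $\stein(\alpha_d\|\SEP)=D_{\max}^\infty(\alpha_d\|\SEP)$. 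Combining the easily-established general bound $\stein(\rho\|\SEP)\leq D^\infty(\rho\|\SEP)$ (from the ordinary quantum Stein's lemma applied to fixed product separable states, or from $D_H^\e(\rho\|\sigma)\leq\tfrac1{1-\e}\big(D(\rho\|\sigma)+h_2(\e)\big)$ together with sub-additivity of $D(\rho^{\otimes n}\|\SEP_n)$) with $D^\infty(\rho\|\SEP)\leq D_{\max}^\infty(\rho\|\SEP)$ then squeezes $\stein(\alpha_d\|\SEP)=D^\infty(\alpha_d\|\SEP)=D_{\max}^\infty(\alpha_d\|\SEP)$, which by $D_{\max}(\alpha_d^{\otimes n}\|\SEP_n)=n$ equals $1$; this establishes~\eqref{Stein_alpha}.
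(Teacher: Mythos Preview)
Your overall approach via the flatness of $\alpha_d^{\otimes n}$ is sound, and the reduction of the lower bound in~\eqref{Stein_alpha_one_shot_D_max_version} to the identity $D_{\max}(\alpha_d^{\otimes n}\|\SEP_n)=-\log_2 p_n$ is correct. The gap is that you then reduce \emph{this} identity to the much stronger claim $p_n=2^{-n}$ --- equivalently, additivity of $D_{\max}(\cdot\|\SEP)$ on tensor powers of $\alpha_d$ --- which you yourself acknowledge you cannot prove (the induction does not close). This detour is unnecessary: the identity $D_{\max}(\alpha_d^{\otimes n}\|\SEP_n)=-\log_2 p_n$ follows directly from the Werner twirling $\TT_W^{\otimes n}$. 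If $\sigma^\star\in\SEP_n$ achieves $\Tr[\Pi_-^{\otimes n}\sigma^\star]=p_n$, then $\TT_W^{\otimes n}(\sigma^\star)\in\SEP_n$ (the twirl is LOCC) decomposes as a convex combination of the mutually orthogonal states $\alpha_d^{\otimes I}\otimes\sigma_d^{\otimes I^c}$, with the coefficient of $\alpha_d^{\otimes n}$ equal to $\Tr[\Pi_-^{\otimes n}\sigma^\star]=p_n$; hence $\TT_W^{\otimes n}(\sigma^\star)\geq p_n\,\alpha_d^{\otimes n}$, matching your dual-witness lower bound. The paper applies the same twirling idea one step earlier, directly to the smoothing state $\tilde\rho$: since $\TT_W^{\otimes n}$ fixes $\alpha_d^{\otimes n}$, maps $\SEP_n$ into itself, and contracts both the trace distance and $D_{\max}(\cdot\|\SEP_n)$, the optimal $\tilde\rho$ may be taken $\TT_W^{\otimes n}$-invariant; the orthogonal decomposition then forces $\tilde\rho\geq(1-\delta)\alpha_d^{\otimes n}$, giving $D_{\max}(\alpha_d^{\otimes n}\|\SEP_n)\leq D_{\max}(\tilde\rho\|\SEP_n)+\log_2\tfrac{1}{1-\delta}$ without any reference to $p_n$ at all.

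For the asymptotic part, your argument is correct once you drop the explicit value $n$: the sandwich at $\delta=1-\e$ gives $\liminf_n\tfrac1n D_{\max}^{1-\e,T}\geq D_{\max}^\infty(\alpha_d\|\SEP)$, so $\stein(\alpha_d\|\SEP)=D_{\max}^\infty(\alpha_d\|\SEP)$ via Corollary~\ref{GQSL_equivalent_form_cor}, and the identification with $D^\infty$ follows either from your squeeze $\stein\leq D^\infty\leq D_{\max}^\infty$ or, as in the paper, from Theorem~\ref{AEP_entanglement_thm} by noting that the sandwich makes $\liminf_n\tfrac1n D_{\max}^{\delta,T}$ independent of $\delta\in[0,1)$. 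Your closing assertion that the common value equals $1$ is extraneous to the theorem and, without a proof of $p_n=2^{-n}$, remains unestablished.
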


\begin{proof}
Since $\SEP_n$ is invariant under local unitaries and convex, applying the Werner twirling $\TT_W^{\otimes n}$, where $\TT_W$ is defined by~\eqref{Werner_twirling}, can only decrease $D_{\max}$. Therefore, we can choose the minimising state $\rho_n$ for $D_{\max}^{\delta,\, T}\big(\alpha_d^{\otimes n}\, \big\|\, \SEP_n\big)$ to be a fixed point of $\TT_W^{\otimes n}$. In formula,
\bb
D_{\max}^{\delta}\big(\alpha_d^{\otimes n}\, \big\|\, \SEP_n\big) = \min_{\substack{\rho_n\, \in\, B^\delta_T(\alpha_d^{\otimes n}),\\ \theta_n\in \SEP_n}} D(\rho_n \| \theta_n) = \min_{\substack{\rho_n = \TT_W^{\otimes n}(\rho_n) \,\in\, B^\delta_T(\alpha_d^{\otimes n}),\\ \theta_n\in \SEP_n}} D(\rho_n \| \theta_n)\, .
\label{Stein_alpha_proof_Werner_symmetry}
\ee
Now, the point is that we know what the fixed points of $\TT_W^{\otimes n}$ look like. As argued in~\cite{Audenaert2001} (see also~\cite{Werner-symmetry}) and as one can immediately see from~\eqref{Werner_twirling_simplified}, too, they are of the form
\bb
\rho_n = \sum_{I\subseteq [n]} P_I\, \alpha_d^{\otimes I} \otimes \sigma_d^{I^c}
\ee
where $\sigma_d$ is the symmetric state defined by~\eqref{antisymmetric_and_symmetric}, $P_I$ is an arbitrary probability distribution over the set of subsets of $[n]=\{1,\ldots, n\}$, and $I^c$ is the complementary set to $I$. Now, since $\alpha_d$ and $\sigma_d$ are orthogonal, from the condition $\rho_n \in B^\delta_T\big(\alpha_d^{\otimes n}\big)$ we see that
\bb
\delta &\geq \frac12 \left\| \alpha_d^{\otimes n} - \rho_n \right\|_1 \\
&= \frac12\left\| \left(1-P_{[n]}\right) \alpha_d^{\otimes n} - \sum_{I\subseteq [n],\, I\neq [n]} P_I\, \alpha_d^{\otimes I} \otimes \sigma_d^{I^c} \right\|_1 \\
&= \frac12\left(1-P_{[n]}\right) + \frac12 \sum_{I\subseteq [n],\, I\neq [n]} P_I \\
&= 1 - P_{[n]}\, ,
\ee
i.e.\ $P_{[n]} \geq 1-\delta$. This implies that
\bb
\rho_n \geq P_{[n]}\, \alpha_d^{\otimes n} \geq \left(1-\delta\right) \alpha_d^{\otimes n}\, .
\ee
Therefore, since $\rho_n \leq 2^{D_{\max}(\rho_n \|\SEP_n)}\, \theta_n$ for some separable $\theta_n$ by definition of $D_{\max}(\cdot\|\SEP)$, we also have that
\bb
\alpha_d^{\otimes n} \leq \frac{1}{1-\delta}\, \rho_n \leq \frac{1}{1-\delta}\, 2^{D_{\max}(\rho_n \|\SEP_n)}\, \theta_n = 2^{D_{\max}(\rho_n \|\SEP_n) \,+\, \log_2\frac{1}{1-\delta}}\, \theta_n\, ,
\ee
leading naturally to
\bb
D_{\max}\big(\alpha_d^{\otimes n}\, \big\|\, \SEP_n\big) \leq D_{\max}(\rho_n \| \SEP_n) + \log_2\frac{1}{1-\delta}\, .
\ee
Choosing $\rho_n$ as the minimiser in~\eqref{Stein_alpha_proof_Werner_symmetry} yields the first inequality in~\eqref{Stein_alpha_one_shot_D_max_version}, while the second one is trivial because the function $\delta\mapsto D^{\delta}_{\max}(\rho\|\SEP)$ is monotonically non-increasing. This proves~\eqref{Stein_alpha_one_shot_D_max_version}.

To deduce~\eqref{Stein_alpha}, start by observing that by dividing~\eqref{Stein_alpha_one_shot_D_max_version} by $n$ and taking the liminf as $n\to\infty$ one sees that $\liminf_{n\to\infty} \frac1n\, D_{\max}^{\delta,\,T}\big(\alpha_d^{\otimes n}\, \big\|\, \SEP_n\big)$ actually does not depend on $\delta\in [0,1)$. Hence,
\bb
D^\infty(\alpha_d\| \SEP) &\eqt{(i)} \lim_{\delta \to 0^+} \liminf_{n\to\infty} \frac1n\, D_{\max}^{\delta} \big(\alpha_d^{\otimes n} \,\big\|\, \SEP_n\big) \\
&\eqt{(ii)} D_{\max}^\infty(\alpha_d \| \SEP) \\
&\eqt{(iii)} \lim_{\e\to 0^+} \liminf_{n\to\infty} \frac1n\, D_{\max}^{1-\e} \big(\alpha_d^{\otimes n} \,\big\|\, \SEP_n\big) \\
&\eqt{(iv)} \stein_{\all}(\alpha_d\|\SEP)\, .
\ee
Here, (i)~holds due to the Brand\~{a}o--Plenio--Datta asymptotic equipartition property~\cite{BrandaoPlenio2, Datta-alias}, here reported as Theorem~\ref{AEP_entanglement_thm}, in~(ii) and~(iii) we employed the above observation to replace $\delta$ first with $0$ and then with $1-\e$, and finally (iv)~follows from Corollary~\ref{GQSL_equivalent_form_cor}.
\end{proof}

\subsection{Separable measurements} \label{subsec_Stein_alpha_SEP}

Let us now estimate the Stein exponent that characterises asymptotic hypothesis testing of the antisymmetric state against all separable states under separable measurements. This is given by formula~\eqref{restricted_Stein_SEP}, and can thus be computed alternatively as the regularised Piani relative entropy.

\begin{prop} \label{restricted_Stein_antisymmetric_prop}
For all $d\in \N_+$, the antisymmetric state $\alpha_d$ defined by~\eqref{antisymmetric_and_symmetric} satisfies that
\bb
\log_2 \left(1+\frac{1}{d}\right) \leq \stein_{\sep} \left(\alpha_d \| \SEP\right) = D^{\sep,\infty}(\alpha_d \| \SEP) \leq \kk(\alpha_d) = \log_2\left(1+\frac{2}{d}\right) .
\label{restricted_Stein_antisymmetric}
\ee
\end{prop}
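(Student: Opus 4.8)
The plan is to establish the three nontrivial relations in \eqref{restricted_Stein_antisymmetric}: the equality $\stein_{\sep}(\alpha_d\|\SEP) = D^{\sep,\infty}(\alpha_d\|\SEP)$ is already free from Theorem~\ref{BHLP_thm} (equation~\eqref{restricted_Stein_SEP}), so the real work is (i) the lower bound $\log_2(1+1/d) \le D^{\sep,\infty}(\alpha_d\|\SEP)$, and (ii) the upper bound $\kk(\alpha_d) = \log_2(1+2/d)$, which together with Proposition~\ref{upper_bound_DNE_distillable_prop} gives $D^{\sep,\infty}(\alpha_d\|\SEP) \le \kk(\alpha_d)$. For the lower bound, since $D^{\sep,\infty} \ge D^{\sep}$ by super-additivity (Piani's \eqref{strong_superadditivity}, which makes the sequence super-additive so the regularised value dominates the single-letter one), it suffices to exhibit one good separable measurement certifying $D^{\sep}(\alpha_d\|\SEP) \ge \log_2(1+1/d)$. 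The natural candidate is the Werner-twirl-adapted two-outcome measurement built from the antisymmetric and symmetric projectors: take $E = \frac{\id-F}{2}$ (or a normalised variant), whose two branches $\frac{\id-F}{2}$ and $\frac{\id+F}{2}$ are both in $\cone(\SEP)$ by Lemma~\ref{robustness_isotropic_lemma} (after rescaling by the appropriate factor, since $\frac{\id-F}{2} = \binom d2 \alpha_d$ and $\frac{\id+F}{2} = \binom{d+1}{2}\sigma_d$ are multiples of separable states). One then computes $\Tr[\alpha_d E]$ and $\sup_{\sigma\in\SEP}\Tr[\sigma E]$: the first is $1$, and for the second one uses that any separable $\sigma$ has $\Tr[\sigma F] \ge$ some bound — in fact $\Tr[\sigma F] \ge -1/\dots$; more cleanly, $\Tr[\sigma \tfrac{\id-F}{2}] \le \tfrac12(1 - \min_{\sigma\in\SEP}\Tr\sigma F)$, and for product states $\Tr[\alpha\otimes\beta\, F] = \Tr[\alpha\beta] \ge 0$, hence $\sup_{\SEP}\Tr[\sigma \tfrac{\id-F}{2}] \le \tfrac12$. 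Plugging into the definition of $D^{\sep}$ and simplifying should yield exactly $\log_2(1+1/d)$ (possibly after choosing the measurement slightly more carefully, e.g.\ a three-outcome refinement, to make the binary KL term come out right rather than just $\ge \log 2$).

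For the upper bound, the plan is a direct evaluation of the SDP \eqref{modified_kappa_entanglement} at $\rho = \alpha_d$. We have $\alpha_d^\Gamma = \frac{\id - F}{d(d-1)}\big|^\Gamma = \frac{1}{d(d-1)}(\id - d\Phi_d)$, using $F^\Gamma = d\Phi_d$. We seek the cheapest $S \in \cone(\SEP)$ with $-S \le \alpha_d^\Gamma \le S$. The ansatz is to look within the isotropic family: try $S = a\Phi_d + b(\id - \Phi_d)$ with $a,b \ge 0$. Membership $S\in\cone(\SEP)$ requires $0 \le a \le (d+1)b$ by Lemma~\ref{robustness_isotropic_lemma} (its final sentence/characterisation of $\cone(\SEP)$ among isotropic operators). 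The operator inequalities $\pm \alpha_d^\Gamma \le S$ decompose on the two eigenspaces of $\Phi_d$: on $\mathrm{ran}\,\Phi_d$ one needs $|{-\frac{d-1}{d(d-1)}}| = \frac{1}{d}\cdot\frac{1}{d-1}\cdot(d-1)$... concretely $\alpha_d^\Gamma$ has eigenvalue $\frac{1 - d}{d(d-1)} = -\frac1d$ on $\mathrm{ran}\,\Phi_d$ and $\frac{1}{d(d-1)}$ on its complement, so we need $a \ge \frac1d$ and $b \ge \frac{1}{d(d-1)}$. Minimising $\Tr S = a + (d^2-1)b$ subject to $a \ge 1/d$, $b \ge \frac{1}{d(d-1)}$, $a \le (d+1)b$: the last constraint with $a = 1/d$ forces $b \ge \frac{1}{d(d+1)}$, which is weaker than $\frac{1}{d(d-1)}$, so the binding constraints are $a = 1/d$, $b = \frac{1}{d(d-1)}$, giving $\Tr S = \frac1d + \frac{d^2-1}{d(d-1)} = \frac1d + \frac{d+1}{d} = 1 + \frac2d$, hence $\kk(\alpha_d) \le \log_2(1+2/d)$. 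For the matching lower bound on the SDP one dualises: exhibit a feasible dual certificate, or argue directly that any feasible $S$ has $\Tr S \ge 1 + 2/d$ by testing against well-chosen operators (pairing the primal constraints with $\Phi_d$ and $\id - \Phi_d$, using that for $S\in\cone(\SEP)$, $\Tr[S\Phi_d]$ and $\Tr[S(\id-\Phi_d)]$ satisfy the cone inequality). Since the SDP is invariant under the $U\otimes U$ twirl $\TT_W$ (which preserves both $\cone(\SEP)$ and the constraint $\pm\alpha_d^\Gamma\le S$, as $\alpha_d^\Gamma$ is $\TT$-invariant... actually one should twirl with the channel that fixes $\alpha_d^\Gamma$), the optimal $S$ can be assumed isotropic, making the computed value exact.

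The main obstacle I anticipate is the lower bound $\log_2(1+1/d) \le D^{\sep}(\alpha_d\|\SEP)$: getting the exact constant (rather than a weaker bound like $\log 2$ or something $O(1/d)$ with the wrong coefficient) requires the right choice of separable measurement and a careful evaluation of both $\sum_x \Tr[\alpha_d E_x]\log_2\frac{\Tr[\alpha_d E_x]}{\Tr[\sigma E_x]}$ maximised over separable POVMs \emph{and} minimised over $\sigma\in\SEP$ — and one must verify the measurement operators genuinely lie in $\cone(\SEP)$, not merely $\cone(\PPT)$, which for the antisymmetric/symmetric projectors hinges on Lemma~\ref{robustness_isotropic_lemma}'s identification of $\tau_d$ and $\sigma_d$ as twirls of explicitly separable states. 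A secondary subtlety is confirming the symmetric-subspace projector $\frac{\id+F}{2}$ is proportional to a separable state (it is: it equals $\binom{d+1}{2}$ times the normalised symmetric Werner state, which is separable), so that the binary measurement $(\tfrac{\id-F}{2}, \tfrac{\id+F}{2})$ is indeed in $\sep$ after the trivial rescaling that makes it a POVM. I would also double-check whether the stated lower bound in~\eqref{restricted_Stein_antisymmetric} might actually require the \emph{regularised} quantity (i.e.\ whether $D^{\sep}(\alpha_d\|\SEP)$ alone suffices or one must use strong super-additivity on $\alpha_d^{\otimes k}$ and take $k\to\infty$); given the clean form $\log_2(1+1/d)$ I expect the single-copy computation suffices, but if not, the super-additivity route via \eqref{strong_superadditivity} is the fallback.
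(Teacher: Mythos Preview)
Your upper-bound argument is essentially the paper's, just written in the isotropic parametrisation (your $S$ coincides with the partial transpose of the paper's ansatz $\frac1d\alpha_d+(1+\frac1d)\sigma_d$), and your extra symmetrisation remark --- twirl by $\TT$, the $U\otimes U^*$ twirl fixing $\alpha_d^\Gamma$, not $\TT_W$ --- even goes slightly beyond the paper, which only records the inequality $\kk(\alpha_d)\le\log_2(1+2/d)$.

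The lower bound, however, has a genuine gap. Your candidate measurement $\big(\tfrac{\id-F}{2},\,\tfrac{\id+F}{2}\big)$ is \emph{not} separable: $\tfrac{\id-F}{2}=\binom{d}{2}\alpha_d$, and $\alpha_d$ is the antisymmetric state, which is entangled for every $d\ge 2$ (indeed, its partial transpose has the negative eigenvalue $-1/d$). Lemma~\ref{robustness_isotropic_lemma} concerns \emph{isotropic} states and says nothing about the separability of Werner operators; only $\sigma_d$ and $\tfrac{\alpha_d+\sigma_d}{2}$ are separable on the Werner line, not $\alpha_d$. A quick sanity check: were your measurement admissible, it would give $D^{\sep}(\alpha_d\|\SEP)\ge -\log_2\tfrac12=1$ uniformly in $d$, contradicting the upper bound $\log_2(1+2/d)\to 0$.

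The paper instead uses the manifestly separable binary measurement $(P,\id-P)$ with $P=\sum_i\ketbra{ii}$, for which $\Tr[\alpha_d(\id-P)]=1$; after reducing the inner minimisation to separable Werner states via $\TT_W$, one finds $\max_{\sigma\in\TT_W(\SEP)}\Tr[\sigma(\id-P)]=\tfrac{d}{d+1}$, yielding exactly $\log_2(1+1/d)$. This is the missing idea in your proposal.
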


\begin{proof}
For the lower bound we write
\bb
\stein_{\sep} \left(\alpha_d \| \SEP\right) &= D^{\sep,\infty}(\alpha_d \| \SEP) \\
&\geq D^{\sep}(\alpha_d \| \SEP) \\
&= \inf_{\theta \in \SEP} D^{\sep}(\alpha_d \| \theta) \\
&\eqt{(i)} \inf_{\theta \in \SEP} D^{\sep}\big(\alpha_d \,\big\|\, \TT_W(\theta)\big) \\
&\geqt{(ii)} \inf_{\theta \in \SEP} - \log_2 \Tr\left[(\id-P) \TT_W(\theta)\right] \\
&\eqt{(iii)} -\log_2 \max\bigg\{ \Tr\left[(\id-P)\,\sigma_d \right] ,\, \Tr\left[(\id-P)\, \frac{\alpha_d + \sigma_d}{2} \right]\bigg\} \\
&\eqt{(iv)} \log_2\left(1+\frac1d\right) .
\label{restricted_Stein_antisymmetric_proof}
\ee
Here, in~(i) we observed that due to the fact that $\TT_W$ is an LOCC, and moreover $\TT_W(\alpha_d) = \alpha_d$, we can restrict the infimum to separable states $\theta$ that are also invariant under $\TT_W$, and hence to separable Werner states. The argument to prove this latter claim is somewhat analogous to that in~\cite[Section~V.A]{Werner-symmetry}: on the one hand, $\inf_{\theta \in \SEP} D^{\sep}(\alpha_d \| \theta) \leq \inf_{\theta \in \SEP} D^{\sep}\big(\alpha_d \,\big\|\, \TT_W(\theta)\big)$ by simply restricting the infimum; on the other, since $\TT_W$ is an LOCC and hence dually non-entangling,
\bb
\inf_{\theta \in \SEP} D^{\sep}(\alpha_d \| \theta) &= \inf_{\theta \in \SEP} \sup_{\MM\in \sep} D\big(\MM(\alpha_d) \,\big\|\, \MM(\theta)\big) \\
&\geq \inf_{\theta \in \SEP} \sup_{\MM\in \sep} D\big((\MM\circ \TT_W)(\alpha_d) \,\big\|\, (\MM\circ \TT_W)(\theta)\big) \\
&= \inf_{\theta \in \SEP} \sup_{\MM\in \sep} D\big(\MM(\alpha_d) \,\big\|\, \MM\big( \TT_W(\theta)\big)\big) \\
&= \inf_{\theta \in \SEP} D^{\sep}\big(\alpha_d \,\big\|\, \TT_W(\theta)\big)\, ,
\ee
where the inequality comes from the restriction of the supremum.

Continuing with the justification of~\eqref{restricted_Stein_antisymmetric_proof}, (ii)~is obtained by choosing as an ansatz the separable measurement $\left(P,\,\id-P\right)\in \sep$, where $P\coloneqq \sum_i \ketbra{ii}$, and noting that only the second operator yields a non-zero expectation value (equal to $1$) on the antisymmetric state. Incidentally, the above ansatz is taken from~\cite{VV-dh-Chernoff} (see also~\cite{Cheng2020}). The identity in~(iii) is derived remembering that by the results in~\cite{Werner, Werner-symmetry} we have that
\bb
\TT_W(\SEP) = \co\left\{ \sigma_d,\,\frac{\alpha_d+\sigma_d}{2}\right\} .
\ee
Finally, in~(iv) we computed
\bb
\Tr \left[ (\id-P)\,\sigma_d\right] &= \frac{d-1}{d+1}\, ,\\
\Tr \left[ (\id-P)\, \frac{\alpha_d+\sigma_d}{2}\right] &= \frac12 \left(1+\frac{d-1}{d+1}\right) = \frac{d}{d+1} > \frac{d-1}{d+1}\, .
\ee
This establishes the lower bound in~\eqref{restricted_Stein_antisymmetric}.

As for the upper bound, using Proposition~\ref{upper_bound_DNE_distillable_prop} we find that
\bb
\stein_{\sep} \left(\alpha_d \| \SEP\right) &= D^{\sep,\infty}(\alpha_d \| \SEP) \leq \kk(\alpha_d) \leq \log_2 \left(1+\frac{2}{d}\right) ,
\ee
where the inequality comes from the ansatz $S=\frac1d\,\alpha_d + \left(1+\frac1d\right)\sigma_d$ plugged into the SDP on the first line of~\eqref{modified_kappa_entanglement} for $\rho = \alpha_d$. Such an ansatz is valid because
\bb
\frac1d\,\alpha_d + \left(1+\frac1d\right)\sigma_d = \frac1d (\alpha_d+\sigma_d) + \sigma_d\in \cone(\SEP)\, ,
\ee 
where we used again the fact that $\sigma_d$ and $(\alpha_d+\sigma_d)/2$ are separable states~\cite{Werner, Werner-symmetry}, and moreover
\bb
\left(\frac1d\,\alpha_d + \left(1+\frac1d\right)\sigma_d\right)^\Gamma + \alpha_d^\Gamma = \frac{2}{d(d-1)} \,(\id-\Phi) &\geq 0\, , \\
\left(\frac1d\,\alpha_d + \left(1+\frac1d\right)\sigma_d\right)^\Gamma - \alpha_d^\Gamma = \frac{2}{d}\, \Phi &\geq 0\, .
\ee
This completes the proof.
\end{proof}

\begin{manualthm}{\ref{antisymmetric_gap_thm}}
For all $d\geq 13$ we have that
\bb
\stein_\sep(\alpha_d \| \SEP) = D^{\sep,\infty}(\alpha_d\|\SEP) < \frac12 \log_2 \frac43 \leq D^\infty(\alpha_d\|\SEP) = \stein_\all(\alpha_d\|\SEP)  \, ,
\ee
and in fact $\stein_\sep(\alpha_d \| \SEP)\tends{}{d\to\infty} 0$.
\end{manualthm}

\begin{proof}
Follows by combining Theorem~\ref{Stein_alpha_thm}, Proposition~\ref{restricted_Stein_antisymmetric_prop}, and the classic result by Christandl, Schuch, and Winter~\cite[Corollary~3]{Christandl2012}, which tells us that $D^\infty(\alpha_d\|\SEP)\geq \frac12 \log_2 (4/3)$ for all $d\geq 2$.
\end{proof}

\end{document}